\newcommand\scalemath[2]{\scalebox{#1}{\mbox{\ensuremath{\displaystyle #2}}}}
\newcommand{\R}{{\mathbb{R}}}
\newcommand{\E}{{\mathbb{E}}}
\newcommand{\V}{{\text{Var}}}
\newcommand{\bvar}[1]{\mathbf{#1}} 
\newcommand\numberthis{\addtocounter{equation}{1}\tag{\theequation}}
\theoremstyle{plain}
\newtheorem{definition}{Definition}
\newtheorem{theorem}{Theorem}
\newtheorem{proposition}{Proposition}
\newtheorem{Lemma}{Lemma}
\newtheorem{Cor}{Corollary}
\theoremstyle{remark}
\newtheorem{ex}{Example}
\newtheorem{case}{Case}
\newtheorem{assumption}{Assumption}
\renewcommand{\thmcontinues}[1]{continued}
\title{Unbiased estimation for additive exposure models}
\author{Kelly Kung and Daniel L. Sussman}
\begin{document}
\maketitle
\begin{abstract}
    Causal inference methods have been applied in various fields where researchers want to estimate treatment effects.
    In traditional causal inference settings, one assumes that the outcome of a unit does not depend on treatments of other units.
    However, as causal inference methods are extended to more applications, there is a greater need for estimators of general causal effects.
    We use an exposure mapping \citep{aronow2017estimating} framework to map the relationship between the treatment allocation and the potential outcomes.
    Under the exposure model, we propose linear unbiased estimators (LUEs) for general causal effects under the assumption that treatment effects are additive.
    Additivity provides statistical advantages, where contrasts in exposures are now equivalent, and so the set of estimators considered grows. 
    We identify a subset of LUEs that forms an affine basis for LUEs, and we characterize optimal LUEs with minimum integrated variance through defining conditions on the support of the estimator. 
    We show, through simulations that our proposed estimators are fairly robust to violations of the additivity assumption, and in general, there is benefit in leveraging information from all exposures.
\end{abstract}

\section{Introduction}
The goal of many researchers, regardless of field, is often to understand the effect of a particular treatment or intervention; hence, the rise of applications of causal inference methods. 
Traditionally, one estimates the direct effect of a single treatment under the Stable Unit Value Treatment Assumption (SUTVA) \citep{rubin1974estimating} of which the assumption of no interference is crucial. 
However, as we extend causal inference methods to different fields, SUTVA may no longer hold, and so the need for estimation of general treatment effects grows.
For example, there can be multiple treatments or the treatment can affect the outcome in different ways.
Therefore, there is a need for estimators that can be used to estimate causal effects in general settings.

As we stray away from the classical settings of causal inference where SUTVA holds, the estimation of causal effects becomes more difficult.
We have to consider not only how the treatment directly affects the outcome but also how the treatment potentially indirectly affects the outcome.
Since there can be nuances in how a treatment allocation affects the potential outcomes, we use an exposure mapping \citep{aronow2017estimating} to map the relationship between the treatment allocation and the potential outcomes.
Given an exposure mapping, we assume that the potential outcomes depend on the treatment allocation only through the exposures.
We then estimate general causal effects under exposure models.

In general, one prefers to make fewer assumptions so that results are generalizable. 
However, we assume that treatment effects are additive, which provides statistical advantages.
Under additivity, contrasts of potential outcomes under different exposures are equivalent, and so there are fewer contrasts to estimate.
Furthermore, when additivity holds the set of unbiased estimators grows as exposures that may not be immediately related to the estimand can be employed.

In this paper, we propose linear unbiased estimators for causal effects under the additive exposure assumption in an experimental setting.
We characterize the set of linear unbiased estimators and define an affine basis for the set of linear unbiased estimators.
We further characterize a set of optimal estimators with minimum integrated variance.
Lastly, we compare the proposed optimal linear unbiased estimators with other linear unbiased estimators through a series of simulations under various settings. 

We first introduce the background and notation in Section~\ref{sec:background}.
In Sections~\ref{sec:exposure_models} and \ref{sec:lue}, we define exposure models and define linear constraints for unbiased estimators under additivity.
We introduce a class of atomic linear unbiased estimators in Section~\ref{sec:alue} and show that with another class of estimators, they form an affine basis for the set of linear unbiased estimators.
In Section~\ref{sec:mivlue}, we characterize a set of optimal estimators, in which linear unbiased estimators have minimum integrated variance.
Lastly, we evaluate the proposed estimators in different simulation settings in Section~\ref{sec:simulations}.

\section{Background}\label{sec:background}
Early work in causal inference has been done by estimating the treatment effect in randomized experiments under the assumption that a unit's outcome is only affected by the treatment received by that unit \citep{neyman1923application}.
\citet{rubin1974estimating} further formalized these ideas with the Stable Unit Treatment Value Assumption (SUTVA), in which (1) \textit{no interference}: units' outcomes did not depend on other units' treatments and (2) \textit{consistency}: there were no multiple versions of a treatment \citep{rubin1980randomization}.

There has been a growing body of work in relaxing the no interference assumption of SUTVA in which we assume that units' outcomes do not depend on other units' treatments. 
This is likely because there are many settings in which \textit{interference} or \textit{spillover} effects \citep{cox1958planning, rubin1980randomization} may be present.
Early work in interference began with the assumption that treatment effects may spill over through time, focusing on \textit{residual} effects that may be present from the preceding time point \citep{grizzle1965two, kershner1981two}.
Later, spatial interference attracted attention, where neighboring units or units within the same block may be dependent \citep{besag1986statistical, david1996designs}.
Since then, interference has been extended to settings of \textit{partial interference}, where units within a cluster may be dependent but units between clusters are assumed to be independent \citep{sobel2006randomized, rosenbaum2007interference, hudgens2008toward, tchetgen2012causal}.

More recently, there has been a growing interest in estimating causal effects in the presence of interference in networks \citep{ugander2013graph, eckles2017design, athey2018exact, aronow2017estimating, sussman2017elements, forastiere2021identification}.
This is because networks can be used to represent relationships between units and interference effects may be passed through the connections of the network.
Furthermore, the rise of social media has enabled researchers to better observe these connections.

Various assumptions and models have been proposed for network interference.
A common variant is to assume that a unit's outcome can be affected by the treatments of units up to $k$ connections away for some $k$ \citep{athey2018exact}.
As a running example, we will focus on the assumption that interference only occurs for neighboring units.

We also focus on experimental settings, where the treatment assignment probabilities are known.
Our work extends \citet{aronow2017estimating}, who proposed unbiased estimators for causal effects under general interference, and \citet{sussman2017elements}, who proposed unbiased estimators for the direct treatment effect with minimum integrated variance under network interference.
Under an exposure model \citep{aronow2017estimating} in which a treatment allocation is assigned to exposures through an \textit{exposure mapping}, \citet{aronow2017estimating} proposed two-term unbiased estimators for estimands of interest using Horvitz-Thompson estimators \citep{horvitz1952generalization}.
We propose linear unbiased estimators for unit-level causal effects, but we deviate from \citet{aronow2017estimating} in that we assume that treatment effects are additive.
The additivity constraint enables flexibility in estimation through the fact that different estimands under different exposures are equivalent.
Assuming additivity, our proposed linear unbiased estimators may place non-zero weights
on exposures that are ``seemingly unrelated'' to the estimand of interest.
Furthermore, we deviate from \citet{sussman2017elements} in that we estimate general treatment effects, which include both direct and indirect treatment effects.
However, like \citet{sussman2017elements}, we further characterize an optimal subset of linear unbiased estimators that have minimum integrated variance.

\subsection{Potential Outcomes Framework}\label{po sec}
Consider a randomized experiment with $n$ units that are together assigned a treatment allocation $\mathbf{z} \in \{0, \dotsc, m\}^n$ where $z_i \in \{0, \dotsc, m\}$ represents the treatment that unit $i$ receives.
The experimental design of a randomized trial is given by the probability of a treatment allocation, denoted by $p: \{0, \dotsc, m\}^n \to [0,1]$. 
Since we focus on a randomized experiment setting, we assume that the design is known.
The treatment allocation $\mathbf{z}$ provides information, such as unit treatment assignments, number of treated units, etc., which can be used to determine a unit's outcome. 

We use the Rubin causal model \citep{rubin1974estimating} or the \textit{potential outcomes framework} to estimate treatment effects.
We denote the potential outcome of patient $i$ under treatment allocation $\mathbf{z}$ as $Y_i(\textbf{z}) \in \R$.
Note, however, we only observe the treatment allocation $\mathbf{z}^{obs}$, and so we only observe one potential outcome for unit $i$, namely $Y_i(\mathbf{z}^{obs})$.
We denote the observed outcome of unit $i$ as $Y_i^{obs} = Y_i(\mathbf{z}^{obs})$.
This is the Fundamental Problem of Causal Inference \citep{holland1986statistics}.
Since only one potential outcome is observed for a unit, estimating treatment effects becomes a missing data problem, where we impute missing potential outcomes to estimate treatment effects.

\section{Exposure Models} \label{sec:exposure_models}
The treatment allocation $\mathbf{z}$ provides information, such as unit treatment assignments, number of treated units, etc., which can be used to determine a unit's outcome. 
While in general $Y_i(\mathbf{z})$ depends on all of $\mathbf{z}$, we often assume that the outcome only depends on specific aspects of the treatment allocation. 
For example, under the stable unit treatment value assumption (SUTVA), the outcome of a unit only depends on its treatment \citep{rubin1974estimating}. 
That is, $Y_i(\mathbf{z}) = Y_i(\mathbf{z}')$ whenever $z_i = z'_i$.
To capture the dependencies of potential outcomes on treatment allocations, \citet{aronow2017estimating} proposed exposure models as an alternative representation of the potential outcomes that can account for these pathologies while still limiting the complexity of the model.

Exposure models are given by exposure mappings, which are used to capture all the information needed from a treatment allocation to determine a unit's potential outcome:
\begin{definition}[Exposure mapping]
Let $\mathcal{E}$ denote the set of exposures. For each unit $i$, an exposure mapping $f(i,\cdot): \{0, \dotsc, m\}^n \to \mathcal{E}$ maps each treatment allocation to an exposure in the set $\mathcal{E}$.
\end{definition}
\noindent  Exposure mappings are flexible and can be defined in various ways. 
However, we assume in this paper that the exposure mapping is known.
The goal of an exposure mapping is to capture all the information needed to determine a unit's outcome while reducing the number of possible potential outcomes for unit $i$ from $(m+1)^n$ to $|\mathcal{E}|$, the cardinality of $\mathcal{E}$.
This motivates the following assumption.

\begin{assumption}[\citet{aronow2017estimating}]\label{assump:exp_model}
We assume that for any pair $\mathbf{z},\mathbf{z}'\in \{0, \dotsc, m\}^n$, $f(i,\mathbf{z}) = f(i,\mathbf{z}') = \vec{e}$ implies that $Y_i(\mathbf{z}) = Y_i(\mathbf{z}')$.
That is, we can write
\begin{equation}\label{contrast}
    Y_i(\mathbf{z}) = Y_i(\Vec{e}).
\end{equation}
\end{assumption}
\noindent Assumption \ref{assump:exp_model} states that the potential outcome of a unit is determined only by its exposure, and so we assume that potential outcomes are dependent on treatment allocations through the exposures.
Note that Assumption \ref{assump:exp_model} holds regardless of whether SUTVA holds.

\begin{ex}[SUTVA]
\label{sutva_ex}
Under SUTVA, a unit's outcome only depends on its own treatment assignment.
Here, $\mathcal{E}= \{0, \dotsc, m\}$, and the exposure mapping is given by $f(i,\mathbf{z}) = z_i \in \{0, \dotsc, m\}$.
Potential outcomes are then given by $Y_i(\mathbf{z}) = Y_i(z_i)$. 
\end{ex}

\begin{ex}[Network Interference]\label{network_interference_ex}
Consider a network amongst the $n$ units, given by the $n \times n$ adjacency matrix $A$. 
Suppose that a unit's outcome can depend on its own treatment assignment, which is binary, and the treatment assignments of other units in the network. 
In particular, suppose that the potential outcome of a unit only depends on the number of neighbors that are treated and not necessarily which ones \citep{sussman2017elements}.
Note that SUTVA no longer holds since network interference is present.
Here, $\mathcal{E} = \{0, \dotsc, n-1\} \times \{0,1\}$, and the exposure mapping is defined as $f(i, \mathbf{z}) = (d_i^{\mathbf{z}}, z_i)$, where $d_i^{\mathbf{z}} = (A^T \mathbf{z})_i$ is the number of treated neighbors or the \textit{treated degree}.
Note that for a unit $i$, the treated degree $d_i^{\mathbf{z}} \in \{0, \dotsc, d_i\}$ where $d_i$ is the degree of unit $i$.
\end{ex}

Using potential outcomes given by the exposure mappings, we define causal effects under the exposure model framework.
In general, a \emph{causal effect} is given by the difference in the potential outcome under one exposure and the potential outcome under another exposure. 
We focus on the unit-level causal effect of exposure $\Vec{e} \in \mathcal{E}$ compared to $\Vec{e}{\,'} \in \mathcal{E}$:
\begin{equation}
    \tau_i(\Vec{e}, \Vec{e}{\,'}) = Y_i(\vec{e})-Y_i(\Vec{e}{\,'}).
\end{equation} 
Since we focus on unit-level effects, we simplify the notation by dropping the subscript $i$ throughout the rest of the paper. 
Following prior work \citep{aronow2017estimating}, we use unit-level causal effects to estimate the average causal effects by averaging unit-level estimates.
Exposures are flexibly defined, but they are often represented with multiple exposure components.
For example, exposures in Example \ref{network_interference_ex} are given by two exposure components: $\mathcal{E} = \{0, \dotsc, n-1\} \times \{0,1\}$.
We use an \emph{exposure vector} to denote exposures with multiple components:

\begin{definition}[Exposure vector]
As the exposure set is finite, without loss of  generality, we assume the exposure set has the form $\mathcal{E} = \{0, \dotsc, m_1\} \times \dotsb \times \{0, \dotsc, m_{K}\}$, where $K \geq 1$ is the number of \emph{exposure components}. 
Exposures, denoted by $\Vec{e} \in \mathcal{E}$, are hence given by exposure vectors: $\Vec{e} = (e_{1}, \dotsc, e_{K}) \in \mathcal{E}$.
\end{definition}
\noindent Since exposure vectors are multi-dimensional vectors in the real space, vector operations can be applied to exposures.
For example, we can take the difference between exposures, which is given by the difference in the exposure components.
We define the vector of all zeros, denoted as $\Vec{e} = \Vec{0}$, as the {\em baseline} exposure.
We interpret the exposure components as different information given by the exposure mapping.
For example, in Example \ref{network_interference_ex}, the first exposure component corresponds to the number of treated neighbors for the unit, and the second exposure component corresponds to the treatment assigned to the unit.

The set of estimands for exposure causal effects is given by the contrasts in exposures.
In general, potential outcomes under an exposure can be decomposed into the baseline, the corresponding direct effects for each exposure component, and interactions between the effects from multiple exposure components.
As the number of exposures, and especially the number of exposure components, increase, the number of interaction effects become large.
Instead, we assume that additivity holds:

\begin{assumption}(Additivity)\label{assump:additivity}
Consider exposure vectors $\Vec{e}, \Vec{e}{\,'} \in \mathcal{E}$. 
Exposures are additive if, whenever $\Vec{e} - \Vec{e}{\,'} > \Vec{0}$,
\begin{equation}
    Y(\Vec{e}) - Y(\Vec{e}{\,'}) = Y(\Vec{e} - \Vec{e}{\,'}) - Y(\Vec{0}).
\end{equation} 
\end{assumption}
\noindent \noindent Under additivity, there are no interaction effects. 
That is, the difference in potential outcomes given two different exposures only depends on the difference in exposure components.
We can then isolate the effect of the $k$th exposure component by removing the effects of all other components.
To do this, we can add and subtract potential outcomes under different exposures so that the net value of all other exposure components besides the $k$th exposure component is zero.
Additivity provides statistical advantages since certain contrasts are now equivalent, such as 
\begin{align*}
    Y(m_1, e_2, \dotsc, e_K) - Y(0, e_2, \dotsc, e_K) = Y(m_1, e'_2, \dotsc, e'_K) - Y(0, e'_2, \dotsc, e'_K)
\end{align*}
for $e_k \neq e'_k$ for $k \in \{2, \dotsc, K\}$.
Since contrasts in potential outcomes under different exposures are equivalent under additivity, the number of contrasts we consider is then reduced to $\sum_{k=1}^K m_k$. 

Under additivity, there are no interaction effects, and so we can write the potential outcome under exposure $\Vec{e} = (e_1, \dotsc, e_K)$ as:
\begin{align}\label{po_written_by_exposure_components}
    Y(e_1, e_2 ,\dotsc, e_K) &= Y(0, \dotsc, 0)\\
    &+\left[Y(e_1, 0, \dotsc, 0) - Y(0, \dotsc, 0)\right] \nonumber \\
    &+ \left[Y(0, e_2, 0, \dotsc, 0) - Y(0,  \dotsc, 0)\right] \nonumber \\
    &\dotsc \nonumber\\
    &+ \left[Y(0, \dotsc, 0, e_K) - Y(0,  \dotsc, 0) \right],
\end{align}
\noindent  where the first summand indicates the baseline and the other summands indicate the various causal effects for the $k$th exposure component at level $e_k \in \{1, \dotsc, m_k\}$.
We denote the unit-level causal effect for the $k$th exposure at level $j_k \in \{1, \dotsc, m_k\}$ as:
\begin{equation}\label{eq:unit_level_eff}
\theta_{k,j_k} = Y(0, \dotsc, 0, j_k, 0, \dotsc, 0) - Y(0, \dotsc, 0).
\end{equation}
Let the parameter set, denoted by $\Theta$, contain the baseline parameter, denoted as $\alpha = Y(0, \dotsc, 0)$, and parameters $\theta_{k,j_k}$ for all $k \in \{1, \dotsc, K\},j_k \in \{1, \dotsc, m_k\}$.
Under additivity, potential outcomes are given as: 
\begin{equation}\label{eq:po_sum_of_param}
    Y(\Vec{e}) = \alpha + \sum_{k=1}^K \sum_{j_k = 1}^{m_k}\theta_{k,j_k} \mathbb{I}\{e_k = j_k\}.
\end{equation}

\begin{ex}[continues=sutva_ex]
Under SUTVA with $m$ levels or variants of treatment, we define $\Vec{e} = z_i \in \{0, \dotsc, m\}$.
The unit-level causal effect for the first (and only) exposure component when the unit has treatment $m$ versus when the unit is not treated is given by $\theta_{1,m} = Y(m) - Y(0)$. 
\end{ex}

\begin{ex}[continues=network_interference_ex]
Under network interference with binary treatment, we define $\Vec{e} = (d_i^{\mathbf{z}}, z_i)$.
The causal effect of the first exposure component when all of unit $i$'s neighbors are treated versus when none of unit $i$'s neighbors are treated is given by $\theta_{1,d_i} = Y(d_i, 0) - Y(0,0)$.
Here, $\theta_{1,d_i}$ corresponds to the \emph{unit-level interference effect}.
Note that we defined $\theta_{1,d_i}$ using an estimand with exposures where $z_i = 0$. 
However, under additivity, contrasts in potential outcomes under different exposures are equivalent, and so $Y(d_i, 1) - Y(0,1)$ is also an estimand for the unit-level interference effect.
\end{ex}

\section{Linear Unbiased Estimators} \label{sec:lue}
In this section, we introduce estimators for the unit-level causal effect. 
Without the loss of generality, for the rest of this paper, we focus on estimating the effect for a single unit when the first exposure component is $m_1$, compared to baseline.
No generality is lost since we can remap the exposures to a new exposure set where the $k$th component is mapped to the first component and the $e_k$th level is mapped to the maximum $m_k$.

Linear estimators of the unit-level causal effect of the first exposure component are of the form
\begin{align*}
    \hat{\theta}_{1,m_1} =  w(\mathbf{z}^{obs})Y(\Vec{e}^{\,obs}), 
\end{align*}
where $w: \{0, \dotsc, m\}^n \to \R$ is a weight function depending on the treatment allocation $\mathbf{z}^{obs}$ and $Y(\Vec{e}^{\,obs})$ is the outcome under observed exposure $\Vec{e}^{\,obs}$.
We further consider linear estimators with weights that depend only on the unit's exposure, i.e. $w: \mathcal{E} \to \R$.
We denote the support of $w$, or equivalently of the estimator $\hat{\theta}_{1,m_1}$, as $\mathrm{supp}(\hat{\theta}_{1,m_1}) = \{\Vec{e} \in \mathcal{E}: w(\Vec{e}) \neq 0\}$.
Hence, the linear estimators we consider are of the form
\begin{align}\label{eq:lue}
    \hat{\theta}_{1,m_1} =  w(\Vec{e}^{\,obs})Y(\Vec{e}^{\,obs}).
\end{align}
Linear estimators include Horvitz-Thompson inverse propensity score weighting estimators \citep{horvitz1952generalization}. 
The Horvitz-Thompson (HT) estimator for a potential outcome is given by
\begin{align}
    HT_{\Vec{e}} = \frac{Y(\Vec{e}^{\,obs})}{p(\Vec{e})}\mathbb{I}\{\Vec{e}^{\,obs} = \Vec{e}\},
\end{align}   
where $p(\Vec{e}) = \mathbb{P}(\Vec{e}^{\,obs} = \Vec{e})$ is the probability of observing the exposure $\Vec{e}$, which is given by the design probabilities. 
Since the experimental design is known, the probabilities of exposures are also known.
Furthermore, $\mathbb{I}\{\Vec{e}^{\,obs} = \Vec{e}\}$ indicates whether the exposure $\Vec{e}$ is observed.
Given exposures $\Vec{0}, \Vec{e}\in \mathcal{E}$, where $\Vec{e} = (m_1, 0, \dotsc, 0)$, \citet{aronow2017estimating} proposed estimators for the causal effect using Horvitz-Thompson inverse propensity score weighting estimators:
\begin{equation}
    \hat{\theta}_{1,m_1} = HT_{\Vec{e}} - HT_{\Vec{0}} = Y(\Vec{e}^{\,obs})\left[\frac{\mathbb{I}\{\Vec{e}^{\,obs} = \Vec{e}\}}{p(\Vec{e})} - \frac{\mathbb{I}\{\Vec{e}^{\,obs} = \Vec{0}\}}{p(\Vec{0})}\right]. 
\end{equation}  

\noindent On the other hand, the naive difference in means estimator has weights:
\begin{equation}\label{est:naive}
    w(\Vec{e}) = \frac{\mathbb{I}\{\Vec{e}^{\,obs} = (m_1, 0, \dotsc, 0)\}}{\sum_{j=1}^n\mathbb{I}\{\Vec{e}_j^{\,obs} = (m_1, 0, \dotsc, 0)\}} - \frac{\mathbb{I}\{\Vec{e}^{\,obs} = \Vec{0}\}}{\sum_{j=1}^n\mathbb{I}\{\Vec{e}_j^{\,obs} = \Vec{0}\}},
\end{equation}
for unit $i$, where $\Vec{e}_i^{\,obs}, \Vec{e}_j^{\,obs}$ are the observed exposure vectors for units $i$ and $j$ for $i,j \in \{1, \dotsc, n\}$, respectively.
Equation~\eqref{est:naive} shows that the denominator of the weight for unit $i$ depends on the exposures of other units.
Hence, the linear estimators we consider preclude naive estimators, except under certain highly symmetric designs (e.g. a Completely Randomized Design).

\subsection{Unbiased Estimators}
As a first step to limit the set of linear estimators considered, we further focus on linear estimators that are unbiased for the unit-level causal effect. 
An estimator is unbiased for the unit-level causal effect of the first exposure component if 
\begin{align}
    \E(\hat{\theta}_{1,m_1}) = \theta_{1,m_1}.
\end{align}
Under additivity, linear unbiased estimators (LUEs) exist under certain constraints, which are given by the following proposition. 
\begin{proposition} \label{constraints}
Assuming additive exposures, a linear estimator $\hat{\theta}_{1,m_1}$ is unbiased for ${\theta}_{1,m_1}$ if and only if the following constraints hold:
\begin{align*}
  &  &\sum_{\Vec{e} \in \mathcal{E}} p(\Vec{e})w(\Vec{e}) = 0 \tag{$\alpha$ constraints} \\
  &	&\sum_{\Vec{e} \in \mathcal{E}} p(\Vec{e})w(\Vec{e})\mathbb{I}\{e_{1} = m_1\} = 1 \tag{$\theta_{1,m_1}$ constraints}\\
  &	\forall m: m \in \{1, \dotsc, m_1-1\} &\sum_{\Vec{e} \in \mathcal{E}} p(\Vec{e})w(\Vec{e})\mathbb{I}\{e_{1} = m\} = 0 \tag{$\theta_{1,m}$ constraints}\\
  &\forall k,j_k: k \in \{2, \dotsc, K\}, j_k \in \{1, \dotsc, m_k\}	&\sum_{\Vec{e} \in \mathcal{E}}  p(\Vec{e})w(\Vec{e})\mathbb{I}\{e_{k} = j_k\} = 0 &\tag{$\theta_{k,j_k}$ constraints}.
\end{align*}
\end{proposition}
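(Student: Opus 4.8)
The plan is to compute the expectation of the estimator directly and then match coefficients of the free parameters. Since $\hat{\theta}_{1,m_1} = w(\Vec{e}^{\,obs})Y(\Vec{e}^{\,obs})$ depends on the random treatment allocation only through the observed exposure, and $w$ is a function of the exposure alone, the first step is to write the expectation over the design as a sum over $\mathcal{E}$. On the event $\{\Vec{e}^{\,obs} = \Vec{e}\}$ Assumption~\ref{assump:exp_model} gives $Y(\Vec{e}^{\,obs}) = Y(\Vec{e})$, so
\[
\E(\hat{\theta}_{1,m_1}) = \sum_{\Vec{e} \in \mathcal{E}} p(\Vec{e})w(\Vec{e})Y(\Vec{e}),
\]
where $p(\Vec{e}) = \prob(\Vec{e}^{\,obs} = \Vec{e})$. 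I would then substitute the additive representation of the potential outcomes from \eqref{eq:po_sum_of_param}, namely $Y(\Vec{e}) = \alpha + \sum_{k=1}^K \sum_{j_k=1}^{m_k} \theta_{k,j_k}\mathbb{I}\{e_k = j_k\}$, and interchange the order of summation so that the expectation becomes an affine function of the parameter set $\Theta$:
\[
\E(\hat{\theta}_{1,m_1}) = \alpha \sum_{\Vec{e}\in\mathcal{E}} p(\Vec{e})w(\Vec{e}) + \sum_{k=1}^K \sum_{j_k=1}^{m_k} \theta_{k,j_k} \sum_{\Vec{e}\in\mathcal{E}} p(\Vec{e})w(\Vec{e})\mathbb{I}\{e_k = j_k\}.
\]

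Second, I would read off the four families of constraints by identifying the coefficient of each parameter and demanding that this affine expression equal $\theta_{1,m_1}$. The coefficient of $\alpha$ must vanish, giving the $\alpha$ constraint; the coefficient of $\theta_{1,m_1}$ must equal $1$, giving the $\theta_{1,m_1}$ constraint; and the coefficients of every remaining $\theta_{k,j_k}$, those with $k=1,\,m<m_1$ and those with $k\geq 2$, must vanish, giving the last two families. The reverse implication is then immediate: if all constraints hold, substituting them collapses the display to exactly $\theta_{1,m_1}$, so the estimator is unbiased.

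The forward implication is where the only real subtlety lies, and I expect it to be the main point requiring care. To conclude that unbiasedness \emph{forces} the constraints, I must argue that the identity $\E(\hat{\theta}_{1,m_1}) = \theta_{1,m_1}$ has to hold for every admissible configuration of potential outcomes, not merely for one fixed unknown configuration. In the design-based framework the potential outcomes are fixed but arbitrary, and under additivity they are pinned down by $\alpha$ and $\{\theta_{k,j_k}\}$ through \eqref{eq:po_sum_of_param}. The key observation is that these parameters are algebraically free: $\alpha = Y(\Vec{0})$ and each $\theta_{k,j_k} = Y(0,\dotsc,j_k,\dotsc,0) - Y(\Vec{0})$ is built from distinct single-component exposure vectors, so they may be varied independently. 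An affine function of independent parameters that is identically equal to the single parameter $\theta_{1,m_1}$ must have matching coefficients, which is precisely the content of the four constraint families. The algebraic rearrangement of the expectation is routine; justifying this coefficient-matching step, and in particular that the parameters are genuinely free, is the crux of the argument.
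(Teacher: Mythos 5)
Your proposal is correct and takes essentially the same approach as the paper's proof in Appendix~\ref{appendix:lue_constraint}: expand $\E(\hat{\theta}_{1,m_1}) = \sum_{\Vec{e} \in \mathcal{E}} p(\Vec{e})w(\Vec{e})Y(\Vec{e})$, substitute the additive parameterization \eqref{eq:po_sum_of_param}, and match coefficients of $\alpha$ and each $\theta_{k,j_k}$ against the target $\theta_{1,m_1}$. The only place you go beyond the paper is in explicitly justifying the ``only if'' direction by arguing the parameters are algebraically free, a step the paper's proof leaves implicit when it asserts that the constraints are needed for unbiasedness.
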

\noindent Denote the set of linear unbiased estimators as $\mathcal{U}$.
Given the linear constraints, the size of $\mathcal{U}$, denoted as $|\mathcal{U}|$, is $|\mathcal{U}| = \prod_{k=1}^K (m_k + 1) - \sum_{k=1}^K m_k - 1$.
Here, the product $\prod_{k=1}^K (m_k + 1)$ corresponds to the number of exposures in $\mathcal{E}$ and the summation $\sum_{k=1}^K m_k + 1$ corresponds to the number of linear constraints.
The linear constraints for unbiasedness ensure that when the estimator is averaged across exposures, it leads to a coefficient of 1 in front of the $\theta_{1,m_1}$ term, while the coefficients for the other terms are zero. 
Hence, when we compute the expected value of $\hat{\theta}_{1,m_1}$, we obtain the parameter of interest $\theta_{1,m_1}$ (see Appendix \ref{appendix:lue_constraint}).
Examples of linear unbiased estimators include Horvitz-Thompson inverse probability estimators.
Note that unbiasedness holds given the constraints in Proposition \ref{constraints} only under additivity. 
Without additivity, we will require more constraints, and hence, under additivity, we consider more estimators that would otherwise be biased.  

\begin{ex}[continues=network_interference_ex]
In the network interference example, consider the estimators 
\begin{align}\label{ex_unbiased_est}
    \hat{\theta}_{1,d_i}^{\text{two term}, 0} &= HT_{(d_i, 0)} - HT_{(0,0)} \nonumber \\
    \hat{\theta}_{1,d_i}^{\text{two term}, 1} &= HT_{(d_i, 1)} - HT_{(0,1)} \nonumber \\
    \hat{\theta}_{1,d_i}^{\text{Avg}} &= \frac{1}{2}\left(HT_{(d_i, 0)} - HT_{(0,0)}  + HT_{(d_i, 1)} - HT_{(0,1)}\right) \nonumber \\
    \hat{\theta}_{1,d_i}^{\text{four term}, 2} &= HT_{(d_i, 1)} - HT_{(2,1)}  + HT_{(2, 0)} - HT_{(0,0)} \nonumber.
\end{align}
Under additivity, all the estimators above are linear unbiased estimators for $\theta_{1,d_i}$. 
For example, $\hat{\theta}_{1,d_i}^{\text{two term}, 1}$ introduces the parameter $\theta_{2,1}$ by placing non-zero weight on the $HT_{(d_i, 1)}$ term, but $\theta_{2,1}$ is then canceled by the $HT_{(0,1)}$ term.
Furthermore, the baseline $\alpha$ is canceled, and so the parameter that remains is the parameter of interest $\theta_{1,m_1}$.
This holds for $\hat{\theta}_{1,d_i}^{\text{Avg}}$, which leverages both estimators $\hat{\theta}_{1,d_i}^{\text{two term}, 0}$ and $\hat{\theta}_{1,d_i}^{\text{two term}, 1}$, and $\hat{\theta}_{1,d_i}^{\text{four term}, 2}$, which leverages ``seemingly unrelated'' exposures such as $(2,1)$ and $(2,0)$.
However, if additivity does not hold, then $\hat{\theta}_{1,d_i}^{\text{two term}, 0}$ is the only linear unbiased estimator for $\theta_{1,d_i}$.
For example, when additivity does not hold, $\hat{\theta}_{1,d_i}^{\text{two term}, 1}$ is no longer unbiased for $\theta_{1,d_i}$, and the bias is equal to the interference plus the interaction term. 
\end{ex}

\section{Atomic Linear Unbiased Estimators} \label{sec:alue}
In the previous section, we defined a class of linear unbiased estimators when additivity holds. 
Because of the flexibility of estimators imposed by additivity, the class of LUEs can be quite large. 
However, there are particular subclasses of LUEs that are of importance. 
We first focus on a subclass of linear unbiased estimators---atomic linear unbiased estimators (ALUEs), which are simpler in terms of their supports.

\begin{definition} [Atomic Linear Unbiased Estimators]
The estimator $\hat{\theta}_{1,m_1} \in \mathcal{U}$, given by
$\hat{\theta}_{1,m_1} =  w(\Vec{e})Y(\Vec{e})$, where $\Vec{e} \in \mathcal{E}$, is atomic within $\mathcal{U}$ if for all $u \in \mathcal{U}$, if $\mathrm{supp}(u) \subset \mathrm{supp}(\hat{\theta}_{1,m_1})$, then $\mathrm{supp}(u) = \mathrm{supp}(\hat{\theta}_{1,m_1})$. 
\end{definition}
\noindent We denote the set of ALUEs by $\mathcal{A} \subset \mathcal{U}$.
The restriction of minimal support reduces the class of linear unbiased estimators considered to those whose support cannot be reduced and still be unbiased. 
Examples of ALUEs include the following two-term and four-term estimators.

\begin{ex}[continues=sutva_ex]
The treatment effect when SUTVA holds can be estimated using a two-term ALUE:
\begin{align*}
     \hat{\theta}_{1,m}^{\text{two term}} &= HT_{(m)}- HT_{(0)}.
\end{align*}
\end{ex}

\begin{ex}[Four Exposure Model]
Consider the four exposure model \citep{aronow2017estimating}, where $z_i \in \{0, 1\}$ and $d_i^{\mathbf{z}} \in \{0, \dotsc, d_i\}$ is the treated degree of unit $i$. 
The exposures are defined as $\Vec{e}=(z_i, \mathbb{I}\{d_i^{\mathbf{z}} > 0\}) \in \{0,1\}^2$.
The first exposure component gives the treatment assignment of the unit and the second exposure component indicates whether network interference is present.
We can estimate the direct treatment effect using a two-term ALUE:
\begin{align*}
     \hat{\theta}_{1,1}^{\text{two term}} &= HT_{(1,1)} - HT_{(0,1)}.
\end{align*}

\end{ex}
\noindent Note that under additivity, $HT_{(1,0)} - HT_{(0,0)}$ is also a linear unbiased estimator for the direct treatment effect.
If we do not assume additivity, $HT_{(1,1)} - HT_{(0,1)}$ is no longer unbiased. 
There are no four-term ALUEs for the direct treatment effect.
\begin{ex}[continues=network_interference_ex]
We can estimate the network interference effect using a four-term ALUE:
\begin{align*}
     \hat{\theta}_{1,d_i}^{\text{four term}} &= HT_{(d_i, 1)} - HT_{(d, 1)} + HT_{(d, 0)} -  HT_{(0,0)},
\end{align*}
where $d \in \{1, \dotsc, d_i - 1\}$.
Note there are also two-term ALUEs for the network interference effect.  
\end{ex}
\noindent In general, the number of Horvitz-Thompson terms in ALUEs can be less than more than four, but the number of terms in the ALUEs is restricted to be even.
Generally, the number of terms in ALUEs can be up to $2K$, where $K$ is the number of exposure components. 
This is because for every exposure component not of interest whose effects are added by a Horvitz-Thompson term, we need to subtract its effect with another Horvitz-Thompson term so that the estimator is unbiased for $\theta_{1,m_1}$.

\subsection{Affine Basis for Linear Unbiased Estimators}
Atomic linear unbiased estimators are the simplest LUEs in terms of its support.
However, we want to be able to generalize the properties of ALUEs to the entire class of LUEs. 
To do this, we relate the class of ALUEs to the rest of the LUEs.
We introduce a subclass of ALUEs and show that, with another class of estimators, they form an affine basis for LUEs.

In particular, we focus on a subclass of \textit{monotonic atomic linear unbiased estimators} (MALUEs):
\begin{definition}[Monotonic Atomic Linear Unbiased Estimator]
A linear unbiased estimator $\hat{\theta}_{1,m_1} \in \mathcal{A}$ is \textit{monotonic} if the exposures in its support, $\Vec{e} \in \mathrm{supp}(\hat{\theta}_{1,m_1})$, can be arranged such that there is a component-wise partial ordering.
In particular, each exposure component is simultaneously non-increasing. 
\end{definition}
\noindent Note that all two-term ALUEs are also MALUEs since, by definition, the support only contains exposures $(m_1, e_2, \dotsc, e_K)$ and $(0, e_2, \dotsc, e_K)$, where $m_1 > 0$ and all other exposure components are equal.
However, ALUEs with more than two terms are not necessarily monotonic.
\begin{ex}[continues=network_interference_ex]
Consider the following four-term ALUEs for the network interference effect:
\begin{align}
    \hat{\theta}_{1,d_i}^{\text{four term}, a} = HT_{(d_i, 1)} - HT_{(d,1)} + HT_{(d,0)} - HT_{(0,0)} \\
    \hat{\theta}_{1,d_i}^{\text{four term}, b} = HT_{(d_i, 0)} - HT_{(d,0)} + HT_{(d,1)} - HT_{(0,1)},
\end{align}
where $d \in \{1, \dotsc, d_i-1\}$.
Here, $\hat{\theta}_{1,d_i}^{\text{four term}, a}$ and $\hat{\theta}_{1,d_i}^{\text{four term}, b}$ are both ALUEs, but only $\hat{\theta}_{1,d_i}^{\text{four term}, a}$ is also a MALUE.
In $\mathrm{supp}(\hat{\theta}_{1,d_i}^{\text{four term}, b})$, consider exposures $(d_i, 0)$ and $(d,1)$, where $d_i > d$ in the first exposure component but $0 < 1$ in the second exposure component. 
We cannot arrange exposures in $\mathrm{supp}(\hat{\theta}_{1,d_i}^{\text{four term}, b})$ according to the component-wise partial order where all exposure components are non-increasing.
\end{ex}

We focus on a particular subclass of MALUEs, denoted as $\mathcal{M} \subset \mathcal{A}$, and we show that $\mathcal{M}$ is affine independent. 
\begin{Lemma}[$\mathcal{M}$ is affine independent.]\label{construction}
Consider an ordered set of exposures $\tilde{\mathcal{E}} \subseteq \mathcal{E}$ where 
\begin{align*}
    \tilde{\mathcal{E}} = &\left\{\Vec{e} \in \mathcal{E}: e_1 \in \{1, \dotsc, m_1-1\}, \exists k \in \{2, \dotsc, K\} \text{ s.t. } e_k \neq 0\right\} \\
    &\cup \{\Vec{e} \in \mathcal{E}: e_1 = m_1\}
\end{align*}
such that exposures with $e_1 \in \{1, \dotsc, m_1-1\}$ are first, followed by the exposures with $e_1 = m_1$. 
Within the subsets of exposures with $e_1 \in \{1, \dotsc, m_1-1\}$ and $e_1 = m_1$, the exposures follow a reverse reflected lexicographic order. 
Let $\mathcal{M} \subset \mathcal{A}$ contain the following estimators. 
For each exposure $\Vec{e} \in \tilde{\mathcal{E}}$, where $\Vec{e} = (e_1, \dotsc, e_K)$, consider the following:
    
    \begin{itemize}
        \item If $e_1 \in \{1, \dotsc, m_1-1\}$, add estimator
        \begin{align}
            \hat{\theta}_{1,m_1}^{\text{four term}} =&  HT_{(m_1, e_2, \dotsc,e_K)}- HT_{(e_1, e_2, \dotsc, e_K)} \\
            &+ HT_{(e_1, e'_2, \dotsc, e'_K)}- HT_{(0, e'_2, \dotsc, e'_K)}\notag
        \end{align}
        into $\mathcal{M}$.
        Here, $e'_k = 0$ for the first $k \in \{2, \dotsc, K\}$ such that $e_k \neq 0$ and $e'_{k'} = e_{k'}$ for all other $k' \in \{2, \dotsc, K\}$ where $k \neq k'$. 
        
        \item If $e_1 = m_1$, add estimator
        \begin{align}
            \hat{\theta}_{1,m_1}^{\text{two term}} =  HT_{(m_1, e_2, \dotsc,e_K)} - HT_{(0, e_2, \dotsc, e_K)}
        \end{align}
        into $\mathcal{M}$, where $e_k \in \{0, \dotsc, m_k\}$ for $k \in \{2, \dotsc, K\}$.
        
    \end{itemize}

\noindent The set $\mathcal{M}$ is affine independent.
\end{Lemma}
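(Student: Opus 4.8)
The plan is to identify each estimator in $\mathcal{M}$ with its weight vector $w^{(\Vec{e})} \in \R^{|\mathcal{E}|}$, where the coordinate indexed by an exposure $\Vec{g} \in \mathcal{E}$ records the total weight the estimator places on $Y(\Vec{g})$; since each term $HT_{\Vec{g}}$ contributes $1/p(\Vec{g})$ to coordinate $\Vec{g}$ (and we may assume $p(\Vec{g}) > 0$ for every exposure appearing, as the HT terms require), these vectors are well defined. Affine independence is implied by linear independence, so I will prove the stronger statement: if $\sum_{\Vec{e} \in \tilde{\mathcal{E}}} c_{\Vec{e}}\, w^{(\Vec{e})} = \0$, then every $c_{\Vec{e}} = 0$. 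The argument splits the coordinates of $\R^{|\mathcal{E}|}$ according to the first exposure component and peels off the four-term estimators before the two-term ones.

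First I would isolate the four-term part. The two-term estimators place weight only on coordinates with first component $0$ or $m_1$, whereas every coordinate $\Vec{g}$ with $g_1 \in \{1, \dotsc, m_1 - 1\}$ is touched only by four-term estimators, and each four-term estimator for $\Vec{e}$ contributes to exactly two such coordinates, namely $\Vec{e}$ itself (with weight $-1/p(\Vec{e})$) and its reduction $\Vec{e}{\,'} = (e_1, e'_2, \dotsc, e'_K)$ (with weight $+1/p(\Vec{e}{\,'})$). Restricting the vanishing combination to the coordinates with $g_1 \in \{1, \dotsc, m_1 - 1\}$ therefore kills the two-term terms and leaves a relation purely among the four-term vectors. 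To show these are independent I would grade exposures by $s(\Vec{e}) = |\{k \geq 2 : e_k \neq 0\}|$, observe that the reduction satisfies $s(\Vec{e}{\,'}) = s(\Vec{e}) - 1$, and run a triangular (maximal-$s$) argument: among the exposures $\Vec{e}$ with $c_{\Vec{e}} \neq 0$ pick one of maximal $s$; its coordinate $\Vec{e}$ receives a contribution only from its own estimator, since any other estimator reaching coordinate $\Vec{e}$ must do so through its reduction term and hence have grade $s(\Vec{e}) + 1$, which is excluded by maximality. Reading off that coordinate forces $c_{\Vec{e}} = 0$, a contradiction, so all four-term coefficients vanish. This step is the heart of the proof, and it is exactly what the reverse reflected lexicographic ordering of $\tilde{\mathcal{E}}$ is designed to encode: it lists each $\Vec{e}{\,'}$ before $\Vec{e}$ so that the matrix of four-term weight vectors is triangular with nonzero diagonal.

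With the four-term coefficients eliminated, the remaining relation involves only the two-term estimators, indexed by the exposures with $e_1 = m_1$. Here the coordinate $(m_1, e_2, \dotsc, e_K)$ is touched by a single two-term estimator---the one built from that exposure---because distinct two-term estimators differ in their later components $(e_2, \dotsc, e_K)$. Reading off these coordinates one at a time forces every remaining $c_{\Vec{e}} = 0$, completing the proof of linear, and hence affine, independence. The only genuine obstacle is the sharing of the ``middle'' coordinates among the four-term estimators, where the full exposure of one estimator can coincide with the reduced exposure of another; the $s$-grading (equivalently, the prescribed ordering) resolves this, while the two-term bookkeeping and the final implication are routine.
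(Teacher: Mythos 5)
Your proof is correct, and it reaches the conclusion by a route that differs from the paper's in a useful way. The paper argues directly from the definition of affine independence: it supposes some $\hat{\theta} \in \mathcal{M}$ is written as an affine combination $\sum_{\tilde{\theta} \in \mathcal{M}} g(\tilde{\theta})\,\tilde{\theta}$ and runs an induction along the ordering of $\tilde{\mathcal{E}}$, using monotonicity to argue that each estimator is the \emph{last} one in the order whose support contains its identifying exposure, which forces $g$ to be the indicator of $\hat{\theta}$. You instead prove linear independence of the weight vectors, which is stronger and immediately yields affine independence (the two notions in fact coincide here, since taking expectations of a vanishing linear combination of unbiased estimators forces the coefficients to sum to zero). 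Your mechanism is also more transparent: restricting to coordinates with first component in $\{1, \dotsc, m_1 - 1\}$ cleanly decouples the four-term estimators from the two-term ones, and the grading $s(\Vec{e}) = |\{k \geq 2 : e_k \neq 0\}|$, which drops by exactly one under reduction, supplies the same triangularity that the paper extracts from its reverse reflected lexicographic ordering; the two-term coefficients then vanish coordinate by coordinate at $e_1 = m_1$, since after the four-term coefficients are gone each such coordinate is touched by exactly one estimator. Two minor remarks. First, your parenthetical about what the paper's ordering encodes has the direction reversed: in the paper's order the higher-grade exposures come first (the base case of its induction is the estimator identified by $(m_1 - 1, m_2, \dotsc, m_K)$), so the reduction $\Vec{e}{\,'}$ is listed \emph{after} $\Vec{e}$, not before; this is immaterial to your argument, which never invokes the ordering. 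Second, what the paper's inductive formulation buys is reuse: the same template is invoked almost verbatim in the proof of Theorem 2 to absorb the zero estimators, whereas your coordinate-restriction argument would need a further (though straightforward) case there, keyed to the coordinates with $e_1 = 0$.
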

\noindent By construction, estimators $\hat{\theta}_{1,m_1} \in \mathcal{M}$ have support such that exposures can be ordered such that exposure components are simultaneously non-increasing, and so $\mathcal{M}$ is a subset of MALUEs.
Furthermore, note that each estimator $\hat{\theta}_{1,m_1} \in \mathcal{M}$ is uniquely identifiable by an exposure in $\tilde{\mathcal{E}}$.
Namely, the two-term estimators are uniquely identified by exposures where $e_1 = m_1$, and the four term estimators are unique identified by exposures where $e_1 \in \{1, \dotsc, m_1-1\}$.
To show that the set $\mathcal{M}$ is affine independent, we leverage the fact that the estimators are monotonic and uniquely identifiable (see Appendix \ref{appendix:proof_of_affine_ind_of_m}).
Consider estimator $\hat{\theta}$ and let $\hat{\theta} = \sum_{\tilde{\theta} \in \mathcal{M}} g(\tilde{\theta}) \tilde{\theta}$. 
We show that if $\hat{\theta} \in \mathcal{M}$, then
\begin{align}\label{def_of_g}
    g(\tilde{\theta}) = \begin{cases} 1, & \text{if $\tilde{\theta} = \hat{\theta}$} \\
    0, & \text{otherwise}
    \end{cases}.
\end{align}
Since estimators in $\mathcal{M}$ are uniquely identified by the ordered set of exposures $\tilde{\mathcal{E}}$, there is also a natural ordering of the corresponding estimators.
Using induction, we iterate through the ordered set of estimators and assign weights $g(\tilde{\theta})$ according to Equation~\eqref{def_of_g}.
At the $u$th step, if $\Vec{e}^{\,(u)} \notin \mathrm{supp}(\hat{\theta})$, then $g(\tilde{\theta}^{(u)}) = 0$.
Otherwise, since the estimators are ordered according to the estimator's uniquely identifying exposure $\Vec{e}^{\, (u)} \in \tilde{\mathcal{E}}$, and each estimator is a MALUE, the estimator $\tilde{\theta}^{(u)}$ is the last estimator in $\mathcal{M}$ with $\Vec{e}^{\,(u)}$ in its support.
Hence, if for all $u' < u$, we have $g(\tilde{\theta}^{(u')}) = 0$, $\Vec{e}^{\,(u)} \in \mathrm{supp}(\hat{\theta})$, and $\hat{\theta} \in \mathcal{M}$, then $\tilde{\theta}^{(u)} = \hat{\theta}$, i.e. $g(\tilde{\theta}^{(u)}) = 1$.
If there were at least one $u' < u$ such that $g(\tilde{\theta}^{(u')}) = 1$, then $g(\tilde{\theta}^{(u)}) = 0$ in order for unbiasedness to hold.
Since $g(\tilde{\theta}) = 1$ only if $\tilde{\theta} = \hat{\theta}$, then $\mathcal{M}$ is affine independent.

The size of the set of estimators $\mathcal{M}$, denoted as $|\mathcal{M}|$, is equal to:
\begin{align}
    |\mathcal{M}| = \underbrace{\prod_{k=2}^K (m_k + 1)}_{\text{two term estimators}} + \underbrace{(m_1 - 1)\left[\prod_{k=2}^K(m_k + 1) - 1\right]}_{\text{four term estimators}}.
\end{align}
The first term is equal to the number of two-term estimators, which are uniquely identifiable by the exposures with $e_1 = m_1$.
The second term is equal to the number of four-term estimators, where there are $m_1-1$ possible values for the first exposure component, and there are $\prod_{k=2}^K(m_k + 1)$ possible values for $e_2, \dotsc, e_K$.
We subtract the case when $e_2 = \dotsb = e_K = 0$; hence the minus one.

Although the estimators in $\mathcal{M}$ are affine independent, there are not enough estimators to span $\mathcal{U}$.
We introduce an additional set of estimators, denoted by $\mathcal{Z}$:
\begin{definition}[Zero Estimators]
Consider a set of estimators $\mathcal{Z}$, defined as the following:
\begin{align}
    \mathcal{Z} = \{\hat{\theta}_0: \hat{\theta}_0 = HT_{(0, e_2, \dotsc, e_K)} - HT_{(0, e_2, 0, \dotsc, 0)} - HT_{(0, 0, e_3, \dotsc, e_K)}  + HT_{(0, \dotsc, 0)}\},
\end{align} 
where there are at least two $k, k' \in \{2, \dotsc, K\}$ such that $e_k \neq 0, e_{k'} \neq 0$, and without loss of generality, we assumed that $e_2, e_3 \neq 0$.
\end{definition}
\noindent The size of $\mathcal{Z}$ is:
\begin{align}
    |\mathcal{Z}| = \prod_{k=2}^K(m_k + 1) - 1 - \sum_{k=2}^K m_k.
\end{align}
The first term is equal to the number of exposures where $e_1 = 0$.
Since we require that at least two $k,k'$ are such that $e_k \neq 0, e_{k'} \neq 0$, we subtract the case when $e_1 = \dotsb = e_K = 0$ and when only one of $e_2, \dotsc, e_K$ is non-zero.
Under additivity, $\E(\hat{\theta}_0) = 0$ (hence we call $\hat{\theta}_0$ a \textit{zero estimator}), which is needed to ensure the unbiased estimation of $\theta_{1,m_1}$.
We denote the union of the estimators of $\mathcal{M}$ and the zero estimators as $\hat{\Theta} = \mathcal{M} \cup \mathcal{Z}$.
\begin{theorem}[Affine basis for LUE]\label{affine_basis_thm}
The set $\hat{\Theta}$ forms an affine basis for the set of linear unbiased estimators.
\end{theorem}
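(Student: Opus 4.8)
The plan is to work in the coefficient space: identify each estimator with its vector of Horvitz--Thompson coefficients in $\mathbb{R}^{\mathcal{E}}$, so that $\mathcal{U}$ is an affine subspace whose direction space $V = \{\text{zero estimators}\}$ has dimension $|\mathcal{U}| = \prod_{k}(m_k+1) - \sum_k m_k - 1$. Since $\mathcal{M}\subset\mathcal{U}$ and $\mathcal{Z}\subset V$, the statement ``$\hat{\Theta}$ is an affine basis'' means precisely that every LUE is a unique affine combination of the members of $\mathcal{M}$ together with a linear combination of the zero estimators of $\mathcal{Z}$; equivalently, fixing any $\tilde{\theta}_0\in\mathcal{M}$, the family $\{\tilde{\theta}-\tilde{\theta}_0 : \tilde{\theta}\in\mathcal{M}\setminus\{\tilde{\theta}_0\}\}\cup\mathcal{Z}$ is a basis of $V$. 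First I would record the cardinality identity $|\mathcal{M}|+|\mathcal{Z}| = |\mathcal{U}|+1$ (the computation already in the text) together with $\mathcal{M}\cap\mathcal{Z}=\emptyset$, so the proposed basis has exactly $\dim V$ elements. Because of this count, it suffices to prove linear independence — spanning then follows automatically — and linear independence of the difference family is equivalent to showing that $\sum_{\tilde{\theta}\in\mathcal{M}}g(\tilde{\theta})\tilde{\theta} + \sum_{\hat{\theta}_0\in\mathcal{Z}}h(\hat{\theta}_0)\hat{\theta}_0 = 0$ with $\sum_{\mathcal{M}}g = 0$ forces all $g$ and $h$ to vanish.

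The key device is to project coefficient vectors onto the coordinates with $e_1\neq 0$; call this $\pi_+$. Every member of $\mathcal{Z}$ is supported entirely on exposures with $e_1=0$, so $\pi_+$ annihilates $\mathcal{Z}$. Applying $\pi_+$ to the vanishing combination above leaves $\sum_{\tilde{\theta}\in\mathcal{M}}g(\tilde{\theta})\,\pi_+(\tilde{\theta}) = 0$, which decouples the two families: I would first show this relation forces $g\equiv 0$, and then conclude from $\sum_{\hat{\theta}_0}h(\hat{\theta}_0)\hat{\theta}_0 = 0$ that $h\equiv 0$ by independence of $\mathcal{Z}$.

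For the $\mathcal{M}$ part I would re-run the argument of Lemma~\ref{construction}, observing that it only uses the $e_1\neq 0$ support. Under $\pi_+$ each two-term estimator collapses to the single coordinate $HT_{(m_1,e_2,\dots,e_K)}$, while each four-term estimator retains its identifying coordinate $(e_1,e_2,\dots,e_K)$ with $e_1\in\{1,\dots,m_1-1\}$ and a reduced coordinate having one fewer nonzero component. Ordering the four-term estimators by the reverse reflected lexicographic order of their identifying exposures (equivalently, by number of nonzero components), the identifier coordinate of the leading estimator is hit by no other member, so its weight must be zero; peeling inductively yields $g\equiv 0$ on the four-term estimators, after which the surviving two-term images are distinct coordinate vectors and hence their weights vanish too. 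An entirely parallel triangular argument handles $\mathcal{Z}$: each zero estimator is uniquely identified by its leading exposure $(0,e_2,\dots,e_K)$ with at least two nonzero components, and its other three terms have strictly fewer nonzero components, so ordering $\mathcal{Z}$ by number of nonzero components gives the same peeling and shows $\mathcal{Z}$ is linearly independent.

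The hard part is the overlap between $\mathcal{M}$ and $\mathcal{Z}$ on the $e_1=0$ exposures: the two-term members of $\mathcal{M}$ carry a $-HT_{(0,e_2,\dots,e_K)}$ term landing on exactly the coordinates that the leading terms of $\mathcal{Z}$ occupy, so the two families cannot be separated coordinatewise in any naive fashion. The projection $\pi_+$ is what resolves this, stripping $\mathcal{Z}$ away completely before any comparison with $\mathcal{M}$ is attempted; the remaining delicate point is verifying that the peeling structure genuinely survives restriction to $e_1\neq 0$ for $\mathcal{M}$ and is genuinely triangular for $\mathcal{Z}$ under the first-nonzero splitting convention, which is where the bookkeeping with the reverse reflected lexicographic order must be carried out carefully. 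Once both $g\equiv 0$ and $h\equiv 0$ are established, linear independence holds, the cardinality count upgrades it to a basis of $V$, and therefore $\hat{\Theta}$ is an affine basis for the set of linear unbiased estimators.
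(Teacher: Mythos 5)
Your proposal is correct, and it reaches the theorem by a genuinely different organization than the paper's, even though both rest on the same combinatorial engine (unique identifying exposures plus a triangular ``peeling'' induction, as in Lemma~\ref{construction}) and both finish with the same dimension count. The paper proves affine independence of $\hat{\Theta}$ in one joint induction over a globally ordered list: identifying exposures with $e_1\in\{1,\dotsc,m_1-1\}$ first, then $e_1=m_1$, then $e_1=0$, reverse reflected lexicographic within each block, using the fact that every estimator is the \emph{last} one in this order whose support contains its identifier; this is exactly where the interleaving of $\mathcal{M}$- and $\mathcal{Z}$-supports on the $e_1=0$ coordinates (which you correctly flag as the hard part) gets absorbed. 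You instead (i) recast the claim in the direction space $V$ --- every LUE is uniquely an affine combination over $\mathcal{M}$ plus a linear combination over $\mathcal{Z}$ --- and (ii) decouple the two families with the projection $\pi_+$ onto the $e_1\neq 0$ coordinates, which annihilates $\mathcal{Z}$ outright, so two separate, purely local peelings suffice. The reformulation (i) is more than cosmetic: since the zero estimators are unbiased for $0$ rather than for $\theta_{1,m_1}$, $\hat{\Theta}\not\subset\mathcal{U}$ and the affine hull of $\hat{\Theta}$ cannot equal $\mathcal{U}$, so the theorem is only meaningful under your reading, which is also how the paper uses it later (e.g.\ the six-term decomposition with $\alpha_1+\alpha_2+\alpha_3=1$ over ALUEs only). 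Your route also delivers full \emph{linear} independence of $\mathcal{M}\cup\mathcal{Z}$ (your $\mathcal{M}$-peeling never uses $\sum g=0$), which is exactly what the representation in (i) requires and is formally stronger than the affine independence the paper establishes. What the paper's approach buys is economy --- one ordering, one induction, no auxiliary projection; what yours buys is transparency, since the $e_1=0$ overlap never has to be confronted. Two loose ends in your write-up are harmless: reverse reflected lexicographic order and ordering by number of nonzero components are \emph{not} the same total order, but either works because the ``hit only from above'' structure is really governed by componentwise domination, which both orders refine; and the projected four-term estimator has three nonzero coordinates (it also keeps $(m_1,e_2,\dotsc,e_K)$), not two, which is immaterial since your peeling only needs the identifying coordinate to be private.
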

\noindent 
\noindent The proof for the affine independence of $\hat{\Theta}$ is very similar to the proof of Lemma \ref{construction} (see Appendix \ref{appendix:proof_affine_basis_for_lue}).
Note now that 
\begin{align*}
    \mathrm{supp}(\hat{\Theta}) &= \{\Vec{e}: \Vec{e} \in \mathcal{E}, e_1 = m_1\} \nonumber\\
    &\cup \{\Vec{e}: \Vec{e} \in \mathcal{E}, e_1 \in \{1, \dotsc, m_1-1\}, \exists k \in \{2, \dotsc, K\} \text{ s.t. } e_k \neq 0\} \nonumber \\
    &\cup \{\Vec{e}: \Vec{e} \in \mathcal{E}, e_1 = 0, \exists k,k' \in \{2, \dotsc, K\} \text{ s.t. } e_k \neq 0, e_{k'} \neq 0\}.
\end{align*}
We order the exposures in the support such that the exposures with first exposure component equal to $m \in \{1, \dotsc, m_1-1\}$ are first, the exposures with first exposure component equal to $m_1$ are next, and the exposures with first exposure component equal to zero are last.
Within each subset of exposures, we order the exposures according to the reverse reflected lexicographic order. 
Similar to the proof of Lemma \ref{construction}, we use induction and rely on the monotonicity and uniquely identifiable estimators to show that $\hat{\Theta}$ is affine independent.
Note that each zero-estimator is uniquely identified by exposure $(0, e_2, \dotsc, e_K)$ corresponding to the first Horvitz-Thompson term in the estimator.
However, note that the zero estimators are not monotonic in the sense that MALUEs are. 
Instead, they are monotonic in the sense that the exposures follow a reverse reflected lexicographic order when we arrange them according to the order of the corresponding Horvitz-Thompson terms.
For example, for a zero estimator where $e_2 \neq 0$, the exposures corresponding to the Horvitz-Thompson terms
\begin{align*}
    HT_{(0, e_2, \dotsc, e_K)} - HT_{(0, e_2, 0, e_4, \dotsc, e_K)} - HT_{(0, 0, e_3, 0, \dotsc, 0)} + HT_{(0, \dotsc, 0)}
\end{align*}
are ordered (in increasing order) according to the reverse reflected lexicographic order.
Since exposures in $\mathrm{supp}(\hat{\Theta})$ are also ordered according to the reverse reflected lexicographic order, then for the $u$th and $u + 1$th step, we have $\Vec{e}^{\,(u)} < \Vec{e}^{\,(u+1)}$.
Hence, the zero estimator $\hat{\theta}^{(u)}_0$ is the last estimator that contains exposure $\Vec{e}^{\,(u)}$ in its support.  
We iterate through $\hat{\Theta}$ using induction and show that if an estimator $\hat{\theta} = \sum_{\tilde{\theta} \in \hat{\Theta}} g(\tilde{\theta}) \tilde{\theta}$ such that $\hat{\theta} \in \hat{\Theta}$, then the weights $g(\tilde{\theta})$ are given by Equation~\eqref{def_of_g}, i.e. $\hat{\Theta}$ is affine independent.
Since $\hat{\Theta}$ is affine independent, and the dimension of $\hat{\Theta}$ minus one (since the sum of weights is restricted to equal one for unbiasedness) is equal to the dimension of $\mathcal{U}$, then $\mathrm{span}(\hat{\Theta}) = \mathcal{U}$.
Hence, $\hat{\Theta}$ forms an affine basis for $\mathcal{U}$, and properties of the simpler estimators in $\hat{\Theta}$ extend to estimators in $\mathcal{U}$.

\section{Optimal Linear Unbiased Estimators} \label{sec:mivlue}
At this point, we have defined a set of estimators $\hat{\Theta} = \mathcal{M} \cup \mathcal{Z}$ that forms an affine basis for the set of LUEs.
Recall that additivity provides flexibility so that there are additional unbiased estimators in $\mathcal{U}$ that would otherwise be biased if additivity did not hold.
Hence, even if we just focus on estimators in $\hat{\Theta}$, the set of estimators considered could still be fairly large.
Additionally, thus far, estimators for the same estimand, such as two-term and four-term ALUEs are equivalent.
Hence, a natural question is \textit{which estimator should we use?}
In this section, we consider an additional property of variance in order to rank different linear unbiased estimators.

\subsection{Minimum Integrated Variance Linear Unbiased Estimators (MIV LUE)}\label{MIVLUE}
We consider a ``good'' estimator as one that is unbiased and has small variance.
Since LUEs depend both on the exposures and the parameters $\Theta = \{\alpha, \theta_{k,j_k}\}$ for $k \in \{1, \dotsc, K\}$ and $j_k \in \{1, \dotsc, m_k\}$ corresponding to the given exposures, we would ideally account for the parameters when we compute the variance of LUEs.
However, in general, we do not know the true set of parameters $\Theta$. 
Instead, we use distributions $\pi$ on $\Theta$ which describe the set of parameters.
We then focus on minimizing the integrated variance (IVAR), where the variance is computed with respect to distributions $\pi$ on $\Theta$, i.e. $\text{IVAR} = \int_{\Theta} \V(\hat{\theta}) \pi(d\theta)$.
Borrowing from Bayesian statistics, one can view the distributions as ``prior'' distributions on the parameters.
However, note that this is not actually Bayesian since we do not have posterior distributions---instead, we use the prior distributions to inform our choices of the weights for LUEs.
These prior distributions act as a weight, where parameters that have a higher likelihood are weighted more when computing the variance of the estimator.
Minimum integrated variance linear unbiased estimators (MIV LUEs) \citep{sussman2017elements} are then given by weights, which depend on the prior distributions, that minimize the integrated variance.
As with linear estimators, MIV LUEs depend only on the prior means and covariances \citep{hoff2009first, bickel2015mathematical, sussman2017elements}.

We seek weights $w(\Vec{e})$ that minimize the integrated variance such that the linear constraints in Proposition~\ref{constraints} hold.
To simplify the optimization problem, we assume that the parameters are uncorrelated across units, but can be correlated within units. 
We also assume that the priors have mean zero. 
However, note that if priors do not have mean zero, then the estimator
\begin{align}\label{eq:shifted_prior_est}
    \hat{\theta}_{1,m_1} = w(\Vec{e}) \left(Y(\Vec{e}) - \mu_{Y(\Vec{e})}\right) + \mu_{\theta_{1,m_1}},
\end{align}
where $\mu_{Y(\Vec{e})}$ and $\mu_{\theta_{1,m_1}}$ denote the prior mean of the potential outcome and the prior mean of $\theta_{1,m_1}$ respectively, is unbiased if $w(\Vec{e})Y(\Vec{e})$ is unbiased for $\theta_{1,m_1}$.
If $w(\Vec{e})$ minimizes the integrated variance of the estimator when priors are mean zero, then $w(\Vec{e})$ also minimizes the integrated variance of the estimator given by Equation \eqref{eq:shifted_prior_est} \citep{hoff2009first}.

Under these assumptions, the optimization problem is solved by minimizing the following Lagrangian over the weights, $w(\Vec{e})$, and lambdas:
\begin{align}\label{mivlue_problem}
    \mathcal{L} =& \frac{1}{2}\int_{\Theta}\sum_{\Vec{e} \in \mathcal{E}}p(\Vec{e})\left(w(\Vec{e})Y(\Vec{e}) - {\theta}_{1,m_1}\right)^2\pi(\theta')d\theta'\\
     &+ \lambda_1\left(1 - \sum_{\Vec{e} \in \mathcal{E}}p(\Vec{e})w(\Vec{e})\mathbbm{I}\{e_{1} = m_1\}\right) \nonumber \\
    &- \sum_{m = 1}^{m_1 - 1}\lambda_{2,m}\left(\sum_{\Vec{e} \in \mathcal{E}}p(\Vec{e})w(\Vec{e})\mathbbm{I}\{e_{1} = m\}\right) - \lambda_3\sum_{\Vec{e} \in \mathcal{E}}p(\Vec{e})w(\Vec{e})\\
    &- \sum_{k = 2}^K\sum_{j_k = 1}^{m_k} \lambda_{4, k, j_k}\sum_{\Vec{e} \in \mathcal{E}}p(\Vec{e})w(\Vec{e})\mathbbm{I}\{e_{k} = j_k\},
\end{align}
where, by taking the derivative of $\mathcal{L}$ with respect to $w(\Vec{e})$ and setting it equal to 0, the MIV LUE weights $w(\Vec{e})$ are defined as:
\begin{align}\label{mivlue_wt}
        &w(\Vec{e}) =\\
        &\frac{\lambda_1 \mathbb{I}\{e_{1} = m_1\} + \sum_{m = 1}^{m_1 - 1}\lambda_{2,m} \mathbb{I}\{e_{1} = m\} + \lambda_3 + \sum_{k = 2}^K\sum_{j_k = 1}^{m_k} \lambda_{4, k, j_k}\mathbb{I}\{e_{k} = j_k\}}{\V(Y(\Vec{e}))}. \notag
\end{align}
Note that we added the $\frac{1}{2}$ in the Lagrangian to simplify computations, but this does not change the optimization problem since it is a positive constant.

We can rewrite the optimization problem into a matrix equation.
We first define the following matrices.
Let $\mathbf{W}$ be a $|\mathcal{E}| \times |\mathcal{E}|$ diagonal matrix where the $j$th diagonal entry for $j \in \{1, \dotsc, |\mathcal{E}|\}$ is
\begin{align*}
    \mathbf{W}_{j,j} = p(\Vec{e}_j)\V(Y(\Vec{e}_j)),
\end{align*}
where $\Vec{e}_j$ is the exposure corresponding to the $j$th row/column of $\mathbf{W}$.
Let $\mathbf{C}$ be a $|\Theta| \times |\mathcal{E}|$ matrix of linear constraints given by Proposition \ref{constraints} where the rows correspond to the parameters in $\Theta$ (i.e. $k \in \{1, \dotsc, |\Theta|$) and the columns correspond to the exposures (i.e. $j \in \{1, \dotsc, |\mathcal{E}|\}$).
That is, the $k,j$th entry of matrix $\mathbf{C}$ is equal to 
\begin{align*}
    \mathbf{C}_{k,j} = p(\Vec{e}_j) \mathbb{I}\{\theta_{k} \in \Vec{e}_j\},
\end{align*}
where we write $\theta_{k} \in \Vec{e}_j$ to mean the $k$th parameter $\theta_{k}$ contributes to the value of the potential outcome, given the $j$th exposure, $Y(\Vec{e}_j)$. 
The solution vector to the optimization problem, denoted by
\begin{align*}
    \mathbf{w} = \begin{pmatrix} w(\Vec{e}_1) & \dotsc & w(\Vec{e}_{|\mathcal{E}|}) & \lambda_1 & \dotsc & \lambda_{4,K,m_K} & \lambda_3 \end{pmatrix}^T,
\end{align*} 
is then the solution to the following matrix equation:
\begin{align}\label{eq:matrix_eq_mivlue}
       \mathbf{P}^{-1}\mathbf{b} = \mathbf{w},
\end{align}
where the matrix $\mathbf{P} = \begin{pmatrix}
        \mathbf{W} & \mathbf{C}^T \\
        \mathbf{C} & \mathbf{0}
       \end{pmatrix}$ and $\mathbf{b}$ is a vector of zeros besides at the element corresponding to $\lambda_1$, at which $\mathbf{b}_{\lambda_1} = 1$.
The matrix $\mathbf{P}$ is full-rank given that the diagonal elements in $\mathbf{W}$ are positive (see Lemma \ref{lemma:p_matrix_full_rank} in Appendix \ref{appendix:characterizing_mivlue}), which holds provided the prior variance for each exposure is positive and the probability of observing each exposure is positive.
Equation~\eqref{eq:matrix_eq_mivlue} shows that the solution $\mathbf{w}$ depends on the prior variances of parameters and the probability of exposures. 
Hence, not all LUEs are also MIV LUEs---whether LUEs are also MIV LUEs depends on the design probabilities and support of the estimators.
We characterize the set of MIV LUEs in the next section through the support of the estimator.

\subsection{Characterization of MIV LUEs}
Before now, we have characterized LUEs through the linear constraints as given in Proposition \ref{constraints}.
However, we can also classify LUEs through their support, denoted by $\mathcal{E'} \subseteq \mathcal{E}$.
The support $\mathcal{E'}$ of an LUE contains exposures such that there exist weights of exposures where, when multiplied with the vector of indicators for exposures, it solves
\begin{align}\label{eq:lue_exposure_condition}
    \mathbf{C} \Vec{u} = \begin{pmatrix} 0 & \dotsc & 0 & 1 & 0 & \dotsc & 0 \end{pmatrix}^T,
\end{align}
where the $j$th element of $\Vec{u} \in \R^{|\mathcal{E}|}$ is $\Vec{u}_{j} = \mathbb{I}\{\Vec{e}_j \in \mathcal{E'}\} w(\Vec{e}_j)$, and the 1 on the right hand side corresponds to $\theta_{1,m_1}$.
Effectively, solving for $\Vec{u}$ such that it satisfies the equation ensures that the linear unbiased constraints are satisfied.
\begin{ex}\label{lue_support_ex}
Consider the network interference example where $\Vec{e} = (d_i^{\mathbf{z}}, z_i)$ for $d_i^{\mathbf{z}} \in \{0, \dotsc, d_i\}$, where $d_i$ is the degree of unit $i$, and $z_i \in \{0,1\}$.
Examples of supports of LUEs include: 
\begin{align*}
    \mathcal{E}^{\text{two term}, z_i} &= \{(d_i, z_i), (0,z_i)\} \\
    \mathcal{E}^{\text{four term}, d} &= \{(d_i, 1), (d, 1), (d,0), (0,0)\} \\
    \mathcal{E}^{\text{six term}, d} &= \{(d_i, 1), (d_i,0), (d,1), (d,0), (0,1), (0,0)\},
\end{align*}
where $d \in \{1, \dotsc, d_i-1\}$.
These sets of exposures satisfy Equation~\eqref{eq:lue_exposure_condition}. 
For example, the weight vectors $(1, -1)$, $(1, -1, 1, -1)$, and $\left(\frac{3}{2}, -\frac{1}{2}, -1, 1, -\frac{1}{2}, -\frac{1}{2}\right)$ lead to LUEs with support $\mathcal{E}^{\text{two term}, z_i}$,  $\mathcal{E}^{\text{four term}, d}$, and $\mathcal{E}^{\text{six term}, d}$, respectively.
\end{ex}

Given a subset of exposures $\mathcal{E'} \subseteq \mathcal{E}$ that is a valid support for LUEs, i.e. it satisfies Equation~\eqref{eq:lue_exposure_condition}, we can divide the set of parameters $\Theta$ into the following subsets. 
Let $\Theta^N \subseteq \Theta$ denote the set of parameters where $\theta^N \in \Theta^N$ are such that $\theta^N \notin \Vec{e}{\,'}$ for all $\Vec{e}{\,'} \in \mathcal{E}'$.
We further divide the parameters in $\Theta^F = \Theta \setminus \Theta^{N}$ as $\Theta^F = \Theta^R \cup \Theta^{NR}$.
Specifically, $\Theta^{NR}$ will be a maximal subset of $\Theta^F$ such that the submatrix of $\bvar{C}$, with rows given by $\Theta^{NR}$ and columns given by $\mathcal{E}'$, has linearly independent rows.
Additionally, we can subdivide matrices $\mathbf{W}$ and $\mathbf{C}$.
Matrix $\mathbf{W}$ is a block diagonal matrix with matrices $\mathbf{N}$ and $\mathbf{F}$ on the diagonal.
Matrix $\mathbf{N}$ is a diagonal matrix corresponding to exposures $\Vec{e} \in \mathcal{E} \setminus \mathcal{E'}$ and $\mathbf{F}$ is a diagonal matrix with rows corresponding to exposures $\Vec{e}{\,'} \in \mathcal{E'}$. 
We denote the constraint submatrices of $\mathbf{C}$ as $\bvar{C}_{e}^{p}$, where the subscript corresponds to the set of exposures $e$ and the superscript corresponds to the set of parameters $p$.
For each $e \in \{N, F\}$ and $p \in \{N, NR, R\}$, we define $\bvar{C}_{e}^{p}$ to contain rows corresponding to constraints of parameters in $\Theta^p$ and columns correspond to the exposures in $\mathcal{E}^e$.
Here, $\mathcal{E}^F = \mathcal{E'}$ and $\mathcal{E}^N = \mathcal{E} \setminus \mathcal{E'}$.

Given the subsets of exposures and parameters defined above, we can then characterize MIV LUEs through their support:
\begin{theorem}\label{mivlue_thm}
Let $\mathcal{E'} \subseteq \mathcal{E}$ such that $\mathrm{span}\left(\{\Vec{v}_{\Vec{e}{\,'}}\}_{\Vec{e}{\,'} \in \mathcal{E'}}\right) \cap \{\Vec{v}_{\Vec{e}}\}_{\Vec{e} \in \mathcal{E}} = \{\Vec{v}_{\Vec{e}{\,'}}\}_{\Vec{e}{\,'} \in \mathcal{E'}}$ where for $\vec{e}\in \mathcal{E}$,  $\Vec{v}_{\Vec{e}} \in \{0, 1\}^{|\Theta|}$ such that $\Vec{v}_{\Vec{e}}^T\Vec{v} = Y(\Vec{e})$ where $\Vec{v}$ is the vector of parameters $\Theta$.
Furthermore, assume that $\mathcal{E'}$ satisfies Equation~\eqref{eq:lue_exposure_condition}, i.e. there exists an unbiased estimator $\hat{\theta}$ where $\mathrm{supp}(\hat{\theta}) = \mathcal{E}'$.
If the design $p$ is such that $p(\Vec{e}) > 0$ for all $\Vec{e} \in \mathcal{E}$,
then there exists a $\hat{\theta}$ with $\mathrm{supp}(\hat{\theta}) \subseteq \mathcal{E'}$ and $\hat{\theta}$ is a limit of MIV LUEs. 
Furthermore, if for every exposure $\Vec{e}{\,'} \in \mathcal{E'}$, we have $\lim_{\eta \to \infty} \sum_{k = 1}^{|\Theta^{NR}|} Adj\left({\mathbf{C}_F^{NR}} \mathbf{F}_{\eta}^{-1} {\mathbf{C}_F^{NR}}^T\right)_{k,1} \mathbb{I}\{\theta_{k} \in \Vec{v}_{\Vec{e}{\,'}}^T\Vec{v}\} \neq 0$, where $Adj$ is the adjugate, then $\mathrm{supp}(\hat{\theta}) = \mathcal{E'}$.
\end{theorem}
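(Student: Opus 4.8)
The plan is to exploit the explicit form of the optimal weights in \eqref{mivlue_wt}, namely $w(\vec{e}) = \vec{v}_{\vec{e}}^{\,T}\vec{\lambda}/\V(Y(\vec{e}))$, where $\vec{\lambda}$ collects the multipliers $(\lambda_1,\lambda_{2,m},\lambda_3,\lambda_{4,k,j_k})$ indexed by the parameters in $\Theta$. Reading off the Schur complement of \eqref{eq:matrix_eq_mivlue} gives $\vec{\lambda} = (\mathbf{C}\mathbf{W}^{-1}\mathbf{C}^T)^{-1}\mathbf{b}'$, with $\mathbf{b}'$ the unit vector at the $\theta_{1,m_1}$ coordinate. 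The crucial observation is that $w(\vec{e})$ vanishes exactly when $\vec{v}_{\vec{e}}^{\,T}\vec{\lambda}=0$, so controlling the support reduces to controlling $\vec{\lambda}$ together with the denominators. To push the support into $\mathcal{E}'$, I would take the sequence of priors indexed by $\eta$ in which the prior variances of every $\vec{e}\in\mathcal{E}\setminus\mathcal{E}'$ diverge while those on $\mathcal{E}'$ stay bounded; in block form this sends $\mathbf{N}_\eta^{-1}\to\mathbf{0}$ while $\mathbf{F}_\eta$ keeps a finite limit. Each finite-$\eta$ problem has a unique MIV LUE $\hat\theta_\eta$ since $\mathbf{P}$ is full rank (Lemma~\ref{lemma:p_matrix_full_rank}), so the candidate estimator is $\hat\theta=\lim_{\eta\to\infty}\hat\theta_\eta$.

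For Part one (existence of a limit supported within $\mathcal{E}'$), note that $w_\eta(\vec{e})=\vec{v}_{\vec{e}}^{\,T}\vec{\lambda}_\eta/\V_\eta(Y(\vec{e}))$ with a diverging denominator for $\vec{e}\notin\mathcal{E}'$, so it suffices to show $\vec{\lambda}_\eta$ stays controlled so that $w_\eta(\vec{e})\to 0$ off $\mathcal{E}'$. The obstacle is that $\mathbf{C}\mathbf{W}_\eta^{-1}\mathbf{C}^T\to\mathbf{C}_F\mathbf{F}^{-1}\mathbf{C}_F^T$ is singular: its rows indexed by $\Theta^N$ are identically zero, since those parameters appear in no $\mathcal{E}'$ exposure, and within the $\Theta^F$ block the rows indexed by $\Theta^R$ are linearly dependent on those of $\Theta^{NR}$. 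Here the partition $\Theta=\Theta^N\sqcup\Theta^{NR}\sqcup\Theta^R$ and the span hypothesis do the work: the condition $\mathrm{span}(\{\vec{v}_{\vec{e}'}\})\cap\{\vec{v}_{\vec{e}}\}=\{\vec{v}_{\vec{e}'}\}$ guarantees that no complement exposure vector lies in the span of the $\mathcal{E}'$ vectors, which is what keeps the limiting feasibility system consistent and prevents weight from leaking onto $\mathcal{E}\setminus\mathcal{E}'$. I would carry out a singular-perturbation (block Schur) analysis isolating the invertible block $\mathbf{C}_F^{NR}\mathbf{F}_\eta^{-1}(\mathbf{C}_F^{NR})^T$, conclude that $\vec{\lambda}_\eta$ converges on its $\Theta^{NR}$ coordinates while the degenerate directions are annihilated in the products $\vec{v}_{\vec{e}}^{\,T}\vec{\lambda}_\eta$, and hence that $\mathbf{w}_\eta$ converges to some $\mathbf{w}_\infty$ supported on $\mathcal{E}'$. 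Passing the constraint block $\mathbf{C}\mathbf{w}_\eta=\mathbf{b}'$ to the limit, using $\mathbf{C}_N\mathbf{w}_{N,\eta}\to\mathbf{0}$, shows $\mathbf{w}_\infty$ still satisfies Proposition~\ref{constraints}, so $\hat\theta$ is a LUE with $\mathrm{supp}(\hat\theta)\subseteq\mathcal{E}'$ and is by construction a limit of MIV LUEs; feasibility of $\mathcal{E}'$ via \eqref{eq:lue_exposure_condition} is what makes this limiting LUE nonempty.

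For Part two (support equal to $\mathcal{E}'$) I would compute the limiting weights on $\mathcal{E}'$ explicitly at each finite $\eta$ and then pass to the limit. Reducing $\vec{\lambda}_\eta$ to the coordinates indexed by $\Theta^{NR}$—the $\Theta^N$ multipliers drop out because those parameters never appear in $\mathcal{E}'$ outcomes, and the $\Theta^R$ multipliers may be taken to vanish since their constraint rows are dependent on those of $\Theta^{NR}$—the surviving system is nonsingular with matrix $\mathbf{C}_F^{NR}\mathbf{F}_\eta^{-1}(\mathbf{C}_F^{NR})^T$ and right-hand side the unit vector at $\theta_{1,m_1}$ (the coordinate taken as index $1$). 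Since $\mathbf{F}_\eta^{-1}$ is positive and $\mathbf{C}_F^{NR}$ has full row rank, this matrix is positive definite, so its determinant is positive; Cramer's rule then expresses each $\lambda_{\eta,k}$ as the ratio of $\mathrm{Adj}(\mathbf{C}_F^{NR}\mathbf{F}_\eta^{-1}(\mathbf{C}_F^{NR})^T)_{k,1}$ to that determinant. Substituting into $w_\eta(\vec{e}')=\vec{v}_{\vec{e}'}^{\,T}\vec{\lambda}_\eta/\V(Y(\vec{e}'))$ and using positivity of the variance, the numerator equals, up to the common positive determinant factor, $\sum_{k}\mathrm{Adj}(\mathbf{C}_F^{NR}\mathbf{F}_\eta^{-1}(\mathbf{C}_F^{NR})^T)_{k,1}\mathbb{I}\{\theta_k\in\vec{v}_{\vec{e}'}^{\,T}\vec{v}\}$. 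Taking $\eta\to\infty$, the stated hypothesis that this limit is nonzero for every $\vec{e}'\in\mathcal{E}'$ gives $w_\infty(\vec{e}')\neq 0$, i.e. $\mathcal{E}'\subseteq\mathrm{supp}(\hat\theta)$; with Part one this yields $\mathrm{supp}(\hat\theta)=\mathcal{E}'$.

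I expect the main obstacle to be the singular-limit bookkeeping in Part one: as $\mathbf{C}\mathbf{W}_\eta^{-1}\mathbf{C}^T$ degenerates I must show that the multiplier directions associated with $\Theta^N$ and $\Theta^R$ either vanish or never feed into the weights off $\mathcal{E}'$, and that the limit $\mathbf{w}_\infty$ genuinely remains feasible. The span hypothesis and the maximality defining $\Theta^{NR}$ are precisely the structural facts I would lean on to rule out leakage of weight onto the complement and to guarantee that the reduced $\Theta^{NR}$ system—the one appearing in the adjugate condition—is the only nondegenerate part that survives.
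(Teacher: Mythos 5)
Your proposal is correct and follows essentially the same route as the paper's proof: the same $\eta$-indexed prior sequence with variances diverging off $\mathcal{E}'$ and bounded on $\mathcal{E}'$, the same partition $\Theta = \Theta^N \cup \Theta^{NR} \cup \Theta^R$ with $\mathbf{C}_F^N = \mathbf{0}$ and the $\Theta^R$ rows eliminated by dependence on $\Theta^{NR}$, the same singular-limit Schur-complement analysis isolating the full-rank block $\mathbf{C}_F^{NR}\mathbf{F}_\eta^{-1}{\mathbf{C}_F^{NR}}^T$, and the identical Cramer's-rule/adjugate argument for showing the weights on $\mathcal{E}'$ are nonzero. The only difference is organizational---you solve the reduced multiplier system $\mathbf{C}\mathbf{W}_\eta^{-1}\mathbf{C}^T\vec{\lambda}_\eta = -\mathbf{b}'$ and recover weights via $w_\eta(\vec{e}) = \vec{v}_{\vec{e}}^T\vec{\lambda}_\eta/\V_\eta(Y(\vec{e}))$, whereas the paper block-inverts the full KKT matrix $\mathbf{P}_\eta$ directly---and the "singular-limit bookkeeping" you defer (tracking that the numerators gain an extra $1/\eta$ relative to the diverging denominators, via the scaling $\mathbf{N}_\eta = \eta\tilde{\mathbf{N}}_\eta$) is exactly the computation the paper carries out.
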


\noindent Theorem \ref{mivlue_thm} states that we can find a limit of MIV LUEs $\hat{\theta}$ whose support is a subset of $\mathcal{E'} \subseteq \mathcal{E}$ as long as $\mathcal{E}'$ is a valid support for LUEs and $\mathcal{E}'$ is such that the corresponding set of vectors of indicators for exposures in $\mathcal{E}'$, denoted $\{\Vec{v}_{\Vec{e}{\,'}}\}_{\Vec{e}{\,'} \in \mathcal{E'}}$, contains all vectors in $\mathrm{span}\left(\{\Vec{v}_{\Vec{e}{\,'}}\}_{\Vec{e}{\,'} \in \mathcal{E'}}\right)$ that correspond to valid exposures of interest. %
Since $\mathrm{span}\left(\{\Vec{v}_{\Vec{e}{\,'}}\}_{\Vec{e}{\,'} \in \mathcal{E'}}\right)$ is a linear subspace of $\R^{|\Theta|}$, there exists a positive semi-definite matrix $\mathbf{\Sigma}$ such that $\mathrm{span}\left(\{\Vec{v}_{\Vec{e}{\,'}}\}_{\Vec{e}{\,'} \in \mathcal{E'}}\right) = \mathrm{Null}(\mathbf{\Sigma})$.
For example, $\mathbf{\Sigma} = I - XX^T$, where the columns of $X$ are vectors that form an orthonormal basis for $\mathrm{span}\left(\{\Vec{v}_{\Vec{e}{\,'}}\}_{\Vec{e}{\,'} \in \mathcal{E'}}\right)$. %
Given prior variance-covariance matrix $\mathbf{\Sigma}$, we then solve for $\mathbf{w}$ given by Equation~\eqref{eq:matrix_eq_mivlue}.
However, we require the following lemma:

\begin{Lemma}\label{var_cov_matrix_lemma}
Let $\Theta = \{\alpha, \theta_{1,1}, \dotsc, \theta_{K,m_K}\}$ be the set of parameters, and let $\mathbf{\Sigma} \in \R^{|{\Theta}| \times |{\Theta}|}$ be a variance-covariance matrix for the parameters. 
Let $\Vec{v}_1, \Vec{v}_2 \in \{0,1\}^{|{\Theta}|}$ be vectors such that $\Vec{v}_1^{T} \mathbf{\Sigma} \Vec{v}_1 = 0$ and $\Vec{v}_2^{T} \mathbf{\Sigma} \Vec{v}_2 = a$ where $0<a<\infty$.
There exists a sequence of positive semi-definite matrix $\tilde{\mathbf{\Sigma}}_{\eta} \in \R^{|{\Theta}| \times |{\Theta}|}$ such that $\lim_{\eta \to \infty} \Vec{v}_1^T \tilde{\mathbf{\Sigma}}_{\eta} \Vec{v}_1 < \infty$ and $\lim_{\eta \to \infty} \Vec{v}_2^T \tilde{\mathbf{\Sigma}}_{\eta} \Vec{v}_2 = \infty$.
\end{Lemma}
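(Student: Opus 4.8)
The plan is to realize $\tilde{\mathbf{\Sigma}}_{\eta}$ as $\mathbf{\Sigma}$ plus a growing rank-one positive semi-definite perturbation chosen to be orthogonal to $\Vec{v}_1$ but not to $\Vec{v}_2$; the perturbation then inflates the quadratic form along $\Vec{v}_2$ without disturbing the one along $\Vec{v}_1$.

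First I would extract two facts from the hypotheses. Since $\mathbf{\Sigma}$ is positive semi-definite it factors as $\mathbf{\Sigma} = \mathbf{B}^{T}\mathbf{B}$, so $\Vec{v}_1^{T}\mathbf{\Sigma}\Vec{v}_1 = \lVert \mathbf{B}\Vec{v}_1\rVert^2 = 0$ forces $\mathbf{B}\Vec{v}_1 = \0$ and hence $\mathbf{\Sigma}\Vec{v}_1 = \0$; in particular $\Vec{v}_1 \in \mathrm{Null}(\mathbf{\Sigma})$. I would also note that $\Vec{v}_1$ and $\Vec{v}_2$ must be linearly independent: were $\Vec{v}_2 = c\,\Vec{v}_1$ for some scalar $c$, we would get $\Vec{v}_2^{T}\mathbf{\Sigma}\Vec{v}_2 = c^{2}\,\Vec{v}_1^{T}\mathbf{\Sigma}\Vec{v}_1 = 0$, contradicting $\Vec{v}_2^{T}\mathbf{\Sigma}\Vec{v}_2 = a > 0$.

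Next I would build the perturbation explicitly. Using linear independence, set $\Vec{u} = \Vec{v}_2 - \frac{\Vec{v}_1^{T}\Vec{v}_2}{\Vec{v}_1^{T}\Vec{v}_1}\Vec{v}_1$, the component of $\Vec{v}_2$ orthogonal to $\Vec{v}_1$, which is nonzero and satisfies $\Vec{u}^{T}\Vec{v}_1 = 0$ and $\Vec{u}^{T}\Vec{v}_2 = \lVert\Vec{u}\rVert^{2} > 0$; write $b = \lVert\Vec{u}\rVert^{2}$. I would then define $\tilde{\mathbf{\Sigma}}_{\eta} = \mathbf{\Sigma} + \eta\,\Vec{u}\Vec{u}^{T}$, which for $\eta \ge 0$ is a sum of two positive semi-definite matrices and hence a legitimate variance-covariance matrix. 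The verification is a one-line computation in each case: $\Vec{v}_1^{T}\tilde{\mathbf{\Sigma}}_{\eta}\Vec{v}_1 = \Vec{v}_1^{T}\mathbf{\Sigma}\Vec{v}_1 + \eta(\Vec{u}^{T}\Vec{v}_1)^2 = 0$ for all $\eta$, so its limit is $0 < \infty$, while $\Vec{v}_2^{T}\tilde{\mathbf{\Sigma}}_{\eta}\Vec{v}_2 = a + \eta b^{2} \to \infty$ as $\eta \to \infty$ since $b > 0$.

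I do not expect a genuine obstacle here; the only step that deserves care is the linear-independence observation, since it is precisely what guarantees that a separating direction $\Vec{u}$ exists, and the construction collapses without it. If one wanted the limiting value of $\Vec{v}_1^{T}\tilde{\mathbf{\Sigma}}_{\eta}\Vec{v}_1$ to equal a prescribed finite constant rather than $0$, I would add a further fixed perturbation supported in $\mathrm{Null}(\mathbf{\Sigma})$, but the statement as given only requires finiteness, which this construction already delivers.
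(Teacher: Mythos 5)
Your proof is correct, and it takes a genuinely different route from the paper's. You keep $\mathbf{\Sigma}$ fixed and add a growing rank-one perturbation $\eta\,\Vec{u}\Vec{u}^{T}$, where $\Vec{u}$ is the Gram--Schmidt residual of $\Vec{v}_2$ against $\Vec{v}_1$; this requires your preliminary observation that $\Vec{v}_1$ and $\Vec{v}_2$ are linearly independent. The paper instead sets $\tilde{\mathbf{\Sigma}}_{\eta} = \eta\,\mathbf{\Sigma} + B$ with $B$ a fixed positive semi-definite matrix with small positive entries, so that $\Vec{v}_1^{T}\tilde{\mathbf{\Sigma}}_{\eta}\Vec{v}_1 = \Vec{v}_1^{T}B\Vec{v}_1 < \infty$ for all $\eta$ while $\Vec{v}_2^{T}\tilde{\mathbf{\Sigma}}_{\eta}\Vec{v}_2 = \eta a + \Vec{v}_2^{T}B\Vec{v}_2 \to \infty$. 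The paper's construction needs no orthogonalization or independence argument, and it delivers exactly what the application in Theorem \ref{mivlue_thm} relies on: (i) simultaneity --- one single sequence keeps \emph{every} vector in the null cone of $\mathbf{\Sigma}$ finite and blows up \emph{every} vector with positive quadratic form, which matters because there $\mathrm{Null}(\mathbf{\Sigma})$ is the span of the indicator vectors of exposures in $\mathcal{E}'$ and all exposures outside $\mathcal{E}'$ must diverge at once, whereas your $\Vec{u}$ is tailored to the one vector $\Vec{v}_2$ and could be orthogonal to some other vector that also needs to diverge; (ii) strict positivity --- the additive $B$ keeps the limiting variances of the finite directions positive, so the diagonal matrix $\mathbf{F}_{\eta}$ in that proof stays invertible, whereas your construction pins $\Vec{v}_1^{T}\tilde{\mathbf{\Sigma}}_{\eta}\Vec{v}_1$ at exactly zero; and (iii) the linear-in-$\eta$ structure is exploited later when $\mathbf{N}_{\eta}$ is factored as $\eta\tilde{\mathbf{N}}_{\eta}$ to compute limits. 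What your construction buys is surgical control: it perturbs a single direction and leaves the $\Vec{v}_1$ quadratic form untouched, which is all the lemma as stated demands. Two cosmetic points: the opening fact $\mathbf{\Sigma}\Vec{v}_1 = \0$ is never used (the hypothesis $\Vec{v}_1^{T}\mathbf{\Sigma}\Vec{v}_1 = 0$ suffices in your one-line verification), and the projection formula divides by $\Vec{v}_1^{T}\Vec{v}_1$, tacitly assuming $\Vec{v}_1 \neq \0$ (if $\Vec{v}_1 = \0$, simply take $\Vec{u} = \Vec{v}_2$); neither affects the correctness of your proof of the lemma as stated.
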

\noindent Specifically, let $\tilde{\mathbf{\Sigma}} = \eta \mathbf{\Sigma} + B$ for $\eta \in \R$ and $B \in \R^{|\Theta| \times |\Theta|}$ be a positive semi-definite matrix where elements $0< b_{k,j} < \infty$ are small, where $k,j \in \{1, \dotsc, |\Theta|\}$.
From Theorem \ref{mivlue_thm}, since there exists a positive-definite matrix $\mathbf{\Sigma}$ such that $\mathrm{span}\left(\{\Vec{v}_{\Vec{e}{\,'}}\}_{\Vec{e}{\,'} \in \mathcal{E'}}\right) = \mathrm{Null}(\mathbf{\Sigma})$, the variances of the potential outcomes corresponding to exposures $\Vec{e}{\,'} \in \mathcal{E'}$ are zero. 
Lemma \ref{var_cov_matrix_lemma} then says there exists a sequence of variance-covariance matrices $\tilde{\mathbf{\Sigma}}_{\eta}$ such that the potential outcomes corresponding to exposures $\Vec{e}{\,'} \in \mathcal{E}'$ have finite limiting variances.
On the other hand, potential outcomes given by $\Vec{e} \not \in \mathcal{E'}$ have infinite limiting variances under $\tilde{\mathbf{\Sigma}}_{\eta}$ since $\Vec{v}_{\Vec{e}} \notin \mathrm{span}(\{\Vec{v}_{\Vec{e}{\,'}}\}_{\Vec{e}{\,'} \in \mathcal{E'}}) = \mathrm{Null}(\mathbf{\Sigma})$.
Denote $\mathbf{P}_{\eta}$ as the matrix $\mathbf{P}$, where submatrix $\mathbf{W}_{\eta}$ depends on variances given by $\tilde{\mathbf{\Sigma}}_{\eta}$.  
We also denote submatrices of $\mathbf{W}_{\eta}$ with the subscript $\eta$.
Together with Theorem \ref{mivlue_thm}, we then see that potential outcomes with finite limiting variances potentially have non-zero weights, while potential outcomes with infinite limiting variances have weights of zero.
Note that this is supported by Equation~\eqref{mivlue_wt}, where the variance of the potential outcome is inversely related to the MIV LUE weights.
We can interpret this as we put more weight on exposures that we are more confident about, i.e. potential outcomes with smaller prior variances, while we put less weight on exposures that we are not as informed about, i.e. potential outcomes with larger prior variances.

To ensure that weights of the potential outcomes corresponding to exposures in $\mathcal{E'}$ are non-zero, we further require that, for every exposure in $\mathcal{E'}$, the limit of the sum of the entries of the adjugate of ${\mathbf{C}_F^{NR}} \mathbf{F}_{\eta}^{-1} {\mathbf{C}_F^{NR}}^T$ in the column corresponding to parameter $\theta_{1,m_1}$ as $\eta \to \infty$ is non-zero.
Although it is possible for the weights of exposures in $\mathcal{E'}$ to be zero, we show, through an example of a six-term exposure set (see Appendix \ref{appendix:six_term_ex}), that ``typical'' choices of design $p$ will lead to non-zero weights.
Hence, under most designs $p$, we have $\mathrm{supp}(\hat{\theta}) = \mathcal{E'}$.
In general, if limiting prior variances of all parameters are finite, $\hat{\theta}$ is a MIV LUE with non-zero weight on all exposures, and $\hat{\theta}$ is an affine combination of estimators in $\hat{\Theta}$.
Note that formally, $\hat{\theta}$ is a solution to the matrix equation in Equation~\eqref{eq:matrix_eq_mivlue} while taking the limit $\mathbf{P}_{\eta}$ as $\eta \to \infty$.
Since the matrix $\mathbf{P}_{\eta}$ may contain infinite values in the limit, it is not a well-defined problem.
However, for convenience, we say that a limit of MIV LUEs is also MIV LUE.
Hence $\hat{\theta}$ is a MIV LUE.

\begin{ex}[continues=lue_support_ex]
We considered three examples of supports for LUEs in the context of network interference:
\begin{align*}
    \mathcal{E}^{\text{two term}, z_i} &= \{(d_i, z_i), (0,z_i)\} \\
    \mathcal{E}^{\text{four term}, d} &= \{(d_i, 1), (d, 1), (d,0), (0,0)\} \\
    \mathcal{E}^{\text{six term}, d} &= \{(d_i, 1), (d_i,0), (d,1), (d,0), (0,1), (0,0)\}.
\end{align*}

\sloppy
Consider $\mathcal{E}^{\text{two term}, z_i} = \{(0,z_i), (d_i, z_i)\}$.
Note that $\mathrm{span}\left(\{\Vec{v}_{0,z_i}, \Vec{v}_{d_i,z_i}\}\right) \cap \{\Vec{v}_{\Vec{e}}\}_{\Vec{e} \in \mathcal{E}} = \{\Vec{v}_{0,z_i}, \Vec{v}_{d_i,z_i}\}$ for $z_i \in \{0, 1\}$.
By Theorem \ref{mivlue_thm}, there exists weights $w(\Vec{e})$ under a given prior such that $\hat{\theta}^{\text{two term}, z_i} = \sum_{\Vec{e} \in \mathcal{E}^{\text{two term}, z_i}} w(\Vec{e}) Y(\Vec{e})$ is a MIV LUE with support $\mathcal{E}^{\text{two term}, z_i}$.
Specifically, examples of priors include the following, depending on whether $z_i = 0$ or $z_i = 1$.
First consider $z_i = 0$, i.e. $\mathcal{E}^{\text{two term}, 0} = \{(0,0), (d_i, 0)\}$. 
Let $\mathbf{\Sigma}^{\text{two term}, 0}$ be defined such that parameters $\V(\alpha) = \V(\theta_{1,d_i}) = 0$ and variances of all other parameters are positive.
Now consider $z_i = 1$, i.e. $\mathcal{E}^{\text{two term}, 1} = \{(0,1), (d_i, 1)\}$.
Let $\mathbf{\Sigma}^{\text{two term}, 1}$ be such that $\V(\theta_{1,d_i}) = 0$, $\V(\alpha) = \V(\theta_{2,1}) > 0$, $cov(\alpha, \theta_{2,1}) = -\V(\alpha)$ so that $\alpha = - \theta_{2,1}$, and the variances of all other parameters are positive while covariances are non-negative.
The prior variance matrices $\mathbf{\Sigma}^{\text{two term}, 0}$ and $\mathbf{\Sigma}^{\text{two term}, 1}$ inform the MIV LUE weights. 
In particular, the MIV LUE weights given by priors $\mathbf{\Sigma}^{\text{two term}, 0}$ and $\mathbf{\Sigma}^{\text{two term}, 1}$ are equal to the weights of the two-term ALUEs $\hat{\theta}_{1,d_i}^{\text{two term}, 0}$ and $\hat{\theta}_{1,d_i}^{\text{two term}, 1}$, respectively.
That is, the two-term ALUEs are also MIV LUEs for some prior.

We now consider $\mathcal{E}^{\text{four term}, d} = \{(0,0), (d,0), (d,1), (d_i,1)\}$, where $d \in \{1, \dotsc, d_i-1\}$. 
We consider the span of $\{\Vec{v}_{\Vec{e}{\,'}}\}_{\Vec{e}{\,'} \in \mathcal{E}^{\text{four term},d}}$.
In particular, the vector $\Vec{v}_{d_i,0} \in \mathrm{span}\left(\{\Vec{v}_{\Vec{e}{\,'}}\}_{\Vec{e}{\,'} \in \mathcal{E}^{\text{four term},d}}\right) \cap \{\Vec{v}_{\Vec{e}}\}_{\Vec{e} \in \mathcal{E}}$, where $\Vec{v}_{d_i,0} = \Vec{v}_{d_i,1} - \Vec{v}_{d,1} + \Vec{v}_{d,0}$.
However, $\Vec{v}_{d_i,0} \notin \{\Vec{v}_{\Vec{e}{\,'}}\}_{\Vec{e}{\,'} \in \mathcal{E}^{\text{four term},d}}$.
Then by Theorem \ref{mivlue_thm}, there do not exist MIV LUEs for any prior under our formulation with support $\mathcal{E}^{\text{four term},d}$, i.e. $\hat{\theta}_{1,d_i}^{\text{four term},d}$ is not a MIV LUE. 

The set of exposures $\mathcal{E}^{\text{six term}, d}$ is a support for a six-term MIV LUE. 
We focus on a generalized example of a six-term exposure set in the next section.
\end{ex}

\subsection{Example: Six-Term Exposure Set}
For notational simplicity, we focus on exposures with two exposure components, but the results generalize to cases with more than two exposure components where all other exposure components are the same for all six exposures.
Let $\mathcal{E}^{\text{six term},m} = \{(0,0), (0,j), (m,0), (m,j), (m_1,0), (m_1,j)\}$, where $j \in \{1, \dotsc, m_2\}$ and $m \in \{1, \dotsc, m_1-1\}$.
By Theorem \ref{mivlue_thm}, since $\{\Vec{v}_{\Vec{e}{\,'} \in \mathcal{E}^{\text{six term},m}}\} = \mathrm{span}\left(\{\Vec{v}_{\Vec{e}{\,'} \in \mathcal{E}^{\text{six term},m}}\}\right) \cap \{\Vec{v}_{\Vec{e} \in \mathcal{E}}\}$, there exists a MIV LUE $\hat{\theta}_{1,m_1}^{\text{six term},m}$ such that $\mathrm{supp}(\hat{\theta}_{1,m_1}^{\text{six term},m}) = \mathcal{E}^{\text{six term},m}$ for a given prior.
By Theorem \ref{affine_basis_thm}, since $\hat{\theta}_{1,m_1}^{\text{six term}}$ is an LUE, we can write
\begin{align}\label{six_term_ht_est}
    \hat{\theta}_{1,m_1}^{\text{six term}} &= \alpha_1 \left(HT_{(m_1,0)} - HT_{(0,0)}\right) + \alpha_2 \left(HT_{(m_1,j)} - HT_{(0,j)}\right) \nonumber \\
    &+ \alpha_3 \left(HT_{(m_1,j)} - HT_{(m,j)} + HT_{(m,0)} + HT_{(0,0)}\right),
\end{align}
where the three estimators are ALUEs and $\alpha_1 + \alpha_2 + \alpha_3 = 1$.
Furthermore, by a similar argument as in the proof of Theorem \ref{affine_basis_thm}, the set of ALUEs 
\begin{align*}
    \{\hat{\theta}_{1,m_1}^{\text{two term}, 0}, \hat{\theta}_{1,m_1}^{\text{two term}, j}, \hat{\theta}_{1,m_1}^{\text{four term},m}\} &=  \{HT_{(m_1,0)} - HT_{(0,0)}, HT_{(m_1,j)} - HT_{(0,j)},\\
    &HT_{(m_1,j)} - HT_{(m,j)} + HT_{(m,0)} + HT_{(0,0)}\}
\end{align*} 
forms a basis for estimators with exposure set $\mathcal{E}^{\text{six term},m}$.
Hence, we only need to focus on the three weights $\alpha_1, \alpha_2, \alpha_3$ as opposed to the six weights on the different exposures.

Recall that in the previous section, we showed that the two two-term estimators, $\hat{\theta}_{1,m_1}^{\text{two term}, 0}, \hat{\theta}_{1,m_1}^{\text{two term}, j}$, are also MIV LUEs for some prior, and so it is possible that $\alpha_1 = 1, \alpha_2 = 0, \alpha_3 = 0$ or $\alpha_1 = 0, \alpha_2 = 1, \alpha_3 = 0$.
However, since the four-term estimator $\hat{\theta}_{1,m_1}^{\text{four term}, m}$ is not a MIV LUE for any prior, then $\alpha_3 \neq 1$.
Although four-term ALUEs are not MIV LUEs, exposures in the supports of four-term ALUEs may still contribute to MIV LUEs.
Through the weights $\alpha_1, \alpha_2, \alpha_3$, we investigate how much emphasis might be put on exposures that are ``seemingly unrelated'' to the estimand of interest, such as exposures $(m,j)$ and $(m,0)$. 

Solving for the MIV LUE weights given by the MIV LUE problem in Equation~\eqref{eq:matrix_eq_mivlue} given exposure set $\mathcal{E}^{\text{six term}, m}$ and some prior $\mathbf{\Sigma}^{\text{six term}}$ (see Appendix \ref{appendix:six_term_exposure}), we determine that
\begin{align}\label{alpha_3_wt}
    \alpha_3 &= \frac{r(m,0)r(m,j)\bigg\{r(m_1,j) r(0,0) - r(m_1,0) r(0,j) \bigg\}}{D}
\end{align}
where 
$r(\Vec{e}) = \frac{p(\Vec{e})}{\V(Y(\Vec{e})}$ and 
\begin{align}
    D &=  r(m_1,0) \bigg[r(0,0) r(m,0) r(m_1,j) + r(0,0) r(m,j) r(m_1,j) \bigg] \nonumber \\
    &+ r(m_1,j) \bigg[r(m_1,0) r(m,0) r(0,j) + r(m_1,0) r(m,j) r(0,j) \bigg] \nonumber \\
    &+\bigg[r(m_1,0) + r(m_1,j)\bigg] 
    \bigg[r(0,0) r(m,0) r(0,j) + r(0,0) r(m,0) r(m,j) \nonumber \\
    &+r(0,0) r(m,j) r(0,j) + r(0,j) r(m,0) r(m,j) \bigg].
\end{align}
Hence, the weight $\alpha_3$ is determined by the prior variance-covariance matrix $\mathbf{\Sigma}^{\text{six term}}$ and design probabilities $p(\Vec{e}) \in (0,1)$ for $\Vec{e} \in \mathcal{E}$.
Since we assume that the design is fixed, we focus on how $\alpha_3$ changes as we vary the different prior variances. 

We first assume that the parameters are independent, i.e. covariances are zero. 
Rearrange Equation~\eqref{alpha_3_wt} such that $\V(\theta_{1,m})$ appears only in the denominator of $\alpha_3$.
Hence $\V(\theta_{1,m})$ is inversely related to $\alpha_3$, and the weight $\alpha_3$ is maximized as $\V(\theta_{1,m}) \to 0$.
This makes sense since exposures with $e_1 = m$ contributes the most in estimating $\theta_{1,m_1}$ when we are certain about $\theta_{1,m}$, and $\hat{\theta}_{1,m_1}^{\text{four term},m}$ is the only estimator in Equation~\eqref{six_term_ht_est} whose support contains exposures with $e_1 = m$.
If we are not as certain about $\theta_{1,m}$ relative to the other parameters, we put more weight on the two-term estimators.

\begin{figure}[tb]
    \center
  \begin{minipage}[t]{\linewidth}\centering
    \includegraphics[width=10cm]{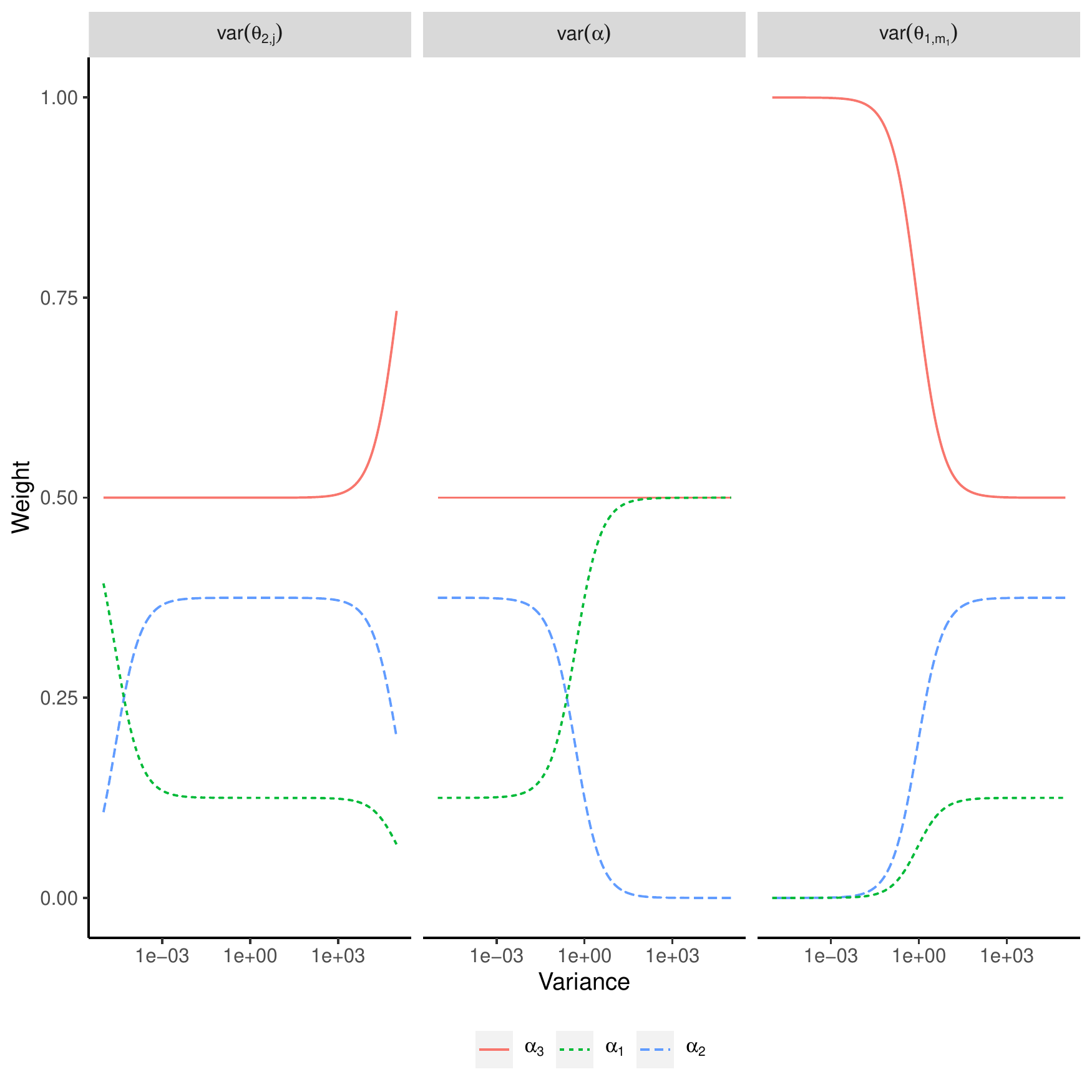}
  \end{minipage}\hfill
    \caption{The trends of weight $\alpha_1$, $\alpha_2$, and $\alpha_3$ (indicated by line type and color) for $\V(\theta_{1,m}) = 0.00001$ and different values $\V(\alpha)$, $\V(\theta_{1,m_1})$, and $\V(\theta_{2,j})$ as indicated by the panels. Variances that are not varying are set to values to maximize $\alpha_3$: $\V(\theta_{1,m}) = 0.00001$, $\V(\alpha) = 0.00001$, $\V(\theta_{1,m_1}) = 100,000$, and $\V(\theta_{2,j}) = 1$.}
    \label{fig:alpha_3_pattern}
\end{figure}
Figure \ref{fig:alpha_3_pattern} shows the trajectories of $\alpha_1$, $\alpha_2$, and $\alpha_3$ as $\V(\alpha)$, $\V(\theta_{1,m_1})$, and $\V(\theta_{2,j})$ vary when the probability of a unit being treated follows the Bernoulli distribution with probability $0.5$, $m_1 = 3$, and $m_2 = 1$. 
In each of the panels, the variances of parameters that are not varying are fixed to values aimed to maximize $\alpha_3$ (see Appendix \ref{appendix:six_term_exposure} for details).
That is, we set $\V(\theta_{1,m}) = 0.00001$, $\V(\alpha) = 0.00001$, $\V(\theta_{1,m_1}) = 100,000$, and $\V(\theta_{2,j}) = 1$.
In general, the weights depend on the fraction $\frac{\V(\theta_{2,j})}{\V(\theta_{1,m_1})}$.
As $\frac{\V(\theta_{2,j})}{\V(\theta_{1,m_1})}$ increases, the weight $\alpha_1$ is generally non-decreasing while $\alpha_2$ is generally non-increasing. 
This is because $\hat{\theta}_{1,m_1}^{\text{two term}, 0}$, which corresponds to $\alpha_1$, does not contain the parameter $\theta_{2,j}$, but $\hat{\theta}_{1,m_1}^{\text{two term}, j}$, which corresponds to $\alpha_2$, contains the parameter $\theta_{2,j}$.
When we are less certain about $\theta_{2,j}$ relative to $\theta_{1,m_1}$, i.e. when the variance of $\theta_{2,j}$ is relatively larger than the variance of $\theta_{1,m_1}$, the exposures of $\hat{\theta}_{1,m_1}^{\text{two term}, j}$ contributes less to the estimation of $\theta_{1,m_1}$.
When we are more certain about $\theta_{2,j}$ relative to $\theta_{1,m_1}$, i.e. when the variance of $\theta_{2,j}$ is relatively smaller than the variance of $\theta_{1,m_1}$, the exposures of $\hat{\theta}_{1,m_1}^{\text{two term}, 0}$ contributes less to the estimation of $\theta_{1,m_1}$.
Recall that the support of the four-term estimator contains exposures that are found in both of the supports of the two-term estimators.
Hence, the weight $\alpha_3$ is not necessarily monotonic as $\frac{\V(\theta_{2,j})}{\V(\theta_{1,m_1})}$ changes. 
As the $\frac{\V(\theta_{2,j})}{\V(\theta_{1,m_1})}$ approaches 0.0002, the weight $\alpha_3$ increases in general, but as $\frac{\V(\theta_{2,j})}{\V(\theta_{1,m_1})}$ falls outside of $(2 \times 10^{-6}, 2 \times 10^{-4})$, then $\alpha_3$ decreases. 
Hence, when the probability for a unit to be treated follows a Bernoulli distribution with probability $0.5$, $m_1 = 3$, and $m_2 = 1$, $\alpha_3$ is maximized if $\frac{\V(\theta_{2,j})}{\V(\theta_{1,m_1})}$ is inside the range $(2 \times 10^{-6}, 2 \times 10^{-4})$. 

The weights $\alpha_2$ and $\alpha_3$ also depend on $\V(\alpha)$, specifically the ratio $\frac{\V(\theta_{2,j})}{\V(\alpha)}$.
As $\frac{\V(\theta_{2,j})}{\V(\alpha)}$ increases, the weight $\alpha_3$ is non-decreasing, while the weight $\alpha_2$ is non-increasing.
This is possibly explained because $\hat{\theta}_{1,m_1}^{\text{four term},m}$, which corresponds to $\alpha_3$, also has exposures with $e_2 = 0$ in its support, which only depends on parameters $\alpha$ and either $\theta_{1,m}$ or $\theta_{1,m_1}$.
On the other hand, $\hat{\theta}_{1,m_1}^{\text{two-term}, j}$ only has exposures with $e_2 = j$ in its support.
When the variance of $\alpha$ is relatively higher than the variance of $\theta_{2,j}$, we prioritize the exposures in the support of $\hat{\theta}_{1,m_1}^{\text{two-term}, j}$.
When the variance of $\alpha$ is relatively lower than the variance of $\theta_{2,j}$, we prioritize the exposures in the support of the four-term estimator compared to exposures in the support of $\hat{\theta}_{1,m_1}^{\text{two-term}, j}$.
When the other variances are fixed to values to maximize $\alpha_3$, we see that the weight $\alpha_1$ does not depend on $\frac{\V(\theta_{2,j})}{\V(\alpha)}$.

The weight $\alpha_3$ approaches zero when we take the limit of the variances of $\alpha$ and $\theta_{1,m_1}$, specifically $\V(\alpha) \to \infty$ and $\V(\theta_{1,m_1}) \to 0$.
However, $\alpha_3$ remains non-zero even in the limits of $\V(\theta_{2,j})$, both towards zero and towards infinity.
Hence, even if one is highly uncertain about $\theta_{2,j}$, exposures in the support of the four-term ALUE may still contribute to the estimation of $\theta_{1,m_1}$.
Taking the limits of the variances of the parameters when the parameters are uncorrelated, the weight $\alpha_3$ is maximized at the following:
\begin{Cor}\label{cor:max_of_a3}
Consider a set of exposures \[ \mathcal{E}^{\text{six term},m} = \{(0,0), (0,j), (m,0), (m,j), (m_1,0), (m_1,j)\},\] where $j \in \{1, \dotsc, m_2\}$ and $m \in \{1, \dotsc, m_1-1\}$.
Under the assumption that all covariances are zero, the weight $\alpha_3$ is maximized when $\V(\alpha), \V(\theta_{1,m}) \to 0$, $\V(\theta_{1,m_1}) \to \infty$, and $\V(\theta_{2,j}) < \infty$.
Given a design $p(\Vec{e})$ for $\Vec{e} \in \mathcal{E}$, 
the maximum of $\alpha_3$ is constant and depends only on the design:
\begin{align}\label{max_a3}
   max_{\substack{\V(\alpha), \V(\theta_{1,m}) \\ \V(\theta_{1,m_1}), \V(\theta_{2,j})}}\{\alpha_3\} = \frac{p(m,j)p(m_1,j)}{\left[p(0,j) + p(m,j)\right]\left[p(m_1,0) + p(m_1,j)\right]}.
\end{align}
\end{Cor}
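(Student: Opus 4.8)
The plan is to substitute the additive variance structure into the closed form \eqref{alpha_3_wt}, reduce $\alpha_3$ to a rational function of the four free prior variances, and then combine a clean monotonicity argument in $\V(\theta_{1,m})$ with a leading-order asymptotic analysis of the remaining limit. Using \eqref{eq:po_sum_of_param} together with the assumption that all covariances vanish, each potential-outcome variance is a sum of the relevant parameter variances: writing $a=\V(\alpha)$, $b=\V(\theta_{1,m})$, $c=\V(\theta_{1,m_1})$, $s=\V(\theta_{2,j})$, we have $\V(Y(0,0))=a$, $\V(Y(m,0))=a+b$, $\V(Y(0,j))=a+s$, $\V(Y(m,j))=a+b+s$, $\V(Y(m_1,0))=a+c$, and $\V(Y(m_1,j))=a+c+s$. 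Substituting these into $r(\Vec{e})=p(\Vec{e})/\V(Y(\Vec{e}))$ turns \eqref{alpha_3_wt} into an explicit rational function of $(a,b,c,s)$ with the fixed design probabilities as coefficients, so the corollary amounts to locating the supremum of this function over the positive orthant.

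Next I would dispatch the $b=\V(\theta_{1,m})$ direction by the rearrangement the surrounding text alludes to. Since $N = r(m,0)\,r(m,j)\,B$ with $B = r(m_1,j)r(0,0)-r(m_1,0)r(0,j)$ free of $b$, I write $\alpha_3 = B\big/\big(D/[r(m,0)r(m,j)]\big)$. Both $r(m,0)$ and $r(m,j)$ are strictly decreasing in $b$, and every summand of $D/[r(m,0)r(m,j)]$ is either independent of $b$ or proportional to $1/r(m,0)$ and/or $1/r(m,j)$, hence non-decreasing in $b$; as $B>0$ throughout the regime where $\alpha_3>0$, this shows $\alpha_3$ is non-increasing in $b$ for every fixed $(a,c,s)$, so $b\to0$ is optimal. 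The $s=\V(\theta_{2,j})$ coordinate is handled by observing that $r(m,j)\to0$ as $s\to\infty$, which drives the numerator and hence $\alpha_3$ to $0$; thus the supremum requires $s$ finite, and I will verify the eventual limit value is independent of the particular finite $s$.

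The core computation is then the joint limit $a\to0$, $c\to\infty$ (with $b\to0$ and $s$ fixed). The structural fact that makes this tractable is that every term of $D$ carries a factor $r(m_1,0)$ or $r(m_1,j)$, each of which tends to $0$ like $1/c$: terms containing both vanish to second order and drop out, while among the surviving first-order terms the dominant contributions all share the common factor $r(0,0)\,r(m,0)$ that also appears in $N$. Extracting the leading asymptotics of $N$ and $D$ and cancelling this shared factor collapses the expression to $\dfrac{p(m,j)\,p(m_1,j)}{[p(0,j)+p(m,j)][p(m_1,0)+p(m_1,j)]}$, matching \eqref{max_a3}; in particular the dependence on $a$, $c$, and $s$ cancels entirely, confirming independence of the finite value of $s$.

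The step I expect to be the main obstacle is establishing rigorously that this boundary limit is genuinely the global supremum, rather than merely a limit point carrying the right value. The difficulty is that $\V(\alpha)$ enters the variance of all six potential outcomes, so—unlike for $\V(\theta_{1,m})$—there is no coordinatewise monotonicity in $\V(\alpha)$ (nor a transparent one in $\V(\theta_{1,m_1})$) to push those coordinates to their extremes. I would address this either by signing the relevant partial derivatives within the region where the bracket $B$ is positive (the only region where $\alpha_3>0$), or, more cleanly, by proving a uniform upper bound $\alpha_3 \le \frac{p(m,j)p(m_1,j)}{[p(0,j)+p(m,j)][p(m_1,0)+p(m_1,j)]}$ valid for all positive variances and showing it is approached along the stated path; the second route sidesteps the awkward joint monotonicity and simultaneously certifies both the maximizer and the maximal value.
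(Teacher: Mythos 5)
Your proposal follows essentially the same route as the paper's proof: both start from the closed form \eqref{alpha_3_wt} for $\alpha_3$ in terms of $r(\Vec{e}) = p(\Vec{e})/\V(Y(\Vec{e}))$ with the additive (zero-covariance) variance structure substituted in, both dispatch $\V(\theta_{1,m})$ by the rearrangement that puts it only in the denominator, and both then evaluate the boundary limit $\V(\alpha)\to 0$, $\V(\theta_{1,m_1})\to\infty$ with the $\V(\theta_{2,j})$ dependence cancelling; the only mechanical difference is that the paper computes this limit sequentially (taking $\V(\theta_{1,m})\to 0$ first, then $\V(\theta_{1,m_1})\to\infty$ via L'Hopital's rule, then $\V(\alpha)\to 0$), where you propose a joint leading-order expansion with cancellation of the common factor $r(0,0)\,r(m,0)$ — both yield \eqref{max_a3}.

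The one point worth flagging: the step you single out as the main obstacle — certifying that this boundary limit is the \emph{global} supremum rather than merely a limit point with the right value — is precisely the step the paper does not make rigorous either. The appendix proof justifies the location of the maximizer only by the phrase ``based on an informal analysis of the partial derivatives'' and then computes the limit along that one path. So your concern is well placed, and your proposed alternative of establishing the uniform bound
\begin{align*}
\alpha_3 \le \frac{p(m,j)\,p(m_1,j)}{\left[p(0,j)+p(m,j)\right]\left[p(m_1,0)+p(m_1,j)\right]}
\end{align*}
for all positive variances, then exhibiting the stated path as attaining it in the limit, would strengthen the paper's argument rather than merely reproduce it.
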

\noindent The maximum contribution of exposures with $e_1 = m$ depends on the design. 
The choice of the design is out of the scope for this paper, and future work may be done on this topic.
Under the conditions when $\alpha_3$ is maximized, we require $\V(\theta_{1,m_1}) \to \infty$, $\V(\theta_{1,m}) \to 0$, and $\V(\alpha) \to 0$.
Thus, estimators with an $\alpha_3$ weight equal to Equation~\eqref{max_a3} formally lie on the boundary of the set of MIV LUEs.
However, since we considered the set of MIV LUEs to be closed for convenience, estimators with a maximum $\alpha_3$ are MIV LUEs. 

\begin{figure}[tb]
    \center
  \begin{minipage}[t]{\linewidth}\centering
    \includegraphics[width=10cm]{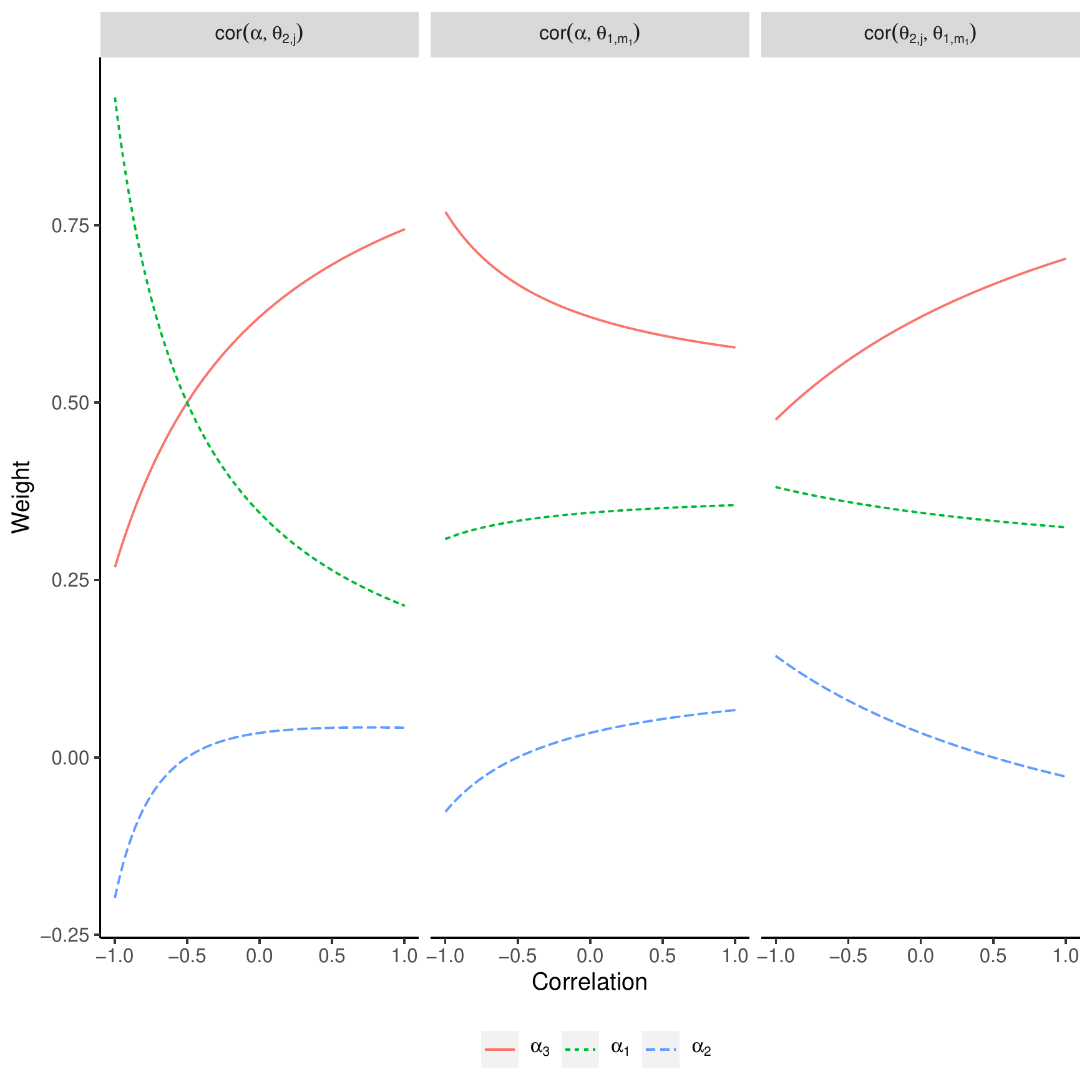}
  \end{minipage}\hfill
    \caption{The trends of weight $\alpha_1, \alpha_2, \alpha_3$ (indicated by line type and color) for $\V(\theta_{1,m}) = 0.00001$ and $\V(\theta_{1,m_1}) = \V(\alpha) = \V(\theta_{2,j}) = 1$ and different values of
    $\mathrm{cor}(\alpha, \theta_{2,j}), \mathrm{cor}(\alpha, \theta_{1,m_1}), \mathrm{cor}(\theta_{1,m_1}, \theta_{2,j})$ as indicated on the x-axis.}
    \label{fig:alpha_3_pattern_cor_alpha_v2j}
\end{figure}

\sloppy
When covariances between parameters are non-zero, similar deductions can be made---the weights $\alpha_1, \alpha_2, \alpha_3$ depend on the overall variances of the potential outcomes of the corresponding estimators.
Figure \ref{fig:alpha_3_pattern_cor_alpha_v2j} shows the trends of the weights $\alpha_1$, $\alpha_2$, and $\alpha_3$ as the correlation between pairs of parameters: $\mathrm{cor}(\alpha, \theta_{2,j})$, $\mathrm{cor}(\alpha, \theta_{1,m_1})$, and $\mathrm{cor}(\theta_{2,j}, \theta_{1,m_1})$ changes when the probability for a unit to be treated follows a Bernoulli distribution with probability $0.5$, $m_1 = 3$, and $m_2 = 1$.
Note that $\alpha_1, \alpha_2, \alpha_3$ does not depend on the covariances when the variances of parameters are taken to maximize $\alpha_3$.
Hence, we consider when $\V(\theta_{1,m}) = 0.00001$ and $\V(\theta_{1,m_1}) = \V(\alpha) = \V(\theta_{2,j}) = 1$.
Since the variances are equal to 1, the correlations here are equivalent to the covariances between the parameters.
The sign of $\mathrm{cor}(\alpha, \theta_{2,j})$ indicates whether the variance of potential outcomes with exposures where $e_2 = j$ increases or decreases---a negative correlation indicates a decrease in variance while a positive correlation indicates an increase in variance. 
Hence, as the $\mathrm{cor}(\alpha, \theta_{2,j})$ increases, the weight $\alpha_1$ increases while the weight $\alpha_2$ decreases---one is more certain about the parameters corresponding to exposures in $\hat{\theta}_{1,m_1}^{\text{two term}, 0}$ than the parameters corresponding to exposures in $\hat{\theta}_{1,m_1}^{\text{two term}, j}$.
The trajectory of weight $\alpha_3$ follows a similar pattern of the trajectory of $\alpha_1$ as $cor(\alpha, \theta_{2,j})$ varies, but at a smaller magnitude.
By a similar argument, as $\mathrm{cor}(\alpha, \theta_{1,m_1})$ increases, the weight $\alpha_1$ decreases as the weight $\alpha_2$ increases.
However, the trajectory of weight $\alpha_3$ now follows a similar pattern as the trajectory of $\alpha_2$.
Since $\alpha_2$ and $\alpha_3$ correspond to estimators whose support contains exposure $(m_1,j)$, we also see that $\alpha_2$ and $\alpha_3$ decreases as $\mathrm{cor}(\theta_{2,j}, \theta_{1,m_1})$ increases and $\alpha_1$ increases with $\mathrm{cor}(\theta_{2,j}, \theta_{1,m_1})$.
Overall, when covariances are non-zero, the weight $\alpha_3$ is generally smaller than when covariances are zero.
However, even when covariances are non-zero, we see that exposures with $e_1 = m$ may contribute to the estimation of $\theta_{1,m_1}$.

\section{Simulations} \label{sec:simulations}
In the previous section, we characterized MIV LUEs through their supports which vary with the prior distribution.
Here, we evaluate the performance of the MIV LUEs presented in Section \ref{sec:mivlue} through simulations to estimate network interference effects as described in Example \ref{network_interference_ex}.
Recall that we assume that the potential outcomes of a unit depend on the unit's treatment and the treatment of the unit's neighbors.
In particular, we assume a binary treatment and that the potential outcome of a unit depends on the number of treated neighbors and not necessarily which units are treated.
The set of exposures is given by $\mathcal{E} = \{\Vec{e}: \Vec{e} = (d_i^{\mathbf{z}}, z_i)\}$ where $d_i^{\mathbf{z}} \in \{0, \dotsc, d_i\}$ is the treated degree, or the number of treated neighbors.
Furthermore, $d_i$ is the degree of unit $i$, and $z_i \in \{0,1\}$ is the treatment assignment of unit $i$. 
Under additivity, the potential outcome of unit $i$, given exposure $\Vec{e}_i$, is given by:
\begin{align}
    Y_i(\Vec{e}_i) = \alpha^{(i)} + \theta^{(i)}_{2,1}z_i + \sum_{d=1}^{d_i} \theta^{(i)}_{1,d} \mathbb{I}\{d_i^{\mathbf{z}} = d\}.
\end{align}
Here, we include $i$ as a superscript and subscript to indicate different parameters, exposures, treatment assignments, and treated degrees for different units.
We focus on directed networks here, but results can be applied to undirected graphs.
In the case of directed networks, $d_i$ is the in-degree of unit $i$ or the number of edges pointing at unit $i$.

The parameter of interest is $\theta^{(i)}_{1,d_i}$, which is the interference effect when the treated degree is equal to the degree of the unit versus when the treated degree is zero.
More specifically, we are interested in the average interference effect when all of the neighbors of a unit versus none are treated, $\bar{\theta}_{1,d_i} = \frac{1}{n}\sum_{i=1}^n \theta^{(i)}_{1,d_i}$. 
We compare the performance of various linear estimators with inverse probability of exposure weighting:

\noindent \textbf{Two-term Horvitz-Thompson for untreated units:} 
Horvitz-Thompson inverse probability weighting estimators where $w_i(\Vec{e}_i) = \frac{\mathbb{I}\{\Vec{e}_i = (d_i, 0)\}}{n p(d_i, 0)} - \frac{\mathbb{I}\{\Vec{e}_i = (0, 0)\}}{np(0, 0)}$.
The two-term Horvitz-Thompson estimator for untreated units is unbiased even when additivity does not hold. 
We denote this estimator as $HT_0$.

\noindent \textbf{Two-term Horvitz-Thompson for treated units:}  
Horvitz-Thompson inverse probability weighting estimators where $w_i(\Vec{e}_i) = \frac{\mathbb{I}\{\Vec{e}_i = (d_i, 1)\}}{n p(d_i, 1)} - \frac{\mathbb{I}\{\Vec{e}_i = (0, 1)\}}{n p(0, 1)}$.
We denote this estimator as $HT_1$.

\noindent \textbf{Average Horvitz-Thompson:} Horvitz-Thompson inverse probability weighting estimator the average of the previous two estimators, denoted as $HT_{Avg} = \frac{1}{2}\left(HT_0 + HT_1 \right)$.

\noindent \textbf{MIV LUE with Independent Priors:}
LUE where weights are given by solving the MIV LUE problem with prior distributions: $\alpha^{(i)}, \theta^{(i)}_{2,1}, \theta^{(i)}_{1,1}, \theta^{(i)}_{1,d} \sim \mathcal{N}(0,1)$ for all $d \in \{1, \dotsc, d_i\}$.
We assume that priors are uncorrelated between units and independent between parameters. 
We denote this estimator as $M_{Ind}$.

\noindent \textbf{MIV LUE with Dilated Priors:} 
LUE where weights are given by solving the MIV LUE problem with prior distributions: $\alpha^{(i)} \sim \mathcal{N}(0,1)$, $\theta^{(i)}_{{2,1}} = \alpha^{(i)}$, and $\theta^{(i)}_{{1,d}} = \frac{d}{d_i} \eta_1 \times \alpha^{(i)}$ for all $d \in \{1, \dotsc, d_i\}$ for a fixed value of $\eta_1$.
Note that the prior variances are: $\V(\theta^{(i)}_{2,1}) = 1$ and $\V(\theta^{(i)}_{{1,d}}) = \left(\frac{d}{d_i} \eta_1 \right)^2$.
We assume that priors are uncorrelated between units.
However, there are covariances between the parameters. 
We let $\eta_1 = 1$ for our simulations, and we denote this estimator as $M_{Dil}$.

Recall that \citet{aronow2017estimating} proposed the linear unbiased estimator $HT_0$ to estimate the network interference effect.
Under additivity, the other estimators: $HT_1$, $HT_{Avg}$, $M_{Ind}$, and $M_{Dil}$ are also linear unbiased estimators.
However, the supports of $HT_0$ and $HT_1$ are of size two, whereas $HT_{Avg}$, $M_{Ind}$, and $M_{Dil}$ puts non-zero weights on more than two exposures.
Specifically, the support of $HT_{Avg}$ is equal to the union of the supports of $HT_0$ and $HT_1$, while the supports of $M_{Ind}$ and $M_{Dil}$ may be equal to the entire set of exposures.

We fix the design to be a Bernoulli design where the probability of being treated is 0.5. 
The probability of a given exposure is then given by:
\begin{align}
    \mathbb{P}(\Vec{e} = (d, z)) = \binom{d_i}{d} 0.5^{d_i + 1}.
\end{align}
For each simulation, we generated 1000 sets of parameters for each unit to generate the potential outcomes.
Unless otherwise specified, we generated the parameters for the potential outcomes as follows: $\alpha^{(i)}, \theta^{(i)}_{2,1}, \theta^{(i)}_{1,d} \sim \mathcal{N}(0,1)$ for all $d \in \{1, \dotsc, d_i\}$.
Hence, the sampling distribution for the parameters for the potential outcomes may be different from the prior distributions used for the estimators of interest.

We compare the estimators using the integrated mean squared error (IMSE), which is integrated over the parameters as in the integrated variance.
Under additivity, the estimators considered are unbiased, and so IMSE is largely driven by the integrated variance.
We evaluate the performance of the estimators under settings of varying the number of units and the number of edges, varying the level of additivity and interference, and varying the potential outcome distributions.

\subsection{Varying Number of Units and Number of Edges}\label{sec:vary_units_edges}
We first investigate how the IMSEs of the estimators change as we vary the size of the network. 
In particular, we vary the number of units and the number of edges in a network.
For each $n = 10, 20, \dotsc, 50$, we generated k-regular directed networks, where each unit has in-degree k, for $k = 2, 4, 6, 8$. 
For example, Figure \ref{fig:40_node_directed_graph} shows a 4-regular directed (as indicated by the arrows) network with 40 nodes.
We fixed the networks while we sampled different sets of potential outcome parameters and iterated through the treatment allocations.
Since each unit in a k-regular network has the same in-degree, each unit contributes equally to the estimation of the average interference effect. 
\begin{figure}[!tb]
    \center
  \begin{minipage}[t]{\linewidth}\centering
    \includegraphics[width=12cm]{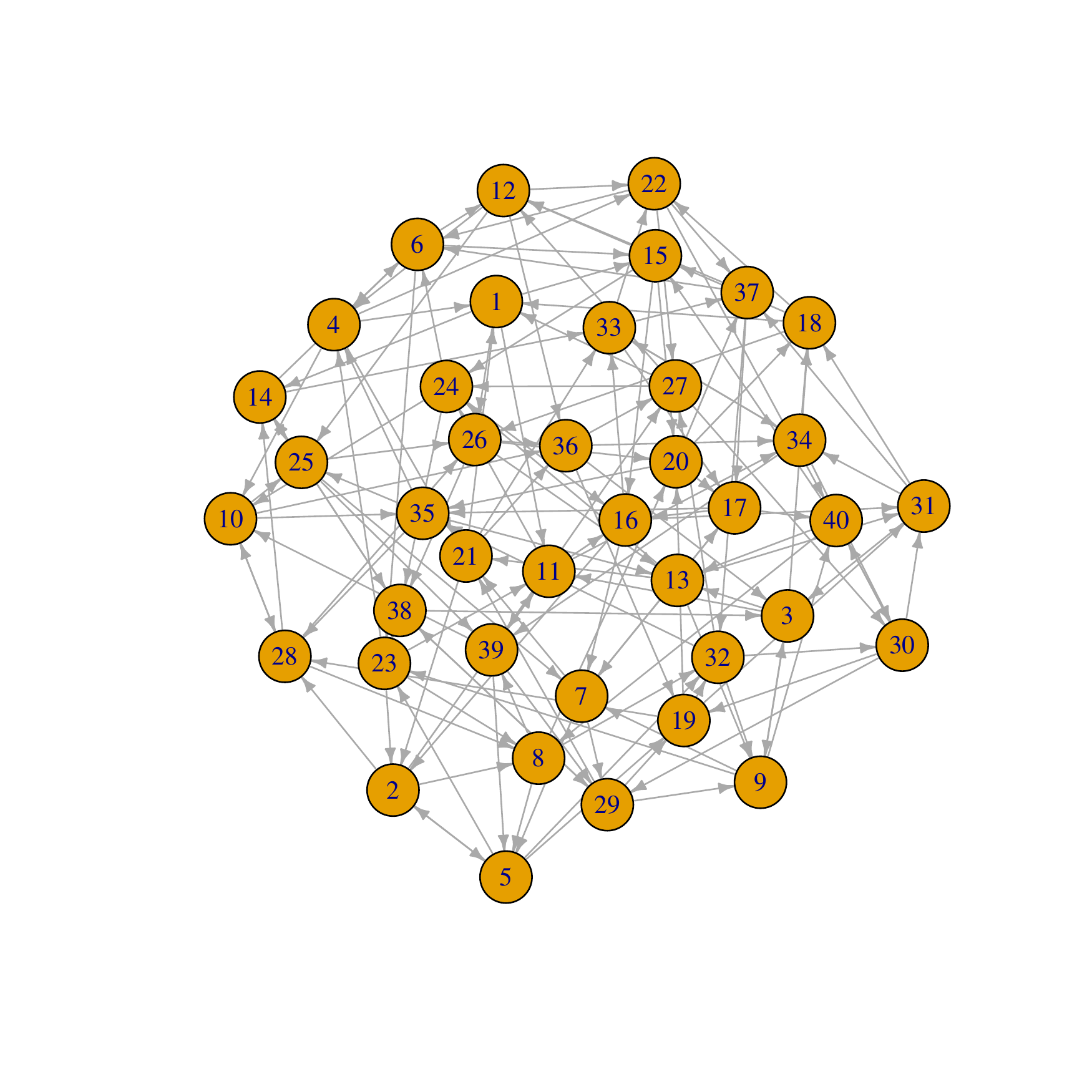}
  \end{minipage}\hfill
    \caption{Directed 4-regular network with forty nodes and each unit has in-degree of four. Arrows indicate directions of edges.}
    \label{fig:40_node_directed_graph}
\end{figure}
For networks with $n = 10$, for each sampled set of parameters, we computed the IMSE using all $2^{10}$ possible treatment allocations.
For networks with $n = 20, 30, 40, 50$, we computed the IMSE over a sample of 1500 treatment allocations.
Hence, for $n = 10$, we computed the exact integrated bias and variance whereas we estimated these for $n = 20, 30, 40, 50$.
Potential outcomes were simulated under additivity, i.e. there were no interaction effects.
Hence, in this simulation setting, we expect $HT_{Ind}$ to perform the best since the prior of $HT_{Ind}$ matches the distributions of the parameters for the potential outcomes.

\begin{figure}[!tb]
    {\center
    \includegraphics[width=\textwidth]{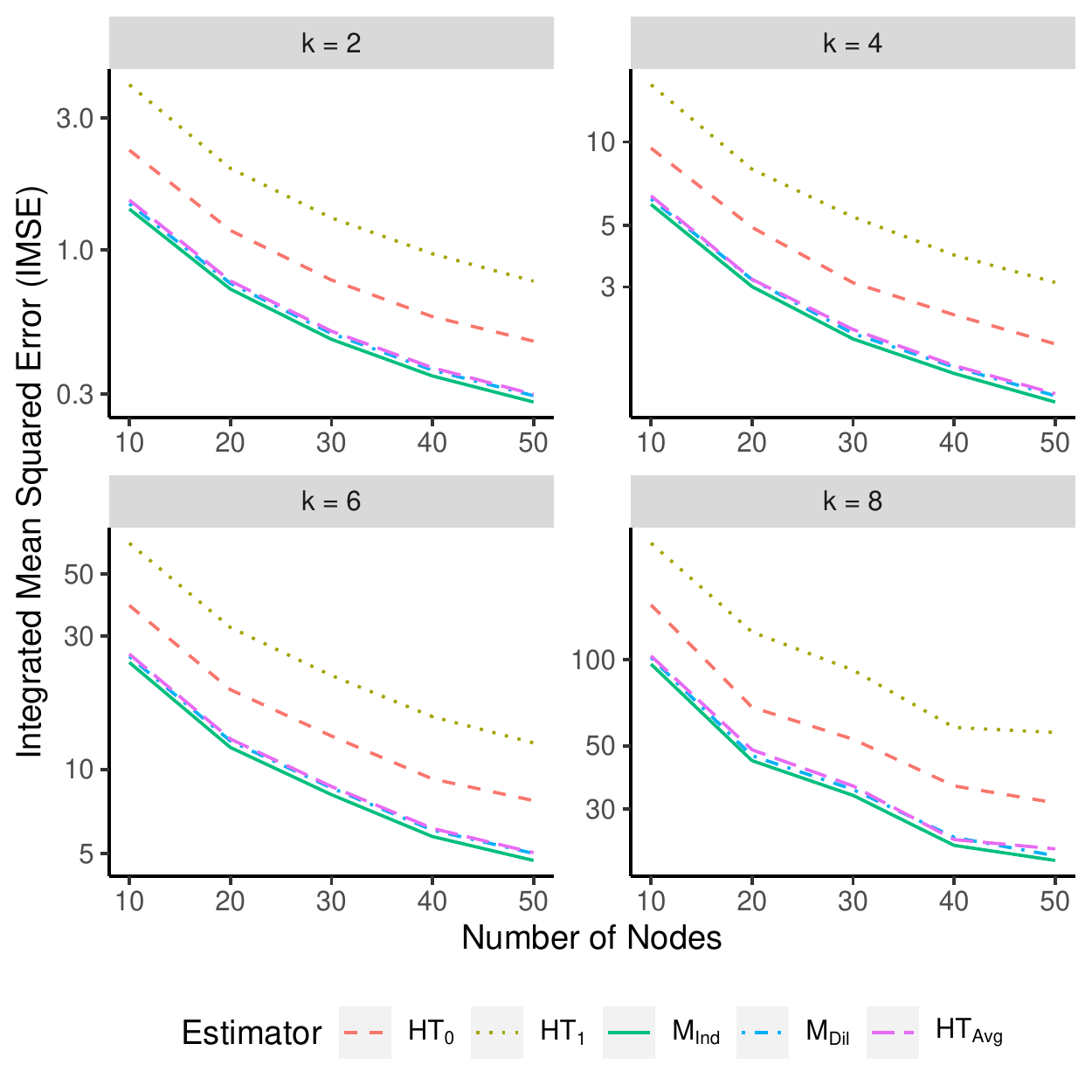}
    }
    \caption{IMSE for estimators (indicated by color and line type) when the number of units (indicated by x-axis) increases for a k-regular network for different values of $k$ (indicated by panel) under additivity and when mean interference is zero.}
    \label{fig:imse_numNodes_er_and_4_regular}
\end{figure}

Figure \ref{fig:imse_numNodes_er_and_4_regular} shows the IMSE for the estimators as the number of units (indicated by x-axis) increases for different values of $k$ (indicated by panels).
Overall, the IMSE decreases as the number of units increases.
Since all units have the same in-degree and hence the same exposure distribution, increasing the number of units leads to a decrease in the IMSE.
On the other hand, as the number of edges (or in-degree) $k$ increases, the IMSE increases for all estimators.
This is possibly explained by the fact that as the number of edges increases, the probability of a unit having treated degree zero or treated degree $d_i$ decreases.
Weights on exposures with treated degrees equal to zero or $d_i$ then increase with $k$ since weights are inversely related to the probabilities of exposures.
On the other hand, weights on other exposures are either zero (for two-term HT estimators and $HT_{Avg}$) or are relatively smaller (for $M_{Ind}$ and $M_{Dil}$) since the probability of exposures with $e_1 \in \{1, \dotsc, d_i-1\}$ increase with $k$, leading to a greater IMSE.

The red, dashed line indicates the IMSE for $HT_0$. 
Under additivity, $HT_1$ is also a linear unbiased estimator of $\bar{\theta}_{1,d_i}$, but $HT_1$ has higher IMSE than $HT_0$.
This is likely due to the extra variance introduced by the treated units.
However, there is a significant reduction in IMSE across the different panels when we average both $HT_0$ and $HT_1$.
Indeed, the IMSE of $HT_{Avg}$, given by the purple, long-dashed line, is lower than the IMSEs of $HT_0$ and $HT_1$.
There is an additional reduction in IMSE when we take the integrated variance into account and compute weights to minimize the integrated variance.
Although the IMSEs of $M_{Ind}$ and $M_{Dil}$ are only slightly lower than the IMSE of $HT_{Avg}$, we still see the benefit of using optimal weights. 
Furthermore, $M_{Ind}$ performs the best as expected.
The performances of the estimators suggest that there is an advantage in leveraging information from all data available as opposed to just using a subset of units.

\subsection{Varying Interference Effects and Deviations from Additivity}\label{sec:sims_additivity_interference}
Throughout this section, we derived linear unbiased estimators under the assumption that causal effects are additive. 
In this section, we examine the robustness of the MIV LUEs when the additivity assumption is violated.
We represent varying levels of additivity through an interaction effect between the direct effect and the interference effect.
Potential outcomes in this section were simulated according to the parameterization:
\begin{align}
    Y_i(\Vec{e}_i) = \alpha^{(i)} + \theta^{(i)}_{2,1}z_i + \sum_{d=1}^{d_i}\theta^{(i)}_{1,d}\mathbb{I}\{d_i^{\mathbf{z}} = d\} + \sum_{d=1}^{d_i} \Delta^{(i)}_{d}z_i \mathbb{I}\{d_i^{\mathbf{z}} = d\},
\end{align}
where $\alpha^{(i)}, \theta^{(i)}_{2,1} \sim \mathcal{N}(0,1)$, and $\theta^{(i)}_{1,d} \sim \mathcal{N}(\frac{d}{d_i} \mu_{1}, 1)$ and $\Delta^{(i)}_{d} \sim \mathcal{N}(\frac{d}{d_i} \delta_{1}, \mathbb{I}\{\delta_{1,d_i} > 0\})$ are the interference effects and interaction effects, respectively.
When $\Delta^{(i)}_{d} = 0$ for all $d \in \{1, \dotsc, d_i\}$, additivity holds.
We simulated potential outcomes under $\mu_{1} \in \{0, 10, 50\}$ and $\delta_{1} \in \{0, 2, 4, 6\}$.
Note when $\delta_{1} = 0$, we set $\V(\Delta_d^{(i)}) = 0$ so that $\Delta_d^{(i)} = 0$ to ensure that additivity holds.
Even though interference effects were not necessarily mean zero, we maintained zero-mean priors to evaluate the performance of our estimators when the priors do not match the potential outcome distributions.
In particular, we estimated average network interference effects on a 4-regular graph when $n = 40$.

\begin{figure}[!tb]
 
{\centering
    \includegraphics[width=\linewidth]{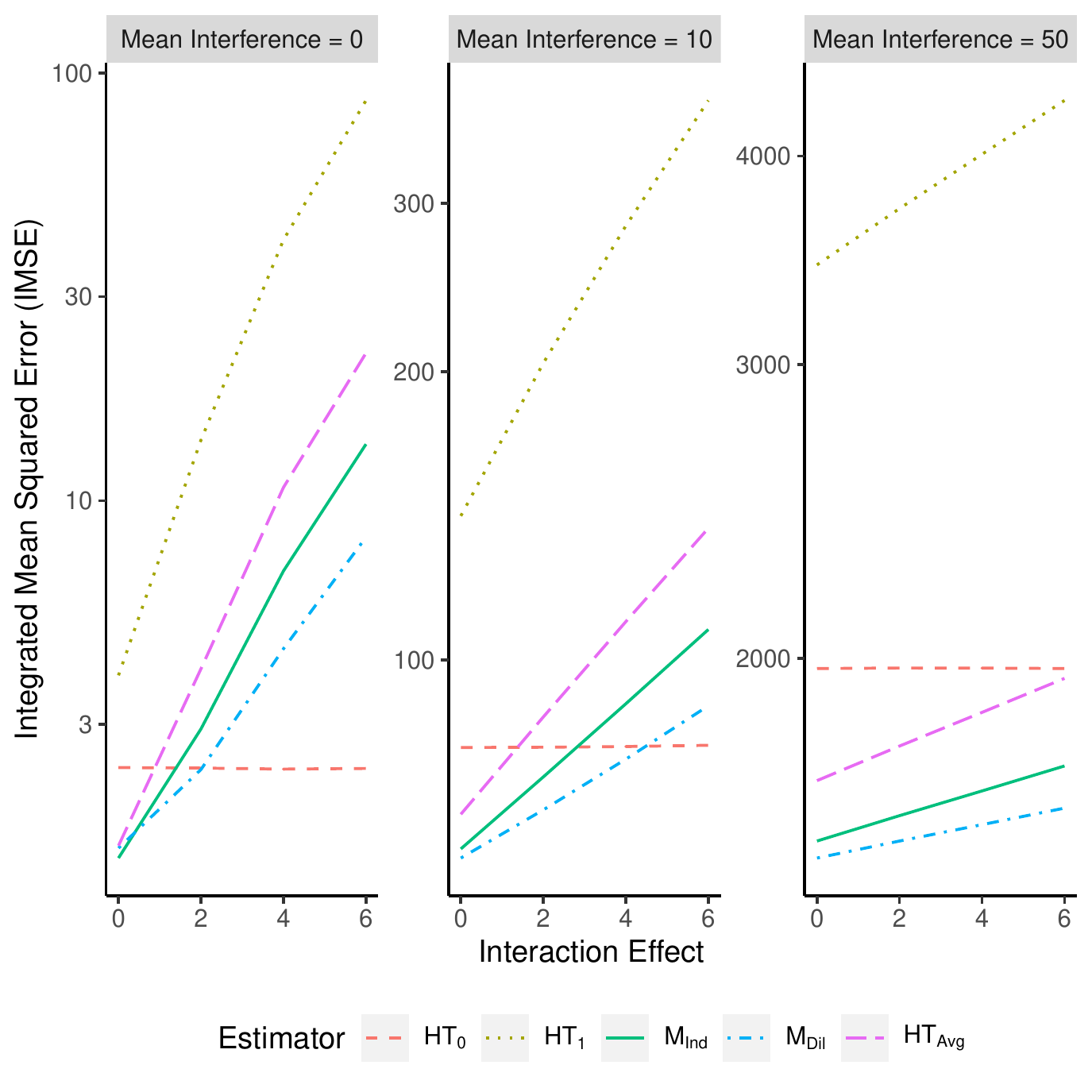}}
    \caption{IMSE for estimators (indicated by color and line type) when the interaction (indicated by x-axis) and interference effects (indicated by panel) vary for a 40 node 4-regular graph.}
    \label{fig:imse_additivity_interference}
\end{figure}
Figure \ref{fig:imse_additivity_interference} shows the IMSE of the estimators as the interaction effect increases when the mean interference effect is 0, 10, and 50 (indicated by the panels).
As the mean interference effect increases across the three panels, the IMSE increases for all estimators.
Since we used zero-mean priors to derive the MIV LUEs, it is reasonable that when the potential outcome distributions stray further away from the prior distribution, all estimators do not perform as well.
When the true mean interference effect is zero and additivity holds, $M_{Ind}$ outperforms the other estimators, as expected.
However, as the mean interference effect increases, $M_{Dil}$ actually outperforms $M_{Ind}$, despite the fact that the potential outcome parameters are independent.
Hence, there may be slight concerns when using an estimator with weights obtained from a prior distribution different from the potential outcome distribution.
Even though $M_{Ind}$ does not perform the best when the mean interference is non-zero, in general, the multi-term MIV LUEs outperform the other estimators.

As the interaction effect (indicated by the horizontal axis) increases, the IMSEs of estimators increase in general.
However, since $HT_{0}$ puts non-zero weight on untreated units, it is invariant to the interaction effect.
Furthermore, it is the only estimator considered that is unbiased even when additivity does not hold.
When the interaction effect is not zero, the other estimators are biased, which partially explains the increase in IMSE as the interaction effect increases.
In particular, $HT_{1}$ performs the worst as it only puts non-zero weights on treated exposures, and so the interaction effect is always present.
However, even when the interaction effect is non-zero, i.e. when additivity does not hold, we see that there are instances when $HT_{Avg}$, $M_{Ind}$, and $M_{Dil}$ outperform $HT_0$.
This is especially seen as the mean interference effect increases. 
Indeed, when the mean interference effect is equal to 50, the three estimators outperform $HT_{0}$ for all of the values of interaction effects considered.
This suggests that the estimators are fairly robust to violations of the additivity assumption, especially when the mean interference effect is large.
Furthermore, as the interaction effect increases, there is a bigger distinction between the IMSE of $HT_{Avg}$ and the IMSEs of $M_{Ind}$ and $M_{Dil}$, which was not seen in the previous section when additivity holds.
Hence, there is a benefit in using $M_{Ind}$ and $M_{Dil}$, over $HT_{Avg}$, especially when additivity does not hold.

\subsection{Varying Potential Outcome Distributions}
Lastly, we compare estimators in settings with different potential outcome parameter distributions.
In the previous sections, potential outcomes were sampled such that units and parameters were independent. 
When additivity holds and the true mean interference effect is zero, $M_{Ind}$ outperforms the other estimators.
In this section, in addition to the independent parameters, we also simulated potential outcome parameters under a dilated distribution where parameters are correlated.
That is, $\alpha^{(i)} \sim \mathcal{N}(0,1)$, $\theta^{(i)}_{2,1} = \alpha^{(i)}$, and $\theta^{(i)}_{1,d} = \frac{d}{d_i} \eta_{1} \alpha^{(i)}$ for $\eta_1 = 0, 1, 5, 10, 50$.
Under this setting, we expect $M_{Dil}$ to perform the best.
We compared results for a 4-regular graph with forty nodes. 

\begin{figure}[!tb]
    \center
  \begin{minipage}[t]{\linewidth}\centering
    \includegraphics[width=12cm]{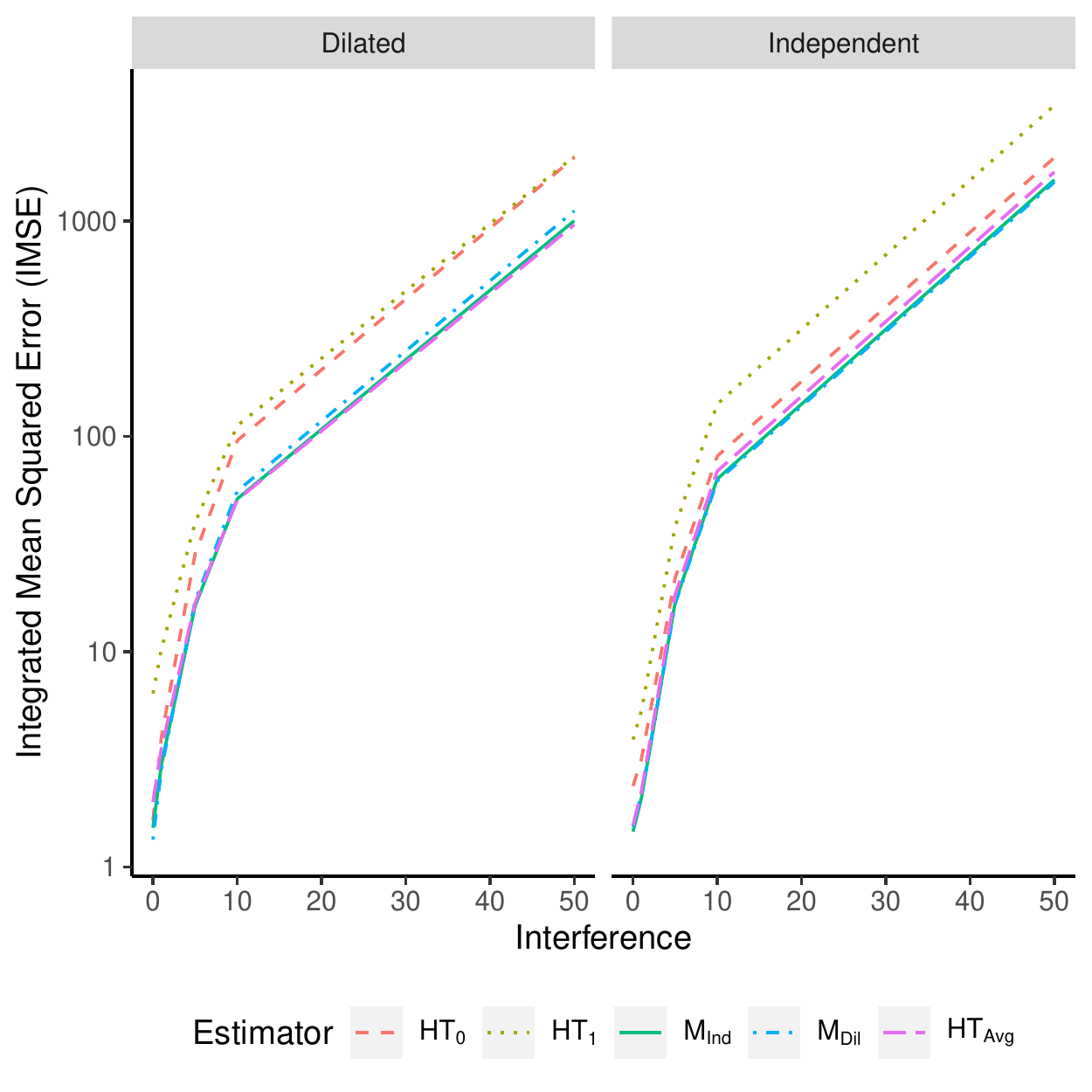}
  \end{minipage}\hfill
    \caption{IMSE for estimators (indicated by color and line type) under different potential outcome distributions (indicated by panel) as the interference effect varies (indicated by x-axis) under additivity for a 40-node 4-regular graph.}
    \label{fig:imse_dilated_v_independent_40_node_4_regular}
\end{figure}
Figure \ref{fig:imse_dilated_v_independent_40_node_4_regular} shows the IMSEs of the different estimators under the independent and dilated potential outcome distributions (indicated by panel) as we vary $\eta_1$ or $\mu_1$ (indicated by x-axis) for the dilated and independent distributions, respectively, and assuming that additivity holds.
The IMSEs for estimators under the two different potential outcome distributions are fairly similar, with estimators using potential outcome parameters sampled from independent Normal distributions having slightly higher IMSEs.
As seen in the results in Section \ref{sec:sims_additivity_interference}, the IMSEs of the estimators increase as the interference effect increases. 

The performance of the estimators under different potential outcome distributions reflected the results seen in Section \ref{sec:vary_units_edges}.
$HT_1$ performs worse than $HT_0$, but $HT_{Avg}$, $M_{Ind}$, and $M_{Dil}$ outperform $HT_0$, with $M_{Ind}$ and $M_{Dil}$ generally performing the best. 
The multi-term MIV LUE whose prior distribution matches the distribution of the potential outcomes performs the best when the true mean interference effect was low, as expected.
However, when $\mu_1$ and $\eta_1$ increase, the multi-term MIV LUE whose prior distribution matches the distribution of the potential outcomes does not perform as well.
Even so, the IMSEs of the two multi-term MIV LUEs are comparable.
Hence, even if we use a prior distribution that does not match that of the potential outcomes, there is benefit in the multi-term MIV LUEs since they outperform the other estimators.

\section{Discussion}
We proposed linear unbiased estimators for general causal effects as specified by exposure mappings under the assumption of additivity across exposure components.
Under this assumption, the space of linear unbiased estimators becomes much larger, and exposures that are ``seemingly unrelated'' to the estimand of interest can contribute to the estimation.
We can then leverage the information from units under other exposures that are not diectly related to the estimand of interest.
Given the set of exposures, we defined linear constraints for when these LUEs exist, and we introduced a class of atomic estimators which, when combined with some unbiased estimators for zero, forms an affine basis for the set of LUEs.
Additionally, we characterized an optimal subset of LUEs with minimum integrated variance.

In general, there is benefit to adding non-zero weight to more exposures.
Even if we just take the average of the two-term Horvitz-Thompson estimators for untreated and treated units (hence putting non-zero weight on four exposures), we saw a significant reduction in IMSE compared to the IMSEs of each of the two-term estimators separately.
If we further compute optimal weights for a LUE given a prior distribution, there is an additional reduction in the IMSE.
However, these multi-term estimators are only LUEs under additivity.
Under additivity, these multi-term estimators perform well in practice.
Although we require additivity for theoretical results, the multi-term estimators are fairly robust to violations of additivity in practice.
In fact, these multi-term estimators outperform two-term estimators for low levels of interaction effects and large interference effects.

Aside from additivity, we assumed that priors were uncorrelated between units which allowed for easier computation of the variances of estimators.
By assuming independent priors between units, we only had to account for the prior variances for unit $i$ when computing the LUEs for the unit-level effect.
Estimators may be derived to account for the covariance between units when computing the integrated variance.
However, when priors are correlated between units, we are likely not able to derive a closed-form solution.
Furthermore, we did not discuss estimators for the variance in this section.
To derive estimators for the variance, we may leverage the work done in \citet{aronow2017estimating}, who derived estimators for variances for two-term estimators. 
However, in our work, estimators may have more than two terms.
Since we have to account for covariances between the various Horvitz-Thompson terms, estimators for the variance could be quite complicated.

Although we focused on experimental settings, we would like to extend our multi-term MIV LUEs to observational studies as a next step.
In the context of observational studies, we would likely have to account for noise in the exposure mapping and noise in the probability of exposures.
In our current work, we did not make any assumptions about the treatment effects, but we did assume that the exposure mapping was known.
If the exposure mapping used is not the true underlying exposure mapping, which could happen in both experiments and observational studies, then the results may not be accurate.
\citet{aronow2017estimating} showed that in the case when an exposure mapping maps two treatment allocations to the same exposure, but the potential outcomes under the two treatment allocations are different, the two-term estimator $HT_0$ is unbiased for a weighted average of the potential outcomes under the different treatment allocations.
In our case, we could possibly account for the various types of noise in the exposure mapping.
Furthermore, we assumed that the probabilities of exposures were known, which is typically not true in observational studies.
In observational studies, we would have to estimate the probability of exposures using a model given covariate variables.
Therefore, we would like an estimator that is doubly robust \citep{robins1994estimation, li2021causal}.
However, unlike the typical doubly robust models, where one can make misspecifications in the outcome model or the treatment model, we would ideally want an estimator that is robust to misspecifications in the exposure mapping and/or the probability of exposure model.
Lastly, the inclusion of covariates was not discussed in this work, but one would likely benefit from including information from covariates when estimating treatment effects and can better quantify treatment effect heterogeneity.
Again, we could possibly leverage the work of \citet{aronow2017estimating} who proposed linear unbiased estimators for treatment effects using models that account for the covariates.

In summary, we characterized the set of linear unbiased estimators under the assumption of additive exposures. 
We further specified conditions of the supports of estimators that lead to MIV LUEs with non-zero weights on all exposures in the support.
Using these proposed MIV LUEs, we saw an added benefit of incorporating information from all units as opposed to two-term LUEs which only place non-zero weight on units with exposures in the estimand of interest.

\section*{Acknowledgements}
This work was supported in part by the Air Force Research Laboratory and DARPA under agreement numbers FA8750-18-2-0035 and FA8750-20-2-1001. The U.S. Government is authorized to reproduce and distribute reprints for Governmental purposes notwithstanding any copyright notation thereon.
Any opinions, findings, and conclusions or recommendations expressed in this material are those of the author(s) and do not necessarily reflect the views of the supporting institutions.

\bibliographystyle{plainnat}

\bibliography{ref.bib}

\appendix

\section{Linear Unbiased Constraints}\label{appendix:lue_constraint}

\begin{proof}[Proof of Proposition \ref{constraints}]
Let $\hat{\theta}_{1,m_1}$ be a linear estimator for the unit-level causal effect for a unit $i$ whose weights only depend on the unit's exposure, i.e. $\hat{\theta}_{1,m_1} = w(\Vec{e})Y(\Vec{e})$, where $\Vec{e}^{\,obs}_i = \Vec{e}$.
Without the loss of generality, we assume that the exposure component of interest is the first one.
The parameter of interest for the unit-level causal effect of the first exposure component being $m_1$ versus zero is denoted as $\theta_{1,m_1}$. 
The expected value of the estimator under additivity is given as follows:
\begin{align*}
    \E(\hat{\theta}_{1,m_1}) &= \sum_{\Vec{e} \in \mathcal{E}} p(\Vec{e})w(\Vec{e})Y(\Vec{e}) \\
    &= \sum_{\Vec{e} \in \mathcal{E}} p(\Vec{e})w(\Vec{e})\bigg[\alpha + \theta_{1,m_1}\mathbb{I}\{e_1 = m_1\} \\
    &\left.+ \sum_{m=1}^{m_1-1} \theta_{1,m}\mathbb{I}\{e_1 = m\} + \sum_{k=2}^K\sum_{j_k=1}^{m_k}\theta_{k,j_k}\mathbb{I}\{e_k = j_k\}\right].
\end{align*}
In order for $\E(\hat{\theta}_{1,m_1}) = \theta_{1,m_1}$, we need:
\begin{align*}
    &\sum_{\Vec{e} \in \mathcal{E}}p(\Vec{e})w(\Vec{e}) = 0\\
    &\sum_{\Vec{e} \in \mathcal{E}}p(\Vec{e})w(\Vec{e})\mathbb{I}\{e_1 = m_1\} = 1\\
    \forall m: m \in \{1, \dotsc, m_1-1\} & \sum_{\Vec{e} \in \mathcal{E}}p(\Vec{e})w(\Vec{e})\mathbb{I}\{e_1 = m\} = 0 \\
    \forall k,j_k: k \in \{2, \dotsc, K\}, j_k \in \{1, \dotsc, m_k\} & \sum_{\Vec{e} \in \mathcal{E}}p(\Vec{e})w(\Vec{e})\mathbb{I}\{e_k = j_k\} = 0.
\end{align*}
These give us the constraints needed for unbiasedness.
\end{proof}

\section{Affine Basis for Set of LUEs $\mathcal{U}$}\label{appendix:affine_basis_for_lue}
We first define notation for the weight of exposures in an estimator.
\begin{definition}[Weight of Exposure in Estimator in $\mathcal{M}$]
Let $\hat{\theta} \in \mathcal{M}$.
We denote the weight on the Horvitz Thompson term associated with exposure $\Vec{e} \in \mathcal{E}$ as $f_{\hat{\theta}}(\Vec{e}): \mathcal{E} \to \{-1, 0, 1\}$. 
\end{definition}

\subsection{Construction and Affine Independence of $\mathcal{M}$}\label{appendix:proof_of_affine_ind_of_m}

\begin{proof}[Proof of Lemma \ref{construction}]
Let $\mathcal{M}$ be the set of estimators described in Theorem \ref{construction}. 
Note that all estimators in $\mathcal{M}$ are MALUEs.
Let $\hat{\theta} = \sum_{\tilde{\theta} \in \mathcal{M}} g(\tilde{\theta})\tilde{\theta}$ where $\sum_{\tilde{\theta} \in \mathcal{M}}g(\tilde{\theta}) = 1$. 
Assume $\hat{\theta} \in \mathcal{M}$. 
If $\mathcal{M}$ is affine independent, then it implies that 
\begin{align*}
    g(\tilde{\theta}) = \begin{cases}1, & \text{if } \tilde{\theta} = \hat{\theta}\\
    0, & \text{otherwise} 
    \end{cases}.
\end{align*} 
Since $\tilde{\mathcal{E}}$ is ordered and each MALUE in $\mathcal{M}$ is uniquely identified by the exposures in $\tilde{\mathcal{E}}$, there is a natural ordering of the corresponding MALUEs.
We prove that $\mathcal{M}$ is affine independent using induction.
Consider each of the MALUEs $\tilde{\theta} \in M$.

\noindent \textbf{Base Case: $j = 1$} \newline
Consider the first MALUE $\tilde{\theta}^{(1)} \in \mathcal{M}$. 
In particular, we have the MALUE:
\begin{align*}
    \tilde{\theta}^{(1)} =&\phantom{+}  HT_{(m_1, m_2, \dotsc,m_K)} - HT_{(m_1-1, m_2, \dotsc, m_K)} \\
    &+ HT_{(m_1-1, 0, m_3, \dotsc, m_K)} - HT_{(0, 0, m_3, \dotsc, m_K)}
\end{align*}
Based on the construction of $\mathcal{M}$, $\tilde{\theta}$ is a unique MALUE for which $\Vec{e}^{\,(1)} \in \mathrm{supp}(\tilde{\theta})$ where $\Vec{e}^{\,(1)} = (m_1 - 1, m_2, \dotsc, m_K)$.
Thus, if $\Vec{e}^{\,(1)} \in \mathrm{supp}(\hat{\theta})$, then $\tilde{\theta} = \hat{\theta}$, i.e. $g(\tilde{\theta}) = 1$. 
Otherwise, $g(\tilde{\theta}) = 0$.

\noindent \textbf{Induction Hypothesis:} Now assume that for $j \in \{2, \dotsc, u\}$, the weight $g$ for the $j$th MALUE in $\mathcal{M}$ is given by Equation~\eqref{def_of_g}.

\noindent \textbf{Case:} $j = u+1$ \newline
Now consider the $(u+1)$th estimator $\tilde{\theta}^{(u+1)} \in \mathcal{M}$. 
From the definition of the estimators in $\mathcal{M}$, $\tilde{\theta}^{(u+1)}$ is uniquely identified by an exposure $\Vec{e}^{\,(u+1)} \in \tilde{\mathcal{E}}$ where either $e_1^{\,(u+1)} \in \{1, \dotsc, m_1-1\}$ or $e_1^{\,(u+1)}  = m_1$, depending on whether $\tilde{\theta}^{(u+1)}$ is a four-term or two-term MALUE, respectively. 
If $\Vec{e}^{\,(u+1)} \notin \mathrm{supp}(\hat{\theta})$, then $g(\tilde{\theta}^{(u+1)}) = 0$. 
Otherwise, we consider the different cases.

If for all $j'$th MALUEs $\tilde{\theta}^{(j')} \in \mathcal{M}$ for $j' < j$ we have $g(\tilde{\theta}^{(j')}) = 0$ and $\Vec{e}^{\,(u+1)} \in \mathrm{supp}(\hat{\theta})$, then $g(\tilde{\theta}^{(u+1)}) = 1$. 
This is because $\mathcal{M}$ is ordered and since $\tilde{\theta}^{(u+1)}$ is a MALUE, the exposure components in the exposures corresponding to each of the HT terms are simultaneously decreasing.
Hence, $\tilde{\theta}^{(u+1)}$ is the last MALUE in $\mathcal{M}$ for which $\Vec{e}$ is in its support. 
Since all previous MALUEs $\tilde{\theta}^{(j')}$ have weight 0, then $\tilde{\theta}^{(u+1)}$ must be equal to $\hat{\theta}$, i.e. $g(\tilde{\theta}^{(u+1)}) = 1$.  

If there exists a MALUE $\tilde{\theta}^{(j')} \in \mathcal{M}$ for $j' < j$ such that $g(\tilde{\theta}^{(j')}) = 1$, it means that $\tilde{\theta}^{(j')} = \hat{\theta}$. 
Because of the induction hypothesis, there can be at most one estimator before the $(u+1)$th estimator that has non-zero weight (namely the estimator equal to $\hat{\theta}$). 
If $\Vec{e}^{\,(u+1)} \in \mathrm{supp}(\hat{\theta})$, then $\Vec{e}^{\,(u+1)} \in \mathrm{supp}(\tilde{\theta}^{(j')})$.
Furthermore, $f_{\hat{\theta}}(\Vec{e}^{\,(u+1)}) = f_{\tilde{\theta}^{(j')}}(\Vec{e}^{\,(u+1)})$.
If $g(\tilde{\theta}^{(u+1)}) \neq 0$, the weight $f_{\tilde{\theta}^{(j')} + g(\tilde{\theta}^{(u+1)}) \tilde{\theta}^{(u+1)}}(\Vec{e}^{\,(u+1)})$ is given by  $f_{\tilde{\theta}^{(j')}}(\Vec{e}^{\,(u+1)}) + g(\tilde{\theta}^{(u+1)})f_{\tilde{\theta}^{(u+1)}}(\Vec{e}^{\,(u+1)})$. 
Since $\tilde{\theta}^{(u+1)}$ is the last MALUE in $\mathcal{M}$ with exposure $\Vec{e}^{\,(u+1)}$ in its support, there are no other MALUEs for which we cancel the extra weight on the Horvitz Thompson term with exposure $\Vec{e}^{\,(u+1)}$. 
Thus, if $g(\tilde{\theta}^{(u+1)}) \neq 0$, we have $f_{\tilde{\theta}^{(j')}}(\Vec{e}^{\,(u+1)}) + g(\tilde{\theta}^{(u+1)})f_{\tilde{\theta}^{(u+1)}}(\Vec{e}^{\,(u+1)}) \neq f_{\tilde{\theta}^{(j')}}(\Vec{e}^{\,(u+1)}) = f_{\hat{\theta}}(\Vec{e}^{\,(u+1)})$. 
This leads to biasedness.
Hence, $g(\tilde{\theta}^{(u+1)})$ must be 0. 

For all estimators in $\tilde{\theta} \in \mathcal{M}$, the weights $g(\tilde{\theta})$ are defined as in Equation~\eqref{def_of_g}. 
Thus, if $\hat{\theta} \in \mathcal{M}$, then $\hat{\theta}$ cannot be written as an affine combination of estimators $\tilde{\theta}$ in $\mathcal{M}$, i.e. the set of estimators $\mathcal{M}$ are affine independent. 

\end{proof}

\subsection{Affine Basis for $\mathcal{U}$}\label{appendix:proof_affine_basis_for_lue}

\begin{proof}[Proof of Theorem \ref{affine_basis_thm}]
First, we want to show that $\hat{\Theta}$ is affine independent. 
Consider the support of $\hat{\Theta}$:
\begin{align}
    \mathrm{supp}(\hat{\Theta}) &= \{\Vec{e}: \Vec{e} \in \mathcal{E}, e_1 = m_1\} \nonumber\\
    &\cup \{\Vec{e}: \Vec{e} \in \mathcal{E}, e_1 \in \{1, \dotsc, m_1-1\}, \exists k \in \{2, \dotsc, K\} \text{ s.t. } e_k \neq 0\} \nonumber \\
    &\cup \{\Vec{e}: \Vec{e} \in \mathcal{E}, e_1 = 0, \exists k,k' \in \{2, \dotsc, K\} \text{ s.t. } e_k \neq 0, e_{k'} \neq 0\}.
\end{align}
Note that each estimator in $\hat{\Theta}$ can be uniquely identified by an exposure.
In particular, each two-term estimator in $\mathcal{M}$ can be uniquely identified by an exposure where $e_1 = m_1$. 
Each four-term estimator in $\mathcal{M}$ can be uniquely identified by an exposure where $e_1 \in \{1, \dotsc, m_1-1\}$ and there is at least another non-zero exposure.
Each estimator in $\mathcal{Z}$ can be uniquely identified by an exposure where $e_1 = 0$ and there are at least two other non-zero exposure components. 

Order the set of exposures such that the set of exposures with $e_1 \in \{1, \dotsc, m_1-1\}$ are first, then the exposures with $e_1 = m_1$, and finally $e_1 = 0$. 
Within each subset of exposures in $\mathrm{supp}(\hat{\Theta})$, order the exposures in a reverse reflected lexicographic manner by uniquely identifying exposure. 
Recall that each estimator in $\mathcal{M}$ is a MALUE, and so the exposures, and their corresponding Horvitz-Thompson terms, can be arranged such that the exposure components are simultaneously non-increasing.
Furthermore, exposures, and their corresponding Horvitz-Thompson terms, in the support of estimators in $\mathcal{Z}$ can also be ordered.
In particular, exposures in the support of an estimator in $\mathcal{Z}$ can be arranged, in increasing order, according to the reverse reflected lexicographic order.
Hence, the zero estimators are also monotonic.
Because of the ordering of $\mathrm{supp}(\hat{\Theta})$ and the monotonicity of the estimators, each estimator $\tilde{\theta} \in \hat{\Theta}$ is the last estimator in $\hat{\Theta}$ for which the uniquely identifying exposure is in its support. 
For example, the two term estimator $HT_{(m_1, e_2, \dotsc, e_K)} - HT_{(0, e_2, \dotsc, e_K)}$, where $e_k \in \{0, \dotsc, m_k\}$, is the last estimator for which $(m_1, e_2, \dotsc, e_K)$ is in its support.

Now, let $\hat{\theta} = \sum_{\tilde{\theta} \in \hat{\Theta}} g(\tilde{\theta})\tilde{\theta}$, where $\sum_{\tilde{\theta} \in \hat{\Theta}} g(\tilde{\theta}) = 1$.
Suppose that $\hat{\theta} \in \hat{\Theta}$.
Based on the ordering of the estimators, we can extend the proof of Theorem \ref{construction} to here. 
By doing so, we argue that $g$ is given by:
\begin{align*}
    g(\tilde{\theta}) = \begin{cases}1, & \text{if } \tilde{\theta} = \hat{\theta}\\
    0, & \text{otherwise} 
    \end{cases}.
\end{align*} 
Hence, $\hat{\Theta}$ is affine independent.

Next, we show that $\hat{\Theta}$ spans the set of LUE. 
To do so, we determine the dimension of $\hat{\Theta}$ and the dimension of $\mathcal{U}$.
The number of estimators in $\tilde{\theta} \in \hat{\Theta}$ is equal to the number of uniquely identifying exposures. 
That is,
\begin{align}
    |\hat{\Theta}| &= \left(\underbrace{\prod_{k=2}^K (m_k + 1)}_{\text{two term estimators}}\right) + \left(\underbrace{(m_1 - 1)\left[\prod_{k=2}^K(m_k + 1) - 1\right]}_{\text{four term estimators}}\right) \nonumber \\
    &+ \left(\underbrace{\prod_{k=2}^K(m_k + 1) - 1 - \sum_{k=2}^K m_k}_{\text{zero estimators}}\right) \nonumber \\
    &= \prod_{k=1}^K(m_k + 1)  - \sum_{k=1}^K m_k.
\end{align}
However, since the weights of the estimators must sum to 1, there is one less free dimension.
Thus, the dimension of the affine space is 
\begin{align*}
    \prod_{k=1}^K(m_k + 1)  - \sum_{k=1}^K m_k - 1.
\end{align*}
The dimension of the LUEs is determined by the number of exposures minus the number of constraints.
Hence, we have,
\begin{align}
    |\mathcal{U}| &= \left(\underbrace{\prod_{k=1}^K(m_k + 1)}_{\text{number of exposures}}\right) - \left(\underbrace{\left[\sum_{k=1}^K m_k\right] + 1}_{\text{number of constraints}}\right) \nonumber \\
    &= \prod_{k=1}^K(m_k + 1) - \sum_{k=1}^K m_k - 1.
\end{align}
So $\hat{\Theta}$ is an affine independent set with dimension equal to the dimension of the set of LUEs, i.e. $\hat{\Theta}$ spans the set of LUEs.
Hence, the set $\hat{\Theta}$ is an affine basis for the set of LUEs.
\end{proof}

\section{MIV LUEs}
\subsection{Optimization Problem} \label{appendix:mivlue_matrix_problem}
We solve the following optimization problem to find weights for exposures for an LUE that has minimum integrated variance. 
First, let $\alpha \in \Theta$ be the parameter corresponding to the baseline (i.e. when all exposure component values are 0) and $\theta_{k,j_k} \in \Theta$ be parameters corresponding to the $k$th exposure component, where $k \in \{1, \dotsc, K\}$, when it is equal to $j_k$ versus zero, where $j_k \in \{1, \dotsc, m_k\}$. 
Suppose the parameter of interest is $\theta_{1, m_1} \in \Theta$.
To find a MIV LUE for a given prior, we find weights such that the integrated variance is minimized with respect to the linear unbiased constraints. 
Therefore, the optimization problem becomes:

\begin{align*}
    \mathcal{L} =&\phantom{+} \frac{1}{2}\int_{\Theta}\sum_{\Vec{e}}p(\Vec{e})\left(w(\Vec{e})Y(\Vec{e}) - \theta_{1,m_1}\right)^2\pi(\theta')d\theta' \\
    &+ \lambda_1\left(1 - \sum_{\Vec{e}}p(\Vec{e})w(\Vec{e})\mathbbm{I}\{e_1 = m_1\}\right)\\
    &- \sum_{m = 1}^{m_1 - 1}\lambda_{2,m}\left(\sum_{\Vec{e}}p(\Vec{e})w(\Vec{e})\mathbbm{I}\{e_1 = m\}\right) - \lambda_3\sum_{\Vec{e}}p(\Vec{e})w(\Vec{e}) \\
    &- \sum_{k = 2}^K\sum_{j_k = 1}^{m_k} \lambda_{4, k, j_k}\sum_{\Vec{e}}p(\Vec{e})w(\Vec{e})\mathbbm{I}\{e_k = j_k\}\\
    &= \frac{1}{2}\left[\sum_{\Vec{e}}p(\Vec{e})w(\Vec{e})^2\int_{\Theta}Y(\Vec{e})^2\pi(\theta')d\theta' + \sum_{\Vec{e}}p(\Vec{e})\int_{\Theta}{\theta^2_{1,m_1}}\pi(\theta')d\theta'\right. \\
    &- \left.2\int_{\Theta}{\theta_{1,m_1}}\sum_{\Vec{e}}p(\Vec{e})w(\Vec{e})Y(\Vec{e})\pi(\theta')d\theta' \right] \\
    &+ \lambda_1\left(1 - \sum_{\Vec{e}}p(\Vec{e})w(\Vec{e})\mathbbm{I}\{e_1 = m_1\}\right) \\
    &- \sum_{m = 1}^{m_1 - 1}\lambda_{2,m}\left(\sum_{\Vec{e}}p(\Vec{e})w(\Vec{e})\mathbbm{I}\{e_1 = m\}\right)- \lambda_3\sum_{\Vec{e}}p(\Vec{e})w(\Vec{e})\\
    &- \sum_{k = 2}^K\sum_{j_k = 1}^{m_k} \lambda_{4, k, j_k}\sum_{\Vec{e}}p(\Vec{e})w(\Vec{e})\mathbbm{I}\{e_k = j_k\}\\
    &= \frac{1}{2}\left[\sum_{\Vec{e}}p(\Vec{e})w(\Vec{e})^2 \V(Y(\Vec{e})) - \int_{\Theta}{\theta^2_{1,m_1}}\pi(\theta')d\theta' \right]\\
    &+ \lambda_1\left(1 - \sum_{\Vec{e}}p(\Vec{e})w(\Vec{e})\mathbbm{I}\{e_1 = m_1\}\right)\\
    & - \sum_{m = 1}^{m_1 - 1}\lambda_{2,m}\left(\sum_{\Vec{e}}p(\Vec{e})w(\Vec{e})\mathbbm{I}\{e_1 = m\}\right)\\
    &- \lambda_3\sum_{\Vec{e}}p(\Vec{e})w(\Vec{e}) - \sum_{k = 2}^K\sum_{j_k = 1}^{m_k} \lambda_{4, k, j_k}\sum_{\Vec{e}}p(\Vec{e})w(\Vec{e})\mathbbm{I}\{e_k = j_k\}
\end{align*}

Taking the derivatives with respect to exposure $\Vec{e} \in \mathcal{E}$ and setting the derivative to zero yields

\begin{align*}
    0 &= %
    p(\Vec{e})w(\Vec{e})\V(Y(\Vec{e})) - \lambda_1 p(\Vec{e})\mathbbm{I}\{e_1 = m_1\} \\
    &- \sum_{m = 1}^{m_1 - 1}\lambda_{2,m} p(\Vec{e})\mathbbm{I}\{e_1 = m\}  - \lambda_3 p(\Vec{e}) - \sum_{k = 2}^K\sum_{j_k = 1}^{m_k} \lambda_{4, k, j_k}p(\Vec{e})\mathbbm{I}\{e_k = j\}\\
    \Rightarrow &p(\Vec{e})w(\Vec{e})\V(Y(\Vec{e})) = \lambda_1 p(\Vec{e})\mathbbm{I}\{e_1 = m_1\} + \sum_{m = 1}^{m_1 - 1}\lambda_{2,m} p(\Vec{e})\mathbbm{I}\{e_1 = m\} \\
    &+ \lambda_3 p(\Vec{e}) + \sum_{k = 2}^K\sum_{j_k = 1}^{m_k} \lambda_{4, k, j_k}p(\Vec{e})\mathbbm{I}\{e_k = j_k\},\\
\end{align*}
and hence
\begin{align}\label{mivlue_wt_eq}
    &w(\Vec{e}) = \\
    &\frac{\lambda_1 \mathbbm{I}\{e_1 = m_1\} + \sum_{m = 1}^{m_1 - 1}\lambda_{2,m} \mathbbm{I}\{e_1 = m\} + \lambda_3 + \sum_{k = 2}^K\sum_{j_k = 1}^{m_k} \lambda_{4, k, j_k}\mathbbm{I}\{e_k = j_k\}}{\V(Y(\Vec{e}))} \notag
\end{align}

The MIV LUE problem is equivalent to matrix problem described below.
Let matrix $\mathbf{P} =\begin{pmatrix}
    \mathbf{W} & \mathbf{C}^T \\
    \mathbf{C} & \mathbf{0}
   \end{pmatrix}$ 
be a block matrix of dimension $\left(|\mathcal{E}| + \sum_{k=1}^K m_k + 1 \right) \times \left(|\mathcal{E}| + \sum_{k=1}^K m_k + 1 \right)$, where $\mathbf{W}$ is a $|\mathcal{E}| \times |\mathcal{E}|$ diagonal matrix and the $j$th diagonal entry corresponding to $\Vec{e}_j$ for $j \in \{1, \dotsc, |\mathcal{E}|\}$ is 
\begin{align*}
\mathbf{W}_{j,j} = p(\Vec{e}_j)\V(Y(\Vec{e}_j))
\end{align*}
and $\mathbf{C}$ is a $|\Theta| \times |\mathcal{E}|$ matrix of constraints given by Proposition \ref{constraints} where the rows correspond to the parameters in $\Theta$ and the columns correspond to the exposures.
The entries of $\mathbf{C}$ are of the form $p(\Vec{e}_j)\mathbbm{I}\{\theta_{k,l}\in \Vec{e}_j\}$, where $j$ indexes in the column and $k,l$ indexes in the rows.
The notation $\theta_{k,j_k} \in \Vec{e}$ means that the parameter $\theta_{k,j_k}$ is a summand in the summation of parameters that equals to the potential outcome given exposure $\Vec{e}$.

Letting $\mathbf{b}$ be a $|\mathcal{E}| + \sum_{k=1}^K m_k + 1$-dimensional vector that is zero except for the $|\mathcal{E}|+1$ entry being one, the MIVLUE problem is equivalent to
\begin{align}\label{eq:full_matrix_eq}
    \begin{pmatrix}
        \mathbf{W} & \mathbf{C}^T \\
        \mathbf{C} & \mathbf{0}
    \end{pmatrix} \begin{pmatrix}
        \mathbf{w} \\ \lambda
    \end{pmatrix}  = \mathbf{b}.
\end{align}
Hence, the solution to the MIVLUE probem is $\mathbf{w} = \left(\mathbf{P}^{-1}\mathbf{b}\right)$.

\subsection{Characterizing MIV LUEs}\label{appendix:characterizing_mivlue}

\begin{proof}[Proof of Lemma \ref{var_cov_matrix_lemma}]
Let $\eta \in \mathbb{R}$ be such that $\eta \to \infty$ and $B \in \mathbb{R}^{|{\Theta}| \times |{\Theta}|}$ be a positive semi-definite matrix where elements $0<b_{k,j}<\infty$ are small. 
Now consider the variance-covariance matrix $\tilde{\mathbf{\Sigma} }_{\eta} = \eta \mathbf{\Sigma}  + B$. 
Then:
\begin{align}
    \lim_{\eta \to \infty} v_1^{T}\tilde{\mathbf{\Sigma} }_{\eta} v_1 &= \lim_{\eta \to \infty} v_1^{T}(\eta \mathbf{\Sigma}  + B)v_1 \nonumber \\
    &= \lim_{\eta \to \infty} \eta \underbrace{v_1^T \mathbf{\Sigma} v_1}_{=0} + v_1^TBv_1 \nonumber \\
    &= v_1^TBv_1 < \infty
\end{align}

\begin{align}
    \lim_{\eta \to \infty} v_2^T\tilde{\mathbf{\Sigma} }_{\eta} v_2 &= \lim_{\eta \to \infty} v_2^T(\eta \mathbf{\Sigma}  + B)v_2 \nonumber \\
    &= \lim_{\eta \to \infty} \eta \underbrace{v_2^T \mathbf{\Sigma}  v_2}_{=a} + v_2^TBv_2 \nonumber \\
    &= \lim_{\eta \to \infty} \underbrace{\eta}_{\to \infty}a + \underbrace{v_2^T B v_2}_{<\infty} \to \infty.
\end{align}
\end{proof}

\begin{Lemma}\label{lemma:p_matrix_full_rank}
Let the design $p$ be such that $p(\Vec{e}) > 0$ for all $\Vec{e} \in \mathcal{E}$ and let $\V(Y(\Vec{e})) > 0$ for all $\Vec{e} \in \mathcal{E}$. 
The matrix 
    $\mathbf{P} =\begin{pmatrix}
        \mathbf{W} & \mathbf{C}^T \\
        \mathbf{C} & \mathbf{0}
       \end{pmatrix}$
is full rank.
\end{Lemma}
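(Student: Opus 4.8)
The plan is to recognize $\mathbf{P}$ as a saddle-point (KKT) matrix and apply the standard criterion that such a matrix is nonsingular whenever its $(1,1)$ block is symmetric positive definite and its off-diagonal block has full row rank. Under the hypotheses, $\mathbf{W}$ is diagonal with strictly positive entries $\mathbf{W}_{j,j} = p(\Vec{e}_j)\V(Y(\Vec{e}_j)) > 0$, hence symmetric positive definite. It therefore suffices to show that $\mathbf{C}$ has full row rank, i.e.\ that its $|\Theta|$ rows are linearly independent.

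Granting that, the nonsingularity of $\mathbf{P}$ follows from a short kernel computation. Suppose $\mathbf{P}(\mathbf{x}^T, \mathbf{y}^T)^T = \mathbf{0}$, so that $\mathbf{W}\mathbf{x} + \mathbf{C}^T\mathbf{y} = \mathbf{0}$ and $\mathbf{C}\mathbf{x} = \mathbf{0}$. Since $\mathbf{W}$ is invertible, the first equation gives $\mathbf{x} = -\mathbf{W}^{-1}\mathbf{C}^T\mathbf{y}$, and substituting into the second yields $\mathbf{C}\mathbf{W}^{-1}\mathbf{C}^T\mathbf{y} = \mathbf{0}$. Left-multiplying by $\mathbf{y}^T$ gives $\|\mathbf{W}^{-1/2}\mathbf{C}^T\mathbf{y}\|^2 = 0$, and as $\mathbf{W}^{-1/2}$ is invertible we obtain $\mathbf{C}^T\mathbf{y} = \mathbf{0}$. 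Full row rank of $\mathbf{C}$ makes $\mathbf{C}^T$ injective, so $\mathbf{y} = \mathbf{0}$ and then $\mathbf{x} = \mathbf{0}$. Hence the kernel of $\mathbf{P}$ is trivial and $\mathbf{P}$ is full rank.

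The main content is thus the full-row-rank claim for $\mathbf{C}$. The key observation is that column $j$ of $\mathbf{C}$ equals $p(\Vec{e}_j)$ times the $0/1$ vector recording which parameters appear in the additive decomposition of $Y(\Vec{e}_j)$ from Equation~\eqref{eq:po_sum_of_param}. Since each $p(\Vec{e}_j) > 0$, scaling columns by positive scalars leaves the row rank unchanged, so it suffices to establish full row rank for the $0/1$ matrix $\tilde{\mathbf{C}}$ with entries $\tilde{\mathbf{C}}_{k,j} = \mathbb{I}\{\theta_k \in \Vec{e}_j\}$. I would do this by exhibiting a nonsingular $|\Theta| \times |\Theta|$ submatrix. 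Take the $|\Theta| = 1 + \sum_{k=1}^K m_k$ \emph{single-component} exposures: the baseline $\Vec{0}$ together with, for each parameter $\theta_{k,j_k}$, the exposure having $j_k$ in its $k$th slot and zeros elsewhere. By the decomposition, $\alpha$ contributes to every exposure while $\theta_{k,j_k}$ contributes to a single-component exposure precisely when that exposure is the one matching it. Ordering the rows as $(\alpha, \theta_{1,1}, \dotsc, \theta_{K,m_K})$ and the columns as $(\Vec{0}, \text{matching single-component exposures})$ in the same order, the resulting square submatrix is upper triangular with ones on the diagonal: the $\alpha$ row is all ones, and each $\theta_{k,j_k}$ row has its unique $1$ in its matching column and $0$ elsewhere among these columns. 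Its determinant is $1$, so these $|\Theta|$ rows are linearly independent and $\mathbf{C}$ has full row rank.

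The only genuine subtlety is bookkeeping: verifying that the chosen single-component exposures all lie in $\mathcal{E}$ and that there are exactly $|\Theta|$ of them, so that the submatrix is truly square; the triangular structure then makes linear independence immediate. The saddle-point reduction is routine linear algebra, so I expect no serious obstacle beyond this counting.
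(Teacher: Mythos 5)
Your proof is correct, and its overall skeleton matches the paper's: both arguments reduce nonsingularity of $\mathbf{P}$ to (i) positive definiteness of $\mathbf{W}$ and (ii) full row rank of $\mathbf{C}$, and then run the standard saddle-point kernel computation (the paper works with $c^T\mathbf{P} = 0$ and inverts $\mathbf{C}\mathbf{W}^{-1}\mathbf{C}^T$ directly; you quotient through $\|\mathbf{W}^{-1/2}\mathbf{C}^T\mathbf{y}\|^2 = 0$, which is the same fact packaged slightly more cleanly). Where you genuinely diverge is the key sub-step (ii). The paper justifies full row rank of $\mathbf{C}$ by a minimality argument: if the rows were dependent, one of the unbiasedness constraints in Proposition~\ref{constraints} would be redundant, contradicting the claim that removing any constraint admits biased linear estimators. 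That argument is short but leans on an unproved assertion that the constraint set is minimal. You instead prove it constructively: after stripping the positive column scalings $p(\Vec{e}_j)$, you exhibit the $|\Theta|\times|\Theta|$ submatrix indexed by the baseline exposure $\Vec{0}$ and the single-component exposures, which (in your ordering) has the block form
\begin{equation*}
\begin{pmatrix} 1 & \mathbf{1}^T \\ \mathbf{0} & I \end{pmatrix},
\end{equation*}
upper triangular with unit diagonal, hence nonsingular. This uses the paper's standing convention that $\mathcal{E}$ is the full product $\{0,\dotsc,m_1\}\times\dotsb\times\{0,\dotsc,m_K\}$, so all single-component exposures exist and the count is exactly $|\Theta| = 1 + \sum_k m_k$. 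Your route is more self-contained and verifiable; the paper's is shorter but pushes the real content into an informal claim about constraint non-redundancy. Both are valid, and yours arguably patches the one soft spot in the paper's version.
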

\begin{proof}
First $\mathbf{W}$ is full rank since it is diagonal with positive diagonal entries.
Furthermore, $\mathbf{C}$ has full row rank because otherwise, the linear unbiased constraints given by Proposition \ref{constraints} are redundant. 
If the constraints are redundant, we can remove a constraint, but the constraints for unbiasedness are minimal.
That is, if we removed a constraint, there are linear estimators that satisfy the remaining constraints but are not unbiased. 
Hence, $\mathbf{C}$ must have full row rank in order to preserve unbiasedness. 

\noindent Suppose that $c = \begin{pmatrix} c_{\mathcal{E}} \\ c_{\Theta} \end{pmatrix}\in \mathbb{R}^{|\mathcal{E}| + |\Theta|}$ satisfies $c^T\mathbf{P}=0$ where $c_{\mathcal{E}}$ is a vector of length $|\mathcal{E}|$ and $c_{\Theta}$ is a vector of length $|\Theta|$.
Then:
\begin{align*}
    c^T\mathbf{P} = c^T \begin{pmatrix}
        \mathbf{W} & \mathbf{C}^T \\
        \mathbf{C} & \mathbf{0}
       \end{pmatrix} = \begin{pmatrix}
        c_{\mathcal{E}}^T \mathbf{W} + c_{\Theta}^T \mathbf{C} \\
        c_{\mathcal{E}}^T \mathbf{C}^T
       \end{pmatrix} = \begin{pmatrix}
           \mathbf{0}_{|\mathcal{E}| \times 1} \\
           \mathbf{0}_{|\Theta| \times 1}
       \end{pmatrix}.
\end{align*}
We then solve for $c_{\mathcal{E}}$ and $c_{\Theta}$:
\begin{align*}
    \mathbf{0}_{|\mathcal{E}| \times 1} &= c_{\mathcal{E}}^T \mathbf{W} + c_{\Theta}^T \mathbf{C} \implies c_{\mathcal{E}}^T = -c_{\Theta}^T \mathbf{C} \mathbf{W}^{-1}, \\
    \mathbf{0}_{|\Theta| \times 1} &=  c_{\mathcal{E}}^T \mathbf{C}^T =  -c_{\Theta}^T \mathbf{C} \mathbf{W}^{-1} \mathbf{C}^T \implies c_{\Theta}^T = \mathbf{0}_{|\Theta| \times 1},
\end{align*}
where in the first line, we can take the inverse of $\mathbf{W}$ since it is full rank and has positive diagonal entries, and in the second line, we multiply both sides by $(\mathbf{C} \mathbf{W}^{-1} \mathbf{C}^T)^{-1}$ where $\mathbf{C} \mathbf{W}^{-1} \mathbf{C}^T$ is full rank since $\mathbf{C}$ has full row rank and $\mathbf{W}$ is full rank and has positive diagonal entries.
Since $c_{\Theta}^T = \mathbf{0}_{|\Theta| \times 1}$, then $c_{\mathcal{E}}^T = \mathbf{0}_{|\mathcal{E}| \times 1}$.
Hence, all the rows in $\mathbf{P}$ are linearly independent, and since $\mathbf{P}$ is a square matrix, $\mathbf{P}$ is full rank. 
\end{proof}

\begin{proof}[Proof of Theorem \ref{mivlue_thm}]
The proof will proceed as follows.
We will first partition the set of parameters and exposures into five sets.
We will then show that if a specific one of these sets is empty, then $\mathrm{supp}(\hat{\theta}) \subseteq \mathcal{E}'$.
We will also show that $\mathrm{supp}(\hat{\theta}) \subseteq \mathcal{E}'$ holds if that set is non-empty.
Finally, we will argue that $\mathrm{supp}(\hat{\theta}) = \mathcal{E}'$ under the conditions specified.

First, recall the definition of $\mathbf{P}$ from Eq.~\eqref{eq:full_matrix_eq}.
Given $\mathcal{E'} \subseteq \mathcal{E}$, we arrange the exposures in $\mathbf{P}$, and the corresponding rows in $\mathbf{b}$ and $\mathbf{w}$, such that exposures $\Vec{e}_1, \dotsc, \Vec{e}_{|\mathcal{E}'|} \in \mathcal{E}' \subseteq \mathcal{E}$ and 
$\Vec{e}_{|\mathcal{E}'| + 1}, \dotsc, \Vec{e}_{|\mathcal{E}|} \in \mathcal{E} \setminus \mathcal{E'}$. 
For each $\Vec{e}\in \mathcal{E}$, let $\Vec{v}_{\Vec{e}}\in\{0,1\}^{|\Theta|}$ be such that $\Vec{v}_{\Vec{e}}^T \Vec{v} = Y(\Vec{e})$, where $\Vec{v}$ is the vector of parameters.
Suppose that $\mathrm{span}\left(\{\Vec{v}_{\Vec{e}{\,'}}\}_{\Vec{e}{\,'} \in \mathcal{E'}}\right) \cap \{\Vec{v}_{\Vec{e}}\}_{\Vec{e} \in \mathcal{E}} = \{\Vec{v}_{\Vec{e}{\,'}}\}_{\Vec{e}{\,'} \in \mathcal{E'}}$.
Note that $\mathrm{span}\left(\{\Vec{v}_{\Vec{e}^{\,'}}\}_{\Vec{e}^{\,'} \in \mathcal{E'}}\right)$ is a linear subspace of $\mathbb{R}^{|\Theta|}$.
Then there exists a positive semi-definite matrix $\mathbf{\Sigma}$ such that $\mathrm{span}\left(\{\Vec{v}_{\Vec{e}^{\,'}}\}_{\Vec{e}^{\,'} \in \mathcal{E'}}\right) = \mathrm{Null}(\mathbf{\Sigma})$.
In particular, $\mathbf{\Sigma} = I - XX^T$, where the columns of $X$ are vectors that form an orthonormal basis for $\mathrm{span}\left(\{\Vec{v}_{\Vec{e}^{\,'}}\}_{\Vec{e}^{\,'} \in \mathcal{E'}}\right)$.
By Lemma \ref{var_cov_matrix_lemma}, there then exists a sequence of variance-covariance matrices $\tilde{\mathbf{\Sigma}}_{\eta}$ such that $\lim_{\eta\to\infty}\Vec{v}_{\Vec{e}_j}^T \tilde{\mathbf{\Sigma}}_{\eta} \Vec{v}_{\Vec{e}_j} < \infty$ for $j \in \{1, \dotsc, |\mathcal{E}'|\}$ and $\lim_{\eta\to\infty}\Vec{v}_{\Vec{e}_{j'}}^T \tilde{\mathbf{\Sigma}}_{\eta} \Vec{v}_{\Vec{e}_{j'}} = \infty$ for $j' \in \{|\mathcal{E}'| + 1, \dotsc, \mathcal{E}\}$.

Let the matrix $\mathbf{P}_{\eta}$ be the same as $\mathbf{P}$, but with rows rearranged as described below and where the variances are given by variance-covariance matrix $\tilde{\mathbf{\Sigma}}_{\eta}$.
The MIV LUE problem then becomes $\mathbf{P}_{\eta}^{-1} \mathbf{b} =  \mathbf{w}_{\eta}$.
Let $\lim_{\eta \to \infty} \mathbf{P}_{\eta}^{-1} \mathbf{b} = \mathbf{w}^*$.
We want to show that $\mathrm{supp}(\mathbf{w}^*) = \mathcal{E}'$, and we first argue that $\mathrm{supp}(\mathbf{w}^*) \subseteq \mathcal{E}'$.
Note that $\mathrm{supp}(\mathbf{w}^*) = \mathrm{supp}(\hat{\theta})$ since $\hat{\theta}$ is given by $\mathbf{w}^*$.

We define submatrices of $\mathbf{P}_{\eta}$ as follows. 
First, the diagonal matrix $\mathbf{W}$ can be decomposed as follows.
Denote $\mathbf{F}_{\eta}$ as the diagonal matrix of probabilities and variances for the potential outcomes that have a finite limiting variance:
\begin{align}
    \mathbf{F}_{\eta} = 
    \begin{bmatrix}
        p(\Vec{e}_{1}) \V(Y(\Vec{e}_{1}))_{\eta}  & \dotsb & 0\\
        \dotsb &  \dotsb  & \dotsb\\
        0 & \dotsb & p(\Vec{e}_{|\mathcal{E}'|}) \V(Y(\Vec{e}_{|\mathcal{E}'|}))_{\eta} 
    \end{bmatrix}%
\end{align}
Similarly, denote $\mathbf{N}_{\eta}$ as the analogous matrix for potential outcomes that have a non-finite limiting variance:
\begin{align}
    \mathbf{N}_{\eta} = \begin{bmatrix}
    p(\Vec{e}_{|\mathcal{E}'| + 1}) \V(Y(\Vec{e}_{|\mathcal{E}'| + 1}))_{\eta}  & \dotsb & 0\\
    \dotsb &  \dotsb  & \dotsb\\
    0 & \dotsb & p(\Vec{e}_{|\mathcal{E}|}) \V(Y(\Vec{e}_{|\mathcal{E}|}))_{\eta} 
    \end{bmatrix}%
\end{align}
Next, denote the following subsets of parameters.
Let $\Theta^N \subseteq \Theta$ denote the set of parameters where $\theta^N \in \Theta^N$ are such that $\theta^N \notin \Vec{e}^{\,'}$ for all $\Vec{e}^{\,'} \in \mathcal{E}'$.
Here, we write $\theta \notin \Vec{e}$ to mean that $\theta$ is not in the sum of parameters in $Y(\Vec{e})$. 
Note that for $\theta^{N} \in \Theta^{N}$, we have $\lim_{\eta \to \infty} \V(\theta^{N}) = \infty$ by Lemma \ref{var_cov_matrix_lemma}.
In addition, we will further divide the parameters in $\Theta^F = \Theta \setminus \Theta^{N}$ as $\Theta^F = \Theta^R \cup \Theta^{NR}$.
Specifically, $\Theta^{NR}$ will be a maximal subset of $\Theta^F$ such that the submatrix of $\bvar{C}$ with rows given by $\Theta^{NR}$ and columns given by $\mathcal{E}'$ is linearly independent.

The matrix $\bvar{C} \in \mathbb{R}^{|\Theta| \times |\mathcal{E}|}$ contains submatrices corresponding to the linear unbiased constraints.
We denote the constraint matrices as $\bvar{C}_{e}^{p}$, where the subscript corresponds to the set of exposures $e$ and the superscript corresponds to the set of parameters $p$.
For each $e \in \{N, F\}$ and $p \in \{N, NR, R\}$, we define $\bvar{C}_{e}^{p}$ to be the submatrix of $\bvar{C}$ containing linear unbiased constraints in which rows correspond to parameters in $\Theta^p$ and columns correspond to the exposures in $\mathcal{E}^e$.
Here, $\mathcal{E}^F = \mathcal{E'}$ and $\mathcal{E}^N = \mathcal{E} \setminus \mathcal{E'}$.

\noindent For example ${\mathbf{C}_N^{N}}$ is defined as follows:
\begin{align}
    {\mathbf{C}_N^{N}} = \begin{bmatrix}
    p(\Vec{e}_{|\mathcal{E}'| + 1})\mathbb{I}\{\theta^{N}_1 \in \Vec{e}_{|\mathcal{E}'| + 1}\}  & \dotsb &  p(\Vec{e}_{|\mathcal{E}|})\mathbb{I}\{\theta^{N}_1 \in \Vec{e}_{|\mathcal{E}|}\}  \\
    \dotsb &  \dotsb  & \dotsb\\
     p(\Vec{e}_{|\mathcal{E}'| + 1})\mathbb{I}\{\theta^{N}_{|\Theta^{N}|} \in \Vec{e}_{|\mathcal{E}'| + 1}\} & \dotsb & p(\Vec{e}_{|\mathcal{E}|})\mathbb{I}\{\theta^{N}_{|\Theta^{N}|} \in \Vec{e}_{|\mathcal{E}|}\}
    \end{bmatrix}_{|\Theta^{N}| \times (|\mathcal{E}| - |\mathcal{E}'|)}.
\end{align}
Altogether, this results in the equation $\mathbf{P}_{\eta}^{-1}\mathbf{b} = \mathbf{w}_{\eta}$, equivalently
\begin{align}\label{pw_b_eq_w_cfr}
    \begin{pmatrix}
    \mathbf{N}_{\eta} & {\mathbf{C}_N^{N}}^T  & \mathbf{0} &  {\mathbf{C}_N^{NR}}^T & {\mathbf{C}_N^{R}}^T\\
    {\mathbf{C}_N^{N}} & \mathbf{0} &   {\mathbf{C}_F^{N}} & \mathbf{0} & \mathbf{0}\\
    \mathbf{0} & {\mathbf{C}_F^{N}}^T  & \mathbf{F}_{\eta} & {\mathbf{C}_F^{NR}}^T & {\mathbf{C}_F^{R}}^T \\
    {\mathbf{C}_N^{NR}} &  \mathbf{0} & {\mathbf{C}_F^{NR}} & \mathbf{0} & \mathbf{0} \\
    \mathbf{C}_N^R & \mathbf{0} & \mathbf{C}_F^R & \mathbf{0} & \mathbf{0}
    \end{pmatrix}^{-1}
    \begin{pmatrix}
         \mathbf{0}_{|\mathcal{E}| - |\mathcal{E}'| \times 1}  \\
         \mathbf{0}_{|\Theta^{N}| \times 1} \\
         \mathbf{0}_{|\mathcal{E}'| \times 1}\\
         1 \\
         \mathbf{0}_{|\Theta^{NR}| - 1 \times 1} \\
         \mathbf{0}_{|\Theta^{R}| \times 1} 
    \end{pmatrix}
    = 
    \begin{pmatrix}
         \mathbf{w}(\Vec{e}^{N})_{\eta}  \\
         {\lambda}^{N}_{\eta} \\
         \mathbf{w}(\Vec{e}^{F})_{\eta}\\
         {\lambda_{1}}_{\eta} \\
         {\lambda}^{NR}_{\eta} \setminus {\lambda_1}_{\eta} \\
         {\lambda^R}_{\eta}
    \end{pmatrix},
\end{align}
where the 1 in $\mathbf{b}$ and ${\lambda_{1}}_{\eta}$ in $\mathbf{w}_{\eta}$ corresponds to the constraint for $\theta_{1,m_1} \in \Theta^{NR}$.

\begin{case}[Assume $\mathbf{\Theta^R = \emptyset}$:]
We first consider the case when $\Theta^{R} = \emptyset$, so we can consider the linear equation
\begin{align}\label{pw_b_eq_wo_cfr}
    \left(
    \begin{array}{cccc}
    \mathbf{N}_{\eta} & {\mathbf{C}_N^{N}}^T  & \mathbf{0} &  {\mathbf{C}_N^{NR}}^T \\
    {\mathbf{C}_N^{N}} & \mathbf{0} &   {\mathbf{C}_F^{N}} & \mathbf{0}\\
    \mathbf{0} & {\mathbf{C}_F^{N}}^T  & \mathbf{F}_{\eta} & {\mathbf{C}_F^{NR}}^T \\
    {\mathbf{C}_N^{NR}} &  \mathbf{0} & {\mathbf{C}_F^{NR}} & \mathbf{0}
    \end{array}
    \right)^{-1}  
    \left(
    \begin{array}{c}
         \mathbf{0}_{|\mathcal{E}| - |\mathcal{E}'| \times 1}  \\
         \mathbf{0}_{|\Theta^{N}| \times 1} \\
         \mathbf{0}_{|\mathcal{E}'| \times 1}\\
         1 \\
         \mathbf{0}_{|\Theta^{NR}| - 1 \times 1}
    \end{array}
    \right)
     = 
    \left(
    \begin{array}{c}
         \mathbf{w}(\Vec{e}^{N})_{\eta}  \\
         {\lambda}^{N}_{\eta} \\
         \mathbf{w}(\Vec{e}^{F})_{\eta}\\
         {\lambda_{1}}_{\eta} \\
         {\lambda}^{NR}_{\eta} \setminus {\lambda_1}_{\eta}
    \end{array}
    \right).
\end{align}
Note that since we can rearrange rows of $\mathbf{P}_{\eta}$ such that $\mathbf{P}_{\eta}$ has the form $\begin{pmatrix}
    \mathbf{W}_{\eta} & \mathbf{C}^T \\
    \mathbf{C} & \mathbf{0}
\end{pmatrix}$, 
where $\mathbf{W}_{\eta} = \begin{pmatrix} 
\mathbf{N}_{\eta} & \mathbf{0} \\
\mathbf{0} & \mathbf{F}_{\eta}
\end{pmatrix}$ and $\mathbf{C} = \begin{pmatrix}
    \mathbf{C}_{N}^N & \mathbf{C}_F^N \\
    \mathbf{C}_N^{NR} & \mathbf{C}_F^{NR}
\end{pmatrix}$,
and since variances are given by $\tilde{\mathbf{\Sigma}}_{\eta}$, we have $\V(Y(\Vec{e})) > 0$ for all $\Vec{e} \in \mathcal{E}$.
Then, by Lemma \ref{lemma:p_matrix_full_rank}, $\mathbf{P}_{\eta}$ is full rank, and so $\mathbf{P}_{\eta}$ is invertible with solution $\mathbf{w}_{\eta} = \mathbf{P}^{-1}_{\eta}\mathbf{b}$.
Since $\mathbf{P}_{\eta} = \begin{bmatrix}
\mathbf{A}_{\eta} & \mathbf{B} \\
\mathbf{B}^T & \mathbf{D}_{\eta}
\end{bmatrix}$ 
is a block matrix, 
where 
\begin{align*}
    \mathbf{A}_{\eta} = \begin{pmatrix}
        \mathbf{N}_{\eta} & {\mathbf{C}_N^{N}}^T \\
        \mathbf{C}_N^N & \mathbf{0}
    \end{pmatrix},
    \mathbf{B} = \begin{pmatrix}
        \mathbf{0} &  {\mathbf{C}_N^{NR}}^T \\
        \mathbf{C}_F^N & \mathbf{0}
    \end{pmatrix}, \text{ and }
    \mathbf{D}_{\eta} = \begin{pmatrix}
        \mathbf{F}_{\eta} & {\mathbf{C}_F^{NR}}^T \\
        {\mathbf{C}_F^{NR}} & \mathbf{0}
    \end{pmatrix},
\end{align*}
we have $\mathbf{P}^{-1}_{\eta}=$
\begin{align}\label{schur_eq}
    \begin{bmatrix}
    \mathbf{A}^{-1}_{\eta} + \mathbf{A}^{-1}_{\eta}\mathbf{B}(\mathbf{D}_{\eta} - \mathbf{B}^T\mathbf{A}^{-1}_{\eta}\mathbf{B})^{-1}\mathbf{B}^T\mathbf{A}_{\eta}^{-1} & -\mathbf{A}^{-1}_{\eta}\mathbf{B}(\mathbf{D}_{\eta} - \mathbf{B}^T\mathbf{A}_{\eta}^{-1}\mathbf{B})^{-1} \\
    -(\mathbf{D}_{\eta} - \mathbf{B}^T\mathbf{A}_{\eta}^{-1}\mathbf{B})^{-1}\mathbf{B}^T\mathbf{A}_{\eta}^{-1} & (\mathbf{D}_{\eta} - \mathbf{B}^T\mathbf{A}_{\eta}^{-1}\mathbf{B})^{-1}
    \end{bmatrix}.
\end{align}
First, we want to determine the vector of weights $\mathbf{w}^*(\Vec{e}^{N})$, where $\Vec{e}^{N} \in \mathcal{E}^{N}$. 
We focus on the first $|\mathcal{E}| - |\mathcal{E}'|$ rows in $\mathbf{P}_{\eta}^{-1}$, i.e.
\begin{align}
    &\mathbf{w}^*(\Vec{e}^{N}) =\notag\\
    & \lim_{\eta \to \infty} \left[\mathbf{A}_{\eta}^{-1} + \mathbf{A}_{\eta}^{-1}\mathbf{B}(\mathbf{D}_{\eta} - \mathbf{B}^T\mathbf{A}_{\eta}^{-1}\mathbf{B})^{-1}\mathbf{B}^T\mathbf{A}_{\eta}^{-1}\right]_{\text{first $|\mathcal{E}| - |\mathcal{E}'|$ rows}}
    \begin{bmatrix}
    \mathbf{0}_{|\mathcal{E}| - |\mathcal{E}'| \times 1} \\
    \mathbf{0}_{|\Theta^{N}| \times 1}  
    \end{bmatrix} \nonumber \\
    &+ \lim_{\eta \to \infty} \left[- \mathbf{A}_{\eta}^{-1}\mathbf{B}(\mathbf{D}_{\eta} - \mathbf{B}^T\mathbf{A}_{\eta}^{-1}\mathbf{B})^{-1}\right]_{\text{first $|\mathcal{E}| - |\mathcal{E}'|$ rows}}
    \begin{bmatrix}
    \mathbf{0}_{|\mathcal{E}'| \times 1} \\
    1 \\
    \mathbf{0}_{|\Theta^{NR}| - 1\times 1} 
    \end{bmatrix} 
\end{align}
Since the first summand is multiplied by $\mathbf{0}_{|\mathcal{E}| - |\mathcal{E}'| + |\Theta^{N}| \times 1}$, we focus on the first $|\mathcal{E}| - |\mathcal{E}'|$ rows of the limit of $- \mathbf{A}_{\eta}^{-1}\mathbf{B}(\mathbf{D}_{\eta} - \mathbf{B}^T\mathbf{A}_{\eta}^{-1}\mathbf{B})^{-1}$.
Note that $\mathbf{A}_{\eta}$ is also a block matrix, so $\mathbf{A}_{\eta}^{-1}\begin{bmatrix} a & b \\ c & d \end{bmatrix}$ where
\begin{align*}
    a &= \mathbf{N}^{-1}_{\eta} + \mathbf{N}^{-1}_{\eta} {\mathbf{C}_N^{N}}^T(-  {\mathbf{C}_N^{N}}  \mathbf{N}^{-1}_{\eta} {\mathbf{C}_N^{N}}^T)^{-1} {\mathbf{C}_N^{N}} 
    \mathbf{N}_{\eta}^{-1} ,\\
    b &=
     -\mathbf{N}^{-1}_{\eta} {\mathbf{C}_N^{N}}^T  \left(- {\mathbf{C}_N^{N}} \mathbf{N}^{-1}_{\eta} {\mathbf{C}_N^{N}}^T\right)^{-1} ,\\
    c &=
     -\left(-{\mathbf{C}_N^{N}} \mathbf{N}^{-1}_{\eta} {\mathbf{C}_N^{N}}^T\right)^{-1} {\mathbf{C}_N^{N}} \mathbf{N}^{-1}_{\eta} ,\\
    d &=  
      \left(-{\mathbf{C}_N^{N}} \mathbf{N}^{-1}_{\eta} {\mathbf{C}_N^{N}}^T\right)^{-1}.
\end{align*}
Then, we have 
\begin{align*}
    &-\mathbf{A}_{\eta}^{-1}\mathbf{B}(\mathbf{D}_{\eta} - \mathbf{B}^T\mathbf{A}_{\eta}^{-1}\mathbf{B})^{-1} \\
    =& -\begin{bmatrix} a & b \\ c & d \end{bmatrix} 
    \begin{bmatrix} 
    \mathbf{0} & {\mathbf{C}_N^{NR}}^T \\
    {\mathbf{C}_{F}^{N}} & \mathbf{0}
    \end{bmatrix} \nonumber \\
    &\scalemath{.95}{\left(
    \begin{bmatrix}
    \mathbf{F}_{\eta} & {\mathbf{C}_F^{NR}}^T \\
    {\mathbf{C}_F^{NR}} & \mathbf{0} 
    \end{bmatrix} - 
    \begin{bmatrix}
    \mathbf{0} & {\mathbf{C}_F^{N}}^T \\
    {\mathbf{C}_{N}^{NR}} & \mathbf{0}
    \end{bmatrix}
    \begin{bmatrix}
    a & b \\
    c & d
    \end{bmatrix}
    \begin{bmatrix}
    \mathbf{0} & {\mathbf{C}_N^{NR}}^T \\
    {\mathbf{C}_{F}^{N}} & \mathbf{0}
    \end{bmatrix}
    \right)^{-1}} \nonumber \\
    =& -\begin{bmatrix} b {\mathbf{C}_{F}^{N}} & a {\mathbf{C}_N^{NR}}^T \\ 
    d {\mathbf{C}_{F}^{N}} & c {\mathbf{C}_N^{NR}}^T \end{bmatrix} \nonumber \\
    &\left(
    \begin{bmatrix}
    \mathbf{F}_{\eta} & {\mathbf{C}_F^{NR}}^T \\
    {\mathbf{C}_F^{NR}} & \mathbf{0} 
    \end{bmatrix} - 
    \begin{bmatrix}
    {\mathbf{C}_F^{N}}^T c & {\mathbf{C}_F^{N}}^T d\\
    {\mathbf{C}_{N}^{NR}} a & {\mathbf{C}_{N}^{NR}} b
    \end{bmatrix}
    \begin{bmatrix}
    \mathbf{0} & {\mathbf{C}_N^{NR}}^T \\
    {\mathbf{C}_{F}^{N}} & \mathbf{0}
    \end{bmatrix}
    \right)^{-1} \nonumber \\
    =& -\begin{bmatrix} b {\mathbf{C}_{F}^{N}} & a {\mathbf{C}_N^{NR}}^T \\ 
    d {\mathbf{C}_{F}^{N}} & c {\mathbf{C}_N^{NR}}^T \end{bmatrix} \nonumber \\
    &\times \begin{bmatrix}
    \mathbf{F}_{\eta} - {\mathbf{C}_F^{N}}^T d {\mathbf{C}_{F}^{N}} & 
    {\mathbf{C}_F^{NR}}^T  - {\mathbf{C}_F^{N}}^T c {\mathbf{C}_N^{NR}}^T\\
    {\mathbf{C}_F^{NR}} - {\mathbf{C}_{N}^{NR}} b {\mathbf{C}_F^{N}} & 
     - {\mathbf{C}_{N}^{NR}} a {\mathbf{C}_N^{NR}}^T
    \end{bmatrix}^{-1} \nonumber \\
    =& -\begin{bmatrix} \mathbf{0} & a {\mathbf{C}_N^{NR}}^T \\
    \mathbf{0} & c {\mathbf{C}_N^{NR}}^T \end{bmatrix} 
    \underbrace{
    \begin{bmatrix}
    \mathbf{F}_{\eta}  & 
    {\mathbf{C}_F^{NR}}^T\\
    {\mathbf{C}_F^{NR}} & 
     - {\mathbf{C}_{N}^{NR}} a {\mathbf{C}_N^{NR}}^T
    \end{bmatrix}^{-1}}_{ = \begin{bmatrix}
    a' & b' \\
    c' & d'
    \end{bmatrix}} \nonumber \\
    =& -\begin{bmatrix} a {\mathbf{C}_N^{NR}}^T c' & a {\mathbf{C}_N^{NR}}^T d'\\
    c {\mathbf{C}_N^{NR}}^T c' & c {\mathbf{C}_N^{NR}}^T d' \end{bmatrix}, 
\end{align*}
where $\mathbf{C}_F^{N} = \mathbf{0}$ since by definition of $\Theta^{N}$, for all $\theta^{N} \in \Theta^{N}$, we have $\theta^{N} \not \in \Vec{e}$ for $\Vec{e} \in \mathcal{E'}$.
We are interested in the first $|\mathcal{E}| - |\mathcal{E'}|$ rows, but since the first $|\mathcal{E}'|$ columns are multiplied by 0, we focus on the last $|\Theta^{NR}|$ columns:
\begin{align}\label{wN_equation}
    &-a {\mathbf{C}_N^{NR}}^T d'  \\
    =&
     \mathbf{N}^{-1}_{\eta} {\mathbf{C}_N^{NR}}^T \left({\mathbf{C}_{N}^{NR}} \mathbf{N}^{-1}_{\eta} {\mathbf{C}_N^{NR}}^T \right. \nonumber \\
    &- \left. {\mathbf{C}_{N}^{NR}} \mathbf{N}^{-1}_{\eta} {\mathbf{C}_N^{N}}^T( {\mathbf{C}_N^{N}}  \mathbf{N}^{-1}_{\eta} {\mathbf{C}_N^{N}}^T)^{-1} {\mathbf{C}_N^{N}}
    \mathbf{N}_{\eta}^{-1} {\mathbf{C}_N^{NR}}^T +  {\mathbf{C}_F^{NR}} \mathbf{F}_{\eta}^{-1} {\mathbf{C}_F^{NR}}^T \right)^{-1} \nonumber \\
    &- \mathbf{N}^{-1}_{\eta} {\mathbf{C}_N^{N}}^T(  {\mathbf{C}_N^{N}}  \mathbf{N}^{-1}_{\eta} {\mathbf{C}_N^{N}}^T)^{-1} {\mathbf{C}_N^{N}}
    \mathbf{N}_{\eta}^{-1} {\mathbf{C}_N^{NR}}^T
    \left({\mathbf{C}_{N}^{NR}} \mathbf{N}^{-1}_{\eta} {\mathbf{C}_N^{NR}}^T \right. \nonumber \\
    &- \left. {\mathbf{C}_{N}^{NR}} \mathbf{N}^{-1}_{\eta} {\mathbf{C}_N^{N}}^T( {\mathbf{C}_N^{N}}  \mathbf{N}^{-1}_{\eta} {\mathbf{C}_N^{N}}^T)^{-1} {\mathbf{C}_N^{N}} 
    \mathbf{N}_{\eta}^{-1} {\mathbf{C}_N^{NR}}^T +  {\mathbf{C}_F^{NR}} \mathbf{F}_{\eta}^{-1} {\mathbf{C}_F^{NR}}^T \right)^{-1} 
\end{align}

Note that $\mathbf{F}_{\eta}$ is full rank and by definition, $\mathbf{C}_F^{NR}$ is also full row rank. 
The rank of the product ${\mathbf{C}_F^{NR}} \mathbf{F}_{\eta}^{-1} {\mathbf{C}_F^{NR}}^T$ is equal to $\min\left(\mathrm{rank}\left(\mathbf{C}_F^{NR}\right), \mathrm{rank}\left(\mathbf{F}_{\eta}\right)\right)$.
The rank of $\mathbf{F}_{\eta}$ is $|\mathcal{E'}|$ and the rank of $\mathbf{C}_F^{NR}$ is $|\Theta^{NR}|$.
If $|\mathcal{E'}| < |\Theta^{NR}|$, then there are parameters such that they only appear in the same exposures, leading to linearly dependent constraints in $\mathbf{C}_F^{NR}$.
This contradicts the definition of $\mathbf{C}_F^{NR}$, so $|\mathcal{E'}| \geq |\Theta^{NR}|$. 
Hence, $\mathrm{rank}\left({\mathbf{C}_F^{NR}} \mathbf{F}_{\eta}^{-1} {\mathbf{C}_F^{NR}}^T\right) = |\Theta^{NR}|$.
So ${\mathbf{C}_F^{NR}} \mathbf{F}_{\eta}^{-1} {\mathbf{C}_F^{NR}}^T$ is full rank.

By the continuity of matrix inverse at full-rank matrices, we can exchange the limit and the inverse in Equation~\eqref{wN_equation}.
Note that we can write the $j$th diagonal entry of $\mathbf{N}_{\eta}$ as $p(\Vec{e}_{j}) \left(\eta a_{\Vec{e}_j} + \Vec{v}_{\Vec{e}_j}^T B \Vec{v}_{\Vec{e}_j} \right)$.
Let $\tilde{\mathbf{N}}_{\eta}$ be the matrix with diagonal entries $p(\Vec{e}_j) \left(a_{\Vec{e}_j} + \frac{1}{\eta} \Vec{v}_{\Vec{e}_j}^T B \Vec{v}_{\Vec{e}_j} \right)$ so that $\mathbf{N}_{\eta} = \eta \tilde{\mathbf{N}}_{\eta}$, i.e. $\mathbf{N}_{\eta}^{-1} = \frac{1}{\eta} \tilde{\mathbf{N}}_{\eta}^{-1}$.
As $\lim_{\eta \to \infty} \left(\tilde{\mathbf{N}}^{-1}_{\eta}\right)_{{j,j}} = p(\Vec{e}_{j})^{-1} a_{\Vec{e}_j}^{-1} < \infty$, 
\begin{align} \label{inverse_terms_lim_0}
    & \lim_{\eta \to \infty} \left({\mathbf{C}_{N}^{NR}}  \mathbf{N}^{-1}_{\eta} {\mathbf{C}_{N}^{NR}}^T  - {\mathbf{C}_{N}^{NR}} \mathbf{N}^{-1}_{\eta} {\mathbf{C}_N^{N}}^T ({\mathbf{C}_N^{N}}  \mathbf{N}^{-1}_{\eta} {\mathbf{C}_N^{N}}^T)^{-1} {\mathbf{C}_N^{N}}
    \mathbf{N}_{\eta}^{-1} {\mathbf{C}_{N}^{NR}}^T \right)\nonumber\\
    &= \lim_{\eta \to \infty} {\mathbf{C}_{N}^{NR}}  \frac{1}{\eta}\tilde{\mathbf{N}}^{-1}_{\eta} {\mathbf{C}_{N}^{NR}}^T  \nonumber \\
    &- \lim_{\eta \to \infty} {\mathbf{C}_{N}^{NR}} \frac{1}{\eta}\tilde{\mathbf{N}}^{-1}_{\eta} {\mathbf{C}_N^{N}}^T ({\mathbf{C}_N^{N}}  \frac{1}{\eta} \tilde{\mathbf{N}}^{-1}_{\eta} {\mathbf{C}_N^{N}}^T)^{-1} {\mathbf{C}_N^{N}}
    \frac{1}{\eta} \tilde{\mathbf{N}}_{\eta}^{-1} {\mathbf{C}_{N}^{NR}}^T \nonumber\\
    &= \mathbf{0}_{|\Theta^{NR}| \times |\Theta^{NR}|}.
\end{align}
Hence, 
\begin{align}
   &\lim_{\eta \to \infty} a {\mathbf{C}_N^{NR}}^T d' \\
   =& \lim_{\eta \to \infty} -\frac{1}{\eta} \tilde{\mathbf{N}}_{\eta}^{-1} {\mathbf{C}_N^{NR}}^T \left({\mathbf{C}_F^{NR}} \mathbf{F}_{\eta}^{-1} {\mathbf{C}_F^{NR}}^T\right)^{-1} \nonumber \\
   &+ \lim_{\eta \to \infty} \frac{1}{\eta}\tilde{\mathbf{N}}^{-1}_{\eta} {\mathbf{C}_N^{N}}^T ({\mathbf{C}_N^{N}}  \tilde{\mathbf{N}}^{-1}_{\eta} {\mathbf{C}_N^{N}}^T)^{-1} {\mathbf{C}_N^{N}}
    \tilde{\mathbf{N}}_{\eta}^{-1} {\mathbf{C}_N^{NR}}^T \left({\mathbf{C}_F^{NR}} \mathbf{F}_{\eta}^{-1} {\mathbf{C}_F^{NR}}^T\right)^{-1} \label{eq:lim_nonfin}.
\end{align}
Since each matrix above is bounded in $\eta$, 
$\lim_{\eta \to \infty} a {\mathbf{C}_{N}^{NR}}^T d' = \mathbf{0}_{|\mathcal{E}| - |\mathcal{E'}| \times |\Theta^{NR}|}$.
Hence, $\mathbf{w}^*(\Vec{e}^{N}) = \mathbf{0}_{|\mathcal{E}| - |\mathcal{E'}| \times 1}$.
This establishes that $\mathrm{supp}(\mathbf{w}^*)\subset \mathcal{E}'$ if $\Theta^R = \emptyset$.
\end{case}

\begin{case}[Assume $\mathbf{\Theta^R \neq \emptyset}$:]
Now we consider the case when $\Theta^R \neq \emptyset$, i.e. $\mathbf{C}_F^{R} \neq \mathbf{0}$.
The matrix equation becomes
\begin{align}
     \begin{pmatrix}
    \mathbf{P}_{\eta} & 
     \begin{pmatrix}
     {\mathbf{C}_N^R}^T \\ 
     \mathbf{0}_{|\Theta^N| \times |\Theta^R|} \\
     {\mathbf{C}_F^R}^T \\
     \mathbf{0}_{|\Theta^{NR}| \times |\Theta^R|}
     \end{pmatrix} \\
     \begin{pmatrix}
     \mathbf{C}_N^R & 
     \mathbf{0}_{|\Theta^R| \times |\Theta^N|} & 
     \mathbf{C}_F^R & 
     \mathbf{0}_{|\Theta^R| \times |\Theta^{NR}|}
     \end{pmatrix} & 
     \mathbf{0}_{|\Theta^R| \times |\Theta^R|}
    \end{pmatrix} &
    \begin{pmatrix}
         \mathbf{w}_{\eta} \\
         {\lambda^R}_{\eta}
    \end{pmatrix}\notag \\
    =&
    \begin{pmatrix}
         \mathbf{b} \\
         \mathbf{0}_{|\Theta^{R}| \times 1} 
    \end{pmatrix},\label{eq:full-with-r}
\end{align}
where $\mathbf{P}_{\eta}, \mathbf{w}_{\eta}$, and $\mathbf{b}$ are matrices and vectors from Equation~\eqref{pw_b_eq_wo_cfr}.
Denote Equation~\eqref{eq:full-with-r} as $\tilde{\mathbf{P}}_{\eta} \tilde{\mathbf{w}}_{\eta} = \tilde{\mathbf{b}}$.
Since we already showed that $\mathbf{w}^*$ is the solution to the matrix equation in the limit when $\Theta^R = \emptyset$, we have the following:
\begin{align} 
      \tilde{\mathbf{P}}^*
    \begin{pmatrix}
        \mathbf{w}^* \\
        \mathbf{0}_{|\Theta^R| \times 1} 
    \end{pmatrix} = 
    \begin{pmatrix}
        \mathbf{0}_{|\mathcal{E} \setminus \mathcal{E'}| \times 1}  \\
        \mathbf{0}_{|\Theta^N| \times 1} \\
        \mathbf{0}_{|\mathcal{E'}| \times 1}  \\
         1 \\
         \mathbf{0}_{|\Theta^{NR}|-1 \times 1} \\
         \mathbf{C}_N^R \mathbf{w}^*(\Vec{e}^{\,N}) + \mathbf{C}_F^R \mathbf{w}^*(\Vec{e}^{\,F})
    \end{pmatrix},
\end{align}
where $\mathbf{w}^*(\Vec{e}^{\,N})$ and $\mathbf{w}^*(\Vec{e}^{\,F})$ are the weights of the exposures in $\mathcal{E} \setminus \mathcal{E'}$ and $\mathcal{E'}$, respectively as given by $\mathbf{w}^*$ and $\tilde{\mathbf{P}}^* = \lim_{\eta \to \infty} \tilde{\mathbf{P}}_{\eta}$.
Recall that in the previous case, we showed that $\mathbf{w}^*(\Vec{e}^{\,N}) = \mathbf{0}_{|\mathcal{E}| - \mathcal{E'}| \times 1}$.
Furthermore, recall that by construction of $\mathbf{C}_F^{NR}$, $\mathbf{C}_F^R = \mathbf{T} \mathbf{C}_F^{NR}$, where the first column of $\mathbf{T}$ only contains zeros since $\theta_{1,m_1}$ cannot be linearly dependent with another parameter.
Otherwise, unbiasedness does not hold.
Since $\mathbf{w}^*(\Vec{e}^{\,F})$ solves the matrix equation given by Equation~\eqref{pw_b_eq_wo_cfr}, then $\mathbf{C}_F^R \mathbf{w}^*(\Vec{e}^{\,F}) = \mathbf{0}_{|\Theta^R| \times 1}$.
Then, $\tilde{\mathbf{P}}^* \tilde{\mathbf{w}}^* = \tilde{\mathbf{b}}$, where $\tilde{\mathbf{w}}^* = \begin{pmatrix}
\mathbf{w}^* \\ \mathbf{0}_{|\Theta^R| \times 1}
\end{pmatrix}$.

Let $\mathbf{w}$ be the true solution in the limit to the problem $\tilde{\mathbf{P}}_{\eta} \tilde{\mathbf{w}}_{\eta} = \tilde{\mathbf{b}}$ as given by Equation~\eqref{eq:full-with-r}.
Then: 
\begin{align}
    \tilde{\mathbf{P}}^* {\mathbf{w}} - \tilde{\mathbf{P}}^* \tilde{\mathbf{w}}^* = \tilde{\mathbf{P}}^* \left({\mathbf{w}} - \tilde{\mathbf{w}}^*\right)  = \mathbf{0}_{|\mathcal{E}| + |\Theta|}. 
\end{align}
By Lemma \ref{lemma:p_matrix_full_rank}, $\tilde{\mathbf{P}}^*$ is full rank, where now $\mathbf{C} = \begin{pmatrix}
    \mathbf{C}_{N}^N & \mathbf{C}_F^N \\
    \mathbf{C}_N^R & \mathbf{C}_F^R \\
    \mathbf{C}_N^{NR} & \mathbf{C}_F^{NR}
\end{pmatrix}$. 
We can then multiply both sides by ${\tilde{\mathbf{P}}}^{*^{-1}}$, and since all elements in ${\tilde{\mathbf{P}}}^{*^{-1}}$ are finite, we have:
\begin{align}
    {\mathbf{w}} - \tilde{\mathbf{w}}^* = \mathbf{0}_{|\mathcal{E}| + |\Theta|}.
\end{align}
Thus, in the limit, the solutions $\tilde{\mathbf{w}}^*$ and ${\mathbf{w}}$ are the same, and we see that ${\tilde{\mathbf{w}}}^*(\Vec{e}^{\,N}) = \mathbf{0}_{|\mathcal{E}| - |\mathcal{E'}|}$ and ${{\tilde{\mathbf{w}}}}^{*}(\Vec{e}^{\,F})$ depends on $\mathbf{w}^*(\Vec{e}^{\,F})$.

Hence, if there exists $\mathcal{E'} \subseteq \mathcal{E}$ such that $\mathrm{span}\left(\{\Vec{v}_{\Vec{e}^{\,'}}\}_{\Vec{e}^{\,'} \in \mathcal{E'}}\right) \cap \{\Vec{v}_{\Vec{e}}\}_{\Vec{e} \in \mathcal{E}} = \{\Vec{v}_{\Vec{e}^{\,'}}\}_{\Vec{e}^{\,'} \in \mathcal{E'}}$, then there exists a $\hat{\theta}$ with $\mathrm{supp}(\hat{\theta}) \subseteq \mathcal{E'}$ and $\hat{\theta}$ is a limit of MIV LUEs. 
\end{case} 

\noindent \underline{\textbf{Showing $\mathrm{supp}(\mathbf{w}^*) = \mathcal{E'}$:}} \newline
Finally, we want to determine the vector of weights $\mathbf{w}^*(\Vec{e}^{\,F})$, where $\Vec{e}^{\,F} \in \mathcal{E}'$, which is given by:

\begin{align}
    \mathbf{w}^*(\Vec{e}^{\,F}) &= \lim_{\eta \to \infty} \left[\left(\mathbf{D}_{\eta} - \mathbf{B}^T\mathbf{A}_{\eta}^{-1}\mathbf{B}\right)^{-1}\mathbf{B}^T\mathbf{A}_{\eta}^{-1}\right]_{\text{first $|\mathcal{E}'|$ rows}}\begin{bmatrix}
    \mathbf{0}_{|\mathcal{E}| - |\mathcal{E}'| \times 1} \\
    \mathbf{0}_{|\Theta^{N}| \times 1} \\
    \end{bmatrix} \nonumber \\
    &+ \lim_{\eta \to \infty} \left(\mathbf{D}_{\eta} - \mathbf{B}^T\mathbf{A}_{\eta}^{-1}\mathbf{B}\right)^{-1}_{\text{first $|\mathcal{E}'|$ rows}}
    \begin{bmatrix}
    \mathbf{0}_{|\mathcal{E}'| \times 1} \\
    1 \\
    \mathbf{0}_{|\Theta^{NR}| - 1\times 1} 
    \end{bmatrix}.
\end{align}
We focus on when $\Theta^R = \emptyset$ since we have shown that the weights for $\Vec{e}^{\,F}$ when $\Theta^R \neq \emptyset$ are the same in the limit as the weights when $\Theta^R = \emptyset$.
Here, we focus on the first $|\mathcal{E}'|$ rows and last $|\Theta^{NR}|$ columns of $\lim_{\eta \to \infty} \left(\mathbf{D}_{\eta} - \mathbf{B}^T\mathbf{A}_{\eta}^{-1}\mathbf{B}\right)^{-1}$.
Note that $\mathbf{D}_{\eta} - \mathbf{B}^T\mathbf{A}_{\eta}^{-1}\mathbf{B}$ is a block matrix, so we are interested in the upper right block of the inverse.
Again, we denote ${\mathbf{A}^{-1}_{\eta}} = \begin{bmatrix} a & b \\ c & d \end{bmatrix}$.
Then, using the right hand side of Equation (\ref{schur_eq}), the upper right block of $\left(\mathbf{D}_{\eta} - \mathbf{B}^T\mathbf{A}_{\eta}^{-1}\mathbf{B}\right)^{-1}$ is:
\begin{align}\label{w_F_weights}
    &\left(\mathbf{D}_{\eta} - \mathbf{B}^T\mathbf{A}_{\eta}^{-1} \mathbf{B}\right)^{-1}_{\text{upper right block}} \\
    =& 
    \mathbf{F}_{\eta}^{-1}
    {\mathbf{C}_{F}^{NR}}^T
    \left( {\mathbf{C}_{N}^{NR}} a {\mathbf{C}_{N}^{NR}}^T + 
    {\mathbf{C}_{F}^{NR}} \mathbf{F}_{\eta}^{-1}
    {\mathbf{C}_F^{NR}}^T
    \right)^{-1} \nonumber \\
    =& 
    \mathbf{F}_{\eta}^{-1}
    {\mathbf{C}_{F}^{NR}}^T
    \left( {\mathbf{C}_{N}^{NR}} \mathbf{N}^{-1}_{\eta} 
    {\mathbf{C}_{N}^{NR}}^T \right. \nonumber\\
    &- \left. {\mathbf{C}_{N}^{NR}} \mathbf{N}^{-1}_{\eta} {\mathbf{C}_N^{N}}^T ({\mathbf{C}_N^{N}} \mathbf{N}^{-1}_{\eta} {\mathbf{C}_N^{N}}^T)^{-1} {\mathbf{C}_N^{N}} 
    \mathbf{N}_{\eta}^{-1}  {\mathbf{C}_{N}^{NR}}^T \right. \nonumber \\
    &+ \left.
    {\mathbf{C}_{F}^{NR}} \mathbf{F}_{\eta}^{-1}
    {\mathbf{C}_F^{NR}}^T
    \right)^{-1}.
\end{align}
Since ${\mathbf{C}_{F}^{NR}} \mathbf{F}_{\eta}^{-1} {\mathbf{C}_F^{NR}}^T$ is full rank as shown previously, we can take the limit inside the inverse. 
From Equation (\ref{inverse_terms_lim_0}), we have 
\begin{align*}
    \lim_{\eta \to \infty} \left({\mathbf{C}_{N}^{NR}}  \mathbf{N}^{-1}_{\eta} {\mathbf{C}_{N}^{NR}}^T  - {\mathbf{C}_{N}^{NR}} \mathbf{N}^{-1}_{\eta} {\mathbf{C}_N^{N}}^T ({\mathbf{C}_N^{N}}  \mathbf{N}^{-1}_{\eta} {\mathbf{C}_N^{N}}^T)^{-1} {\mathbf{C}_N^{N}}
    \mathbf{N}_{\eta}^{-1} {\mathbf{C}_{N}^{NR}}^T\right) 
\end{align*}
is zero.
Hence,
\begin{align}
    \mathbf{w}^*(\Vec{e}^{F}) = \lim_{\eta \to \infty} \mathbf{F}_{\eta}^{-1} {\mathbf{C}_F^{NR}}^T\left({\mathbf{C}_F^{NR}} \mathbf{F}_{\eta}^{-1} {\mathbf{C}_F^{NR}}^T \right)^{-1}
    \begin{bmatrix}
    1 \\
    \mathbf{0}_{|\Theta^{NR}| - 1\times 1} 
    \end{bmatrix}.
\end{align}

Note that the $(k,l)$th entry of ${\mathbf{C}_F^{NR}} \mathbf{F}_{\eta}^{-1} {\mathbf{C}_F^{NR}}^T$ is given by:
\begin{align}
    \left({\mathbf{C}_F^{NR}} \mathbf{F}_{\eta}^{-1} {\mathbf{C}_F^{NR}}^T\right)_{k,l} = \sum_{j = 1}^{|\mathcal{E}'|} \mathbb{I}\{\theta_{k}, \theta_{l} \in \Vec{e}_j\}\frac{p(\Vec{e}_j)}{\V(Y(\Vec{e}_j))},
\end{align}
where the $k,l \in \{1, \dotsc, |\Theta^{NR}|\}$ indexes the different parameters in $\Theta^{NR}$.
If for every exposure $\Vec{e}_j \in \mathcal{E'}$, we have
\begin{align}
    \lim_{\eta \to \infty} \sum_{k = 1}^{|\Theta^{NR}|} Adj\left({\mathbf{C}_F^{NR}} \mathbf{F}_{\eta}^{-1} {\mathbf{C}_F^{NR}}^T\right)_{k,1} \mathbb{I}\{\theta_{k} \in \Vec{e}_j\} \neq 0,
\end{align} 
where $Adj\left({\mathbf{C}_F^{NR}} \mathbf{F}_{\eta}^{-1} {\mathbf{C}_F^{NR}}^T\right)_{k,1}$ is the $(k,1)th$ entry of the adjugate matrix of \newline 
${\mathbf{C}_F^{NR}} \mathbf{F}_{\eta}^{-1} {\mathbf{C}_F^{NR}}^T$ corresponding of $\theta_{1,m_1}$, then $\mathbf{w}^*(\Vec{e}) \neq 0$.
Hence, $\mathrm{supp}(\mathbf{w}^*) = \mathcal{E'}$.
\end{proof}

\subsection{Example: Derivation of Weights for Six-Term Exposure}\label{appendix:six_term_ex}

We show that in general $\mathbf{w}^*(\Vec{e}^{F}) \neq 0$ through an example.
Consider 
\begin{align*}
    \mathcal{E}^{\text{six term}, m} = \{(0,0), (0,j), (m,0), (m,j), (m_1, 0), (m_1,j)\},
\end{align*}
where $m \in \{1, \dotsc, m_1-1\}, j \in \{1, \dotsc, m_2\}$ and consider a prior covariance-matrix $\mathbf{\Sigma}$, where all prior variances of parameters are finite.
Denote entries of the inverse of ${\mathbf{C}_F^{NR}} \mathbf{F}_{\eta}^{-1} {\mathbf{C}_F^{NR}}^T$ as $\tilde{a}_{k,j} =  \frac{1}{ det\left({\mathbf{C}_F^{NR}} \mathbf{F}_{\eta}^{-1} {\mathbf{C}_F^{NR}}^T\right)} Adj\left({\mathbf{C}_F^{NR}} \mathbf{F}_{\eta}^{-1} {\mathbf{C}_F^{NR}}^T\right)_{k,j}$ where $Adj$ is the adjugate.
Then, 
\begin{align*}
    \mathbf{w}^*(\Vec{e}^{\,F}) &= %
    \begin{bmatrix}
    \frac{\tilde{a}_{2,1}}{\V(Y(0,0))} & \frac{\tilde{a}_{2,2}}{\V(Y(0,0))} & \frac{\tilde{a}_{2,3}}{\V(Y(0,0))} & \frac{\tilde{a}_{2,4}}{\V(Y(0,0))} \\
    \frac{\tilde{a}_{2,1} + \tilde{a}_{4,1}}{\V(Y(0,j))}  & \frac{\tilde{a}_{2,2} + \tilde{a}_{4,2}}{\V(Y(0,j))} & \frac{\tilde{a}_{2,3} + \tilde{a}_{4,3}}{\V(Y(0,j))} & \frac{\tilde{a}_{2,4} + \tilde{a}_{4,4}}{\V(Y(0,j))} \\
    \frac{\tilde{a}_{2,1} + \tilde{a}_{3,1}}{\V(Y(m,0))} & \frac{\tilde{a}_{2,2} + \tilde{a}_{3,2}}{\V(Y(m,0))} & 
    \frac{\tilde{a}_{2,3} + \tilde{a}_{3,3}}{\V(Y(m,0))} & \frac{\tilde{a}_{2,4} + \tilde{a}_{3,4}}{\V(Y(m,0))} \\
    \frac{\tilde{a}_{2,1} + \tilde{a}_{3,1} + \tilde{a}_{4,1}}{\V(Y(m,j))} & \frac{\tilde{a}_{2,2} + \tilde{a}_{3,2} + \tilde{a}_{4,2}}{\V(Y(m,j))} & 
    \frac{\tilde{a}_{2,3} + \tilde{a}_{3,3} + \tilde{a}_{4,3}}{\V(Y(m,j))} & \frac{\tilde{a}_{2,4} + \tilde{a}_{3,4} + \tilde{a}_{4,4}}{\V(Y(m,j))} \\
    \frac{\tilde{a}_{1,1} + \tilde{a}_{2,1}}{\V(Y(m_1,0))} & \frac{\tilde{a}_{1,2} + \tilde{a}_{2,2} }{\V(Y(m_1,0))} & 
    \frac{\tilde{a}_{1,3} + \tilde{a}_{2,3} }{\V(Y(m_1,0))} & \frac{\tilde{a}_{1,4} + \tilde{a}_{2,4} }{\V(Y(m_1,0))} \\
    \frac{\tilde{a}_{1,1} + \tilde{a}_{2,1} + \tilde{a}_{4,1}}{\V(Y(m_1,j))} & \frac{\tilde{a}_{1,2} + \tilde{a}_{2,2} + \tilde{a}_{4,2}}{\V(Y(m_1,j))} & 
    \frac{\tilde{a}_{1,3} + \tilde{a}_{2,3} + \tilde{a}_{4,3}}{\V(Y(m_1,j))} & \frac{\tilde{a}_{1,4} + \tilde{a}_{2,4} + \tilde{a}_{4,4}}{\V(Y(m_1,j))} \\
    \end{bmatrix} .
\end{align*}
The weights $\mathbf{w}^*(\Vec{e}^{\,F})$ are given by the entries in the first column, and so $\mathbf{w}^*(\Vec{e}^{\,F})$ is non-zero if the corresponding entries of the inverse of ${\mathbf{C}_F^{NR}} \mathbf{F}_{\eta}^{-1} {\mathbf{C}_F^{NR}}^T$ are non-zero.
We focus on $\mathbf{w}^*(0,0) = \frac{\tilde{a}_{2,1}}{\V(Y(0,0))}$.
Here $\tilde{a}_{2,1} = \frac{1}{ det\left({\mathbf{C}_F^{NR}} \mathbf{F}_{\eta}^{-1} {\mathbf{C}_F^{NR}}^T\right)} Adj\left({\mathbf{C}_F^{NR}} \mathbf{F}_{\eta}^{-1} {\mathbf{C}_F^{NR}}^T\right)_{2,1}$.
Since ${\mathbf{C}_F^{NR}} \mathbf{F}_{\eta}^{-1} {\mathbf{C}_F^{NR}}^T$ is full rank, the determinant is non-zero, and so we focus on the adjugate term in terms of the minor, denoted by $\mathbf{M}_{i,j}$:
\begin{align}\label{six_term_estimator_ex}
    Adj&\left({\mathbf{C}_F^{NR}} \mathbf{F}_{\eta}^{-1} {\mathbf{C}_F^{NR}}^T\right)_{2,1} = -\mathbf{M}_{2,1} \nonumber \\
    =& \left(\frac{p(m_1,0) p(m,0) p(0,j)}{\V(Y(m_1,0)) \V(Y(m,0)) \V(Y(0,j))} \right. \\
    &+ \frac{p(m_1,0) p(m,0) p(m,j)}{\V(Y(m_1,0)) \V(Y(m,0)) \V(Y(m,j))}  \nonumber \\
    &+ \frac{p(m_1,0) p(m,0) p(m_1,j)}{\V(Y(m_1,0)) \V(Y(m,0)) \V(Y(m_1,j))}\\
    &+ \frac{p(m_1,0) p(m,j) p(0,j)}{\V(Y(m_1,0)) \V(Y(m,j)) \V(Y(0,j))}\\
    &+ \frac{p(m_1,0) p(m,j) p(m_1,j)}{\V(Y(m_1,0)) \V(Y(m,j)) \V(Y(m_1,j))} \\
    &\left. + \frac{p(m_1,j) p(m,0) p(m,j)}{\V(Y(m_1,j)) \V(Y(m,0)) \V(Y(m,j))} \right).
\end{align}
Thus, we would need to set at least two probabilities of exposures to be zero in order for $\mathbf{w}^*(0,0) = 0$.
This holds similarly for other parameters.
Hence, for typical choices of the design probabilities and for priors where all variances are finite, $w(\Vec{e}^{\,F}) \neq 0$, i.e. $\mathrm{supp}(\hat{\theta}) = \mathcal{E}^{\text{six term}, m}$.

\section{Example: Six-Term Exposure Set}\label{appendix:six_term_exposure}

\begin{proof}[Proof of Corollary \ref{cor:max_of_a3}]
Consider the exposure set 
\begin{align*}
    \mathcal{E}^{\text{six term}, m} = \{(0,0), (0,j), (m_1,0), (m_1,j), (m,0), (m,j)\},
\end{align*} 
where $j \in \{1, \dotsc, m_2\}$ and $m \in \{1, \dotsc, m-1\}$.
Note that $\{\Vec{v}_{\Vec{e}^{\,'}}\}_{\Vec{e}^{\,'} \in \mathcal{E}^{\text{six term}, m}} = \mathrm{span}\left(\{\Vec{v}_{\Vec{e}^{\,'}}\}_{\Vec{e}^{\,'} \in \mathcal{E}^{\text{six term}, m}}\right) \cap \{\Vec{v}_{\Vec{e}}\}_{\Vec{e} \in \mathcal{E}}$.
By Theorem \ref{mivlue_thm}, there exists an estimator $\hat{\theta}$ such that it is a MIV LUE, for a given prior variance-covariance matrix, and $\mathrm{supp}(\hat{\theta}) = \mathcal{E}^{\text{six term}, m}$. 
Since $\hat{\theta}$ is a LUE, there are weights $\alpha_1, \alpha_2, \alpha_3 \in \mathbb{R}$ such that 
\begin{align}
    \hat{\theta} &= \alpha_1 \left(HT_{(m_1,0)} - HT_{(0,0)}\right) + \alpha_2 \left(HT_{(m_1,j)} - HT_{(0,j)}\right) \nonumber \\
    &+ \alpha_3 \left(HT_{(m_1,j)} - HT_{(m,j)} + HT_{(m,0)} + HT_{(0,0)}\right), 
\end{align}
where the three ALUEs form a basis for six-term estimators.
We know that $\alpha_1$ and $\alpha_2$ can equal 1 since the two two-term estimators are also MIV LUEs, but $\alpha_3 \neq 0$ because the four-term estimator is not a MIV LUE. 
However, exposures in the support for the four-term estimator can still contribute to MIV LUEs.
We investigate this contribution by finding the maximum of the weight $\alpha_3$.

First, we want to solve for the weights of the exposures in $\mathcal{E}^{\text{six-term}, m}$. 
From the proof of Theorem \ref{mivlue_thm}, we know that the weights are given by 
\begin{align}
    w(0,0) &= \frac{\tilde{a}_{2,1}}{\V(Y(0,0))} \\ 
    w(0,j) &= \frac{\tilde{a}_{2,1} + \tilde{a}_{4,1}}{\V(Y(0,j))}  \\
    w(m,0) &= \frac{\tilde{a}_{2,1} + \tilde{a}_{3,1}}{\V(Y(m,0))} \\
    w(m,j) &= \frac{\tilde{a}_{2,1} + \tilde{a}_{3,1} + \tilde{a}_{4,1}}{\V(Y(m,j))} \\
    w(m_1,0) &= \frac{\tilde{a}_{1,1} + \tilde{a}_{2,1}}{\V(Y(m_1,0))}  \\
    w(m_1,j) &= \frac{\tilde{a}_{1,1} + \tilde{a}_{2,1} + \tilde{a}_{4,1}}{\V(Y(m_1,j))},
\end{align}
where the terms $\tilde{a}_{i,j}$ are the limit of terms in the adjugate matrix divided by the determinant of $\mathbf{C}_F^{NR} \mathbf{F}_{\eta}^{-1} {\mathbf{C}_F^{NR}}^T$ and the potential outcome variances are given by the prior variance-covariance matrix.
Suppose the variances of the potential outcomes are all finite.
We first compute the determinant. 
We write $r(\Vec{e}) = \frac{p(\Vec{e})}{\V(Y(\Vec{e}))}$:
\begin{align*}
    &det\left(\mathbf{C}_F^{NR} \mathbf{F}_{\eta}^{-1} {\mathbf{C}_F^{NR}}^T\right)\\
    =& \left[r(m_1,0) + r(m_1,j)\right] \\
    &\times \bigg\{\bigg[r(0,0) + r(0,j) + r(m_1,0) + r(m_1,j) + r(m,0) + r(m,j)\bigg]  \\
    & \times  \bigg[r(m,0) + r(m,j)\bigg]
    \bigg[r(0,j) + r(m,j) + r(m_1,j)\bigg] \\
    &- \bigg[r(0,j) + r(m,j) + r(m_1,j)\bigg]^2 \bigg[r(m,0) + r(m,j)\bigg] \\
    &+ \bigg[r(m,0) + r(m,j)\bigg] r(m,j) \bigg[r(0,j) + r(m,j) + r(m_1,j)\bigg]\\
    &- \bigg[r(m,0) + r(m,j)\bigg]^2 \bigg[r(0,j) + r(m,j) + r(m_1,j)\bigg]\\
    &+  \bigg[r(0,j) + r(m,j) + r(m_1,j)\bigg] \bigg[r(m,0) + r(m,j)\bigg] r(m,j)  \\
    &-  \bigg[r(0,0) + r(0,j) + r(m_1,0) \\
    &+ r(m_1,j) + r(m,0) + r(m,j)\bigg] r(m,j)^2 \bigg\} \\
    &- \bigg\{\bigg[r(m_1,0) + r(m_1,j)\bigg] \bigg[r(m,0) + r(m,j)\bigg]\\
    &\times \bigg[r(0,j) + r(m,j) + r(m_1,j)\bigg] \\
    &- r(m_1,j) \bigg[r(m,0) + r(m,j)\bigg] \bigg[r(0,j) + r(m_1,j) + r(m,j)\bigg] \\
    &+ r(m_1,j) r(m,j) \bigg[r(m,0) + r(m,j)\bigg] \\
    &- r(m,j)^2 \bigg[r(m_1,0) + r(m_1,j)\bigg] \bigg\} \\
    &- r(m_1,j) \bigg\{ \bigg[r(m_1,0) + r(m_1,j)\bigg] \bigg[r(m,0) + r(m,j)\bigg] r(m,j) \\
    &- r(m_1,j) \bigg[r(m,0) + r(m,j) \bigg]^2 \\
    &+ r(m_1,j) \bigg[r(0,0) + r(0,j) + r(m_1,0) \\
    &+ r(m_1,j) + r(m,0) + r(m,j)\bigg] \\
    &\times \bigg[r(m,0) + r(m,j) \bigg] \\
    &- \bigg[r(m_1,0) + r(m_1,j)\bigg] \bigg[r(m,0) + r(m,j)\bigg] \\
    &\times \bigg[r(0,j) + r(m,j) + r(m_1,j) \bigg]\bigg\} \\
    &= r(m_1,0) \bigg[r(0,0) r(m,0) r(m_1,j) + r(0,0) r(m,j) r(m_1,j) \bigg] \\
    &+ r(m_1,j) \bigg[r(m_1,0) r(m,0) r(0,j) + r(m_1,0) r(m,j) r(0,j) \bigg] \\
    &\times \bigg[r(m_1,0) + r(m_1,j)\bigg] 
    \bigg[r(0,0) r(m,0) r(0,j) \\
    &+ r(0,0) r(m,0) r(m,j)  + r(0,0) r(m,j) r(0,j) \\
    &+ r(0,j) r(m,0) r(m,j) \bigg].
\end{align*}

The different entries of the adjugate matrix that are needed to compute the exposure weights are as follows:
\begin{align*}
    &\tilde{A}_{1,1} \\ 
    =& r(0,0) r(m,0) r(0,j) + r(0,0) r(m,0) r(m_1,j) + r(0,0) r(m,0) r(m,j) \\
    &+ r(0,0) r(m,j) r(0,j) + r(0,0) r(m,j) r(m_1,j) + r(m_1,0) r(m,0) r(0,j) \\
    &+ r(m_1,0) r(m,0) r(m_1,j) + r(m_1,0) r(m,0) r(m,j) + r(m_1,0) r(m,j) r(0,j) \\
    &+ r(m_1,0) r(m,j) r(m_1,j) + r(0,j) r(m,0) r(m,j) + r(m_1,j) r(m,0) r(m,j) \\
    &\tilde{A}_{2,1} \\
    =& - \bigg[ r(m_1,0) r(m,0) r(0,j) + r(m_1,0) r(m,0) r(m,j) + r(m_1,0) r(m,0) r(m_1,j) \\
    &+ r(m_1,0) r(m,j) r(0,j) + r(m_1,0) r(m,j) r(m_1,j) + r(m_1,j) r(m,0) r(m,j) \bigg] \\
    &\tilde{A}_{3,1} \\
    =& r(m_1,0) r(m,0) r(0,j) + r(m_1,0) r(m,0) r(m,j) + r(m_1,0) r(m,0) r(m_1,j)  \\
    &+ r(m_1,j) r(m,j) r(0,0) + r(m_1,j) r(m,j) r(m_1,0) + r(m_1,j) r(m,0) r(m,j) \\
    &\tilde{A}_{4,1} \\
    =& \bigg[r(m,0) + r(m,j)\bigg] \bigg[r(m_1,0) r(0,j) - r(m_1,j) r(0,0) \bigg].
\end{align*}

Using the adjugate entries and the determinant, the weights $\alpha_1, \alpha_2, \alpha_3$ are then as follows:
\begin{align*}
    \alpha_1 &= r(m_1,0)\bigg\{r(0,0) r(m,0) r(0,j) + r(0,0) r(m,0) r(m_1,j)\\
    &+ r(0,0) r(m,0) r(m,j) + r(0,0) r(m,j) r(0,j)\\
    &+ r(0,0) r(m,j) r(m_1,j) + r(0,j) r(m,0) r(m,j) \bigg\}\\
    & \times \bigg\{r(m_1,0) \bigg[r(0,0) r(m,0) r(m_1,j) + r(0,0) r(m,j) r(m_1,j) \bigg] \\
    &+ r(m_1,j) \bigg[r(m_1,0) r(m,0) r(0,j) + r(m_1,0) r(m,j) r(0,j) \bigg] \\
    &\bigg[r(m_1,0) + r(m_1,j)\bigg] 
    \bigg[r(0,0) r(m,0) r(0,j) + r(0,0) r(m,0) r(m,j)  \\
    &+r(0,0) r(m,j) r(0,j) + r(0,j) r(m,0) r(m,j) \bigg]\bigg\}^{-1} \numberthis \\
    \alpha_2 &= r(0,j)\bigg\{r(m,0) r(m_1,j) r(0,0) + r(m,j) r(m_1,j) r(0,0) \\
    &+ r(m_1,0) r(m,0) r(m,j)  + r(m_1,0) r(m,0) r(m_1,j) \\
    &+ r(m_1,0) r(m,j) r(m_1,j) + r(m_1,j) r(m,0) r(m,j)\bigg\}\\
    & \times \bigg\{r(m_1,0) \bigg[r(0,0) r(m,0) r(m_1,j) + r(0,0) r(m,j) r(m_1,j) \bigg] \\
    &+ r(m_1,j) \bigg[r(m_1,0) r(m,0) r(0,j) + r(m_1,0) r(m,j) r(0,j) \bigg] \\
    &\bigg[r(m_1,0) + r(m_1,j)\bigg] 
    \bigg[r(0,0) r(m,0) r(0,j) + r(0,0) r(m,0) r(m,j)  \\
    &+r(0,0) r(m,j) r(0,j) + r(0,j) r(m,0) r(m,j) \bigg]\bigg\}^{-1} \numberthis \\
    \alpha_3 &= 1 - \alpha_1 - \alpha_2 \\
    &= r(m,0)r(m,j)\bigg\{r(m_1,j) r(0,0) - r(m_1,0) r(0,j) \bigg\}\\
    & \times \bigg\{r(m_1,0) \bigg[r(0,0) r(m,0) r(m_1,j) + r(0,0) r(m,j) r(m_1,j) \bigg] \\
    &+ r(m_1,j) \bigg[r(m_1,0) r(m,0) r(0,j) + r(m_1,0) r(m,j) r(0,j) \bigg] \\
    &\bigg[r(m_1,0) + r(m_1,j)\bigg] 
    \bigg[r(0,0) r(m,0) r(0,j) + r(0,0) r(m,0) r(m,j)  \\
    &+r(0,0) r(m,j) r(0,j) + r(0,j) r(m,0) r(m,j) \bigg]\bigg\}^{-1} \numberthis.
\end{align*}
We focus on the $\alpha_3$ weight, and we want to find the maximum of this weight. 
Based on an informal analysis of the partial derivatives, the $\alpha_3$ weight is maximized when $\V(\alpha), \V(\theta_{1,m}) \to 0$, $\V(\theta_{1,m_1}) \to \infty$, and $\V(\theta_{2,j}) < \infty$.
We now compute the limit of $\alpha_3$ when $\V(\alpha), \V(\theta_{1,m}) \to 0$ and $\V(\theta_{1,m_1}) \to \infty$.

\noindent We consider the case when $r(m_1,j) r(0,0) - r(m_1,0) r(0,j) > 0$.
We first take $\lim \V(\theta_{1,m}) \to 0$:
\begin{align*}
    &\lim_{\V(\theta_{1,m}) \to 0} \alpha_3\\
    =& \lim_{\V(\theta_{1,m}) \to 0}  \bigg\{r(m_1,j) r(0,0) - r(m_1,0) r(0,j) \bigg\}\\
    & \scalemath{1}{\times \bigg\{r(m_1,0) \bigg[r(0,0) \frac{\V(Y(m,j))}{p(m,j)} r(m_1,j) + r(0,0) \frac{\V(Y(m,0))}{p(m,0)} r(m_1,j) \bigg]} \\
    &+ \scalemath{1}{r(m_1,j) \bigg[r(m_1,0) \frac{\V(Y(m,j))}{p(m,j)} r(0,j) + r(m_1,0) \frac{\V(Y(m,0))}{p(m,0)} r(0,j) \bigg]} \\
    &\bigg[r(m_1,0) + r(m_1,j)\bigg] 
    \bigg[r(0,0) \frac{\V(Y(m,j))}{p(m,j)} r(0,j) + r(0,0)  \\
    &+r(0,0) \frac{\V(Y(m,0))}{p(m,0)} r(0,j) + r(0,j)  \bigg]\bigg\}^{-1} \\
    =& \lim_{\V(\theta_{1,m}) \to 0}  \bigg\{r(m_1,j) r(0,0) - r(m_1,0) r(0,j) \bigg\}\\
    & \times \bigg\{r(m_1,0) \bigg[r(0,0) \frac{\V(\alpha) + \V(\theta_{2,j})}{p(m,j)} r(m_1,j) + r(0,0) \frac{\V(\alpha)}{p(m,0)} r(m_1,j) \bigg] \\
    &+ r(m_1,j) \bigg[r(m_1,0) \frac{\V(\alpha) \V(\theta_{2,j})}{p(m,j)} r(0,j) + r(m_1,0) \frac{\V(\alpha)}{p(m,0)} r(0,j) \bigg] \\
    &\bigg[r(m_1,0) + r(m_1,j)\bigg] 
    \bigg[r(0,0) \frac{\V(\alpha) + \V(\theta_{2,j})}{p(m,j)} r(0,j) + r(0,0)  \\
    &+r(0,0) \frac{\V(\alpha)}{p(m,0)} r(0,j) + r(0,j)  \bigg]\bigg\}^{-1} \\
    =& \bigg\{p(m,j) p(m,0) \bigg[p(m_1,j) p(0,0) \V(Y(m_1,0)) \V(Y(0,j)) \\
    &- p(m_1,0) p(0,j) \V(Y(m_1,j)) \V(Y(0,0)) \bigg] \bigg\}\\
    &\times \bigg\{\V(Y(0,j)) \bigg[p(m_1,0) p(0,0) \V(Y(0,j)) p(m_1,j) p(m,0) \\
    &+ p(0,0) p(m_1,j) \V(Y(0,0)) p(m,j) p(m_1,0)\bigg] \\
    &+ \V(Y(0,0)) \bigg[p(m_1,j) p(m_1,0) p(0,j) p(m_1,0) \V(Y(0,j)) \\
    &+ p(m_1,0) \V(Y(0,0)) p(0,j) p(m_1,j) p(m,j) \bigg] \\
    &+ \V(Y(m_1,j)) \bigg[p(m_1,0) p(0,0) p(0,j) p(m,0) \V(Y(0,j)) \\
    &+ p(m_1,0) p(0,0) p(m,j) p(m,0) \V(Y(0,j)) \\
    &+ p(0,0) p(0,j) p(m_1,0) \V(Y(0,0)) p(m,j) \\
    &+ p(0,j) p(m_1,0) \V(Y(0,0)) p(m,j) p(m,0)\bigg] \\
    &+ \V(Y(m_1,0)) \bigg[p(m_1,j) p(0,0) p(0,j) p(m,0) \V(Y(0,j)) \\
    &+ p(m_1,j) p(0,0) p(m,j) p(m,0) \V(Y(0,j)) \\
    &+ p(0,0) p(0,j) p(m_1,j) \V(Y(0,0)) p(m,j) \\
    &+ p(0,j) p(m_1,j) \V(Y(0,0)) p(m,j) p(m,0)\bigg]\bigg\}^{-1}.
\end{align*}

Then, we take the limit of the term as $\V(\theta_{1,m_1}) \to \infty$. 
However, since $\V(\theta_{1,m_1})$ appears in both the numerator and denominator, the limit will lead to $\frac{\infty}{\infty}$.
Thus, we use L'Hopital's rule and take the limit of the partial derivative of the numerator and denominator with respect to $\V(\theta_{1,m_1})$ as $\V(\theta_{1,m_1}) \to \infty$:
\begin{align*}
    & \lim_{\substack{\V(\theta_{1,m}) \to 0 \\\V(\theta_{1,m_1}) \to \infty}} a_3 \\
    =& \bigg\{p(m,j) p(m,0) \bigg[p(m_1,j) p(0,0) \V(Y(0,j)) \\
    &- p(m_1,0) p(0,j) \V(Y(0,0)) \bigg] \bigg\}\\
    &\times \bigg\{\V(Y(0,j)) \bigg[p(m_1,0) p(0,0) p(0,j) p(m,0)  \\
    &+ p(m_1,0) p(0,0) p(m,j) p(m,0) \\
    &+ p(m_1,j) p(0,0) p(0,j) p(m,0) + p(m_1,j) p(0,0) p(m,j) p(m,0)\bigg]\\
    &+ \V(Y(0,0)) \bigg[p(0,0) p(0,j) p(m_1,0)  p(m,j) \\
    &+ p(0,j) p(m_1,0) p(m,j) p(m,0) \\
    &+ p(0,0) p(0,j) p(m_1,j) p(m,j) + p(0,j) p(m_1,j) p(m,j) p(m,0) \bigg] \bigg\}^{-1}.
\end{align*}
To maximize the limit of the term, we can set $\V(\alpha) \to 0$ so that we are not subtracting any terms. 
Note that in addition, we would need $\V(\theta_{2,j}) < \infty$.
Then, taking the limit as $\V(\alpha) \to 0$, we get:
\begin{align*}
    &\lim_{\substack{\V(\theta_{1,m}) \to 0 \\ \V(\theta_{1,m_1}) \to \infty \\ \V(\alpha) \to 0}} a_3 \\
    =& \bigg\{p(m,j) p(m,0) p(m_1,j) p(0,0) \V(\theta_{2,j}) \bigg\} \times \\
    & \bigg\{\V(\theta_{2,j}) \bigg[p(m_1,0) p(0,0) p(0,j) p(m,0) + p(m_1,0) p(0,0) p(m,j) p(m,0) \\
    &+ p(m_1,j) p(0,0) p(0,j) p(m,0) + p(m_1,j) p(0,0) p(m,j) p(m,0)\bigg]\bigg \}^{-1}.
\end{align*}
Since there is a $\V(\theta_{2,j})$ in both the denominator and numerator, we get the following:
\begin{align*}
    &\lim_{\substack{\V(\theta_{1,m}) \to 0 \\ \V(\theta_{1,m_1}) \to \infty \\\V(\alpha) \to 0}} a_3 \\
    =& \frac{p(m,j) p(m_1,j)}{p(m_1,0) p(0,j)+ p(m_1,0)  p(m,j) + p(m_1,j)  p(0,j) + p(m_1,j)  p(m,j)}.
\end{align*}
\end{proof}

\section{Simulations from an Erd\"{o}s-r\'{e}nyi Network}
We also sampled networks from an Erd\"{o}s-r\'{e}nyi distribution where the probability of an edge is 0.25 (denoted as ER(0.25)).
In particular, we sampled an ER(0.25) directed network of sizes $n = 10, 20, \dotsc, 50$.
Figure \ref{fig:er_quarter_network} shows a directed network with 40 nodes.
Note that in an ER(0.25) graph, units may have different degrees, with an expected degree being $(n-1) \times 0.25$.
Hence, an ER(0.25) graph is generally denser than a $k$-regular graph.
Since units have different degrees, each unit is affected differently by other units, and so unlike in a $k$-regular graph, each unit may contribute to the estimate of the average interference effect differently in an ER(0.25) network. 

\begin{figure}[htb]
    \center
  \begin{minipage}[t]{\linewidth}\centering
    \includegraphics[width=12cm]{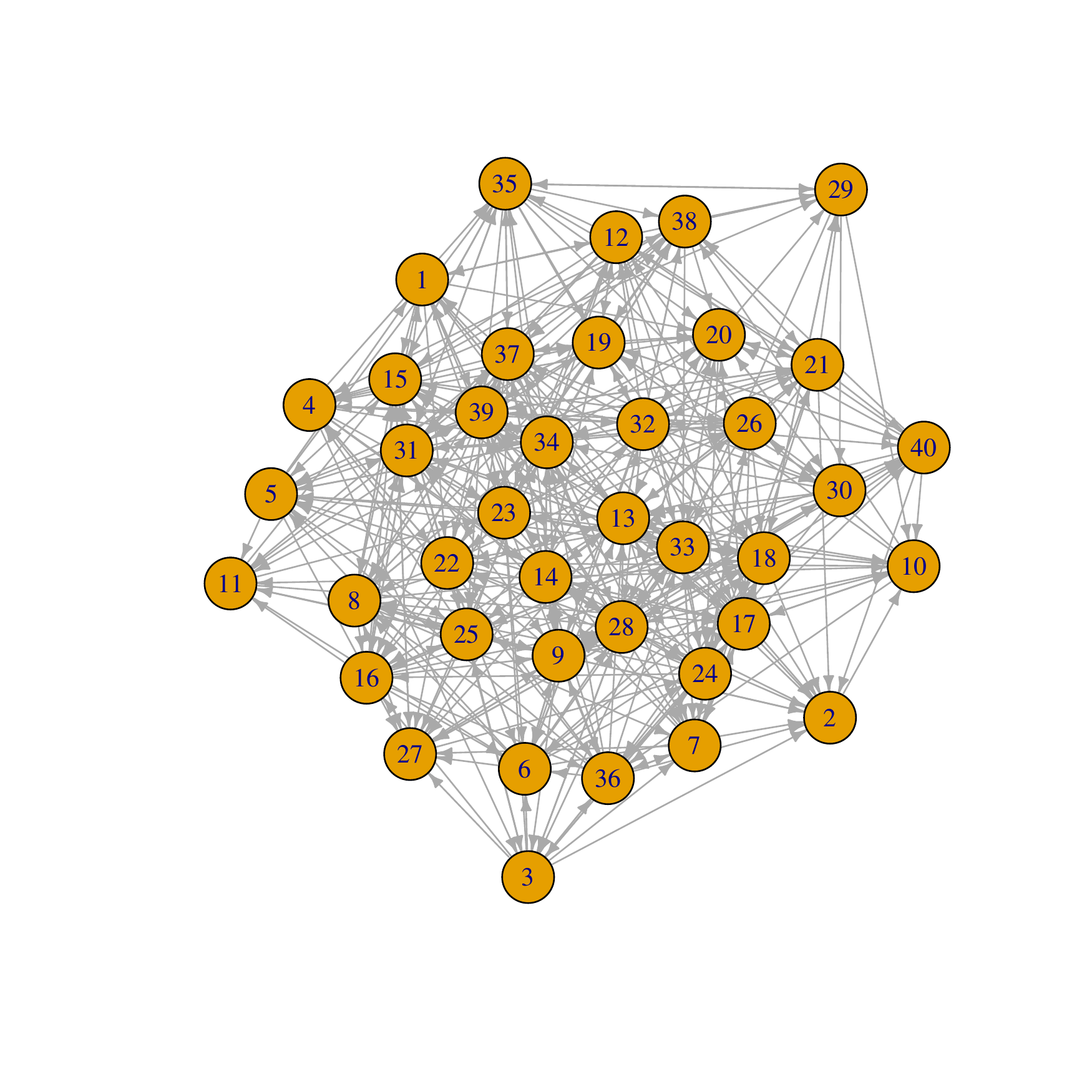}
  \end{minipage}\hfill
    \caption{Directed Erd\"{o}s-r\'{e}nyi network with 40 nodes and probability of an edge is 0.25.}
    \label{fig:er_quarter_network}
\end{figure}

Figure \ref{fig:imse_num_nodes_er_quarter} shows the IMSEs for the different estimators as the number of units increases when the true mean interference effect is zero and additivity holds.
Note that as the number of units increases, the number of edges also increases in an Erd\"{o}s-r\'{e}nyi network.
Hence, the IMSEs increase with the number of units, unlike in the $k$-regular graph. 
Instead, the increases in IMSEs are similar to the case of the $k$-regular graphs when the graph becomes denser.
Furthermore, the IMSEs of the estimators in the ER(0.25) network are higher than the IMSEs in the $k$-regular graphs.
However, in general, $M_{Ind}$, $M_{Dil}$, and $HT_{Avg}$ still outperform the two-term estimators, with the IMSE of $M_{Ind}$, $M_{Dil}$, and $HT_{Avg}$ being very close as in the case of the $k$-regular network.

\begin{figure}[htb]
    \center
  \begin{minipage}[t]{\linewidth}\centering
    \includegraphics[width=12cm]{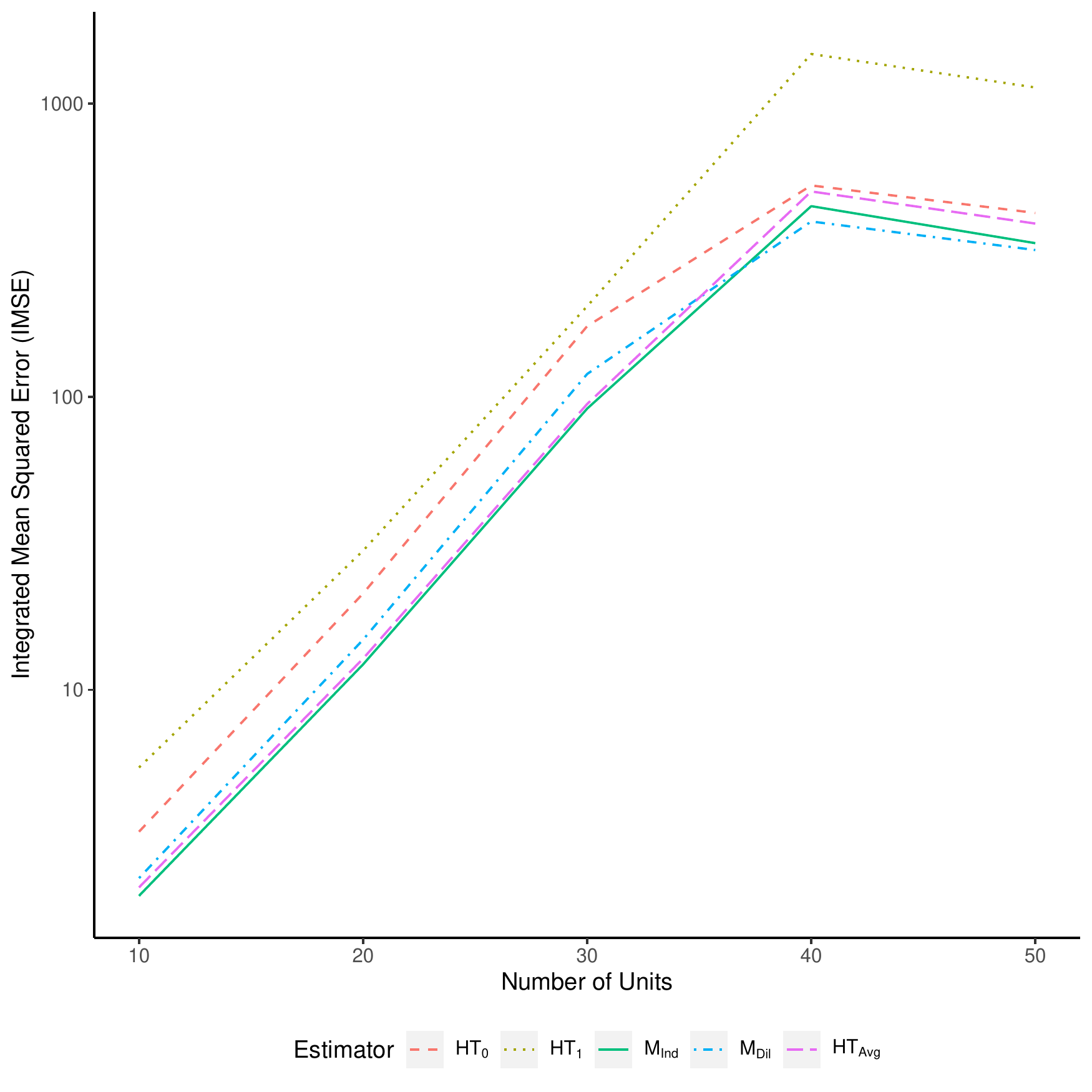}
  \end{minipage}\hfill
    \caption{IMSE for estimators (indicated by color and line type) as the number of units vary (indicated by x-axis) when additivity holds and mean interference effect is zero for ER(0.25) network.}
    \label{fig:imse_num_nodes_er_quarter}
\end{figure}

Figure \ref{fig:imse_mean_interference_and_interaction_er} shows the IMSEs for the estimators for different interference and interaction effect sizes for a 40-node ER(0.25) network.
Again, the IMSEs are generally higher than the IMSEs in the $k$-regular graphs.
As in the $k$-regular network, the IMSEs of all estimators increase as the mean interference increases since we assumed a zero-mean prior for the parameters.
There are some instances when the multi-term MIV LUEs outperform $HT_{0}$, such as when the interference and interaction effect is low.
However, unlike in the $k$-regular network, as the mean interference effect increases, the multi-term MIV LUEs have higher IMSEs than $HT_0$ besides $M_{Dil}$.
This suggests that in the presence of heterogeneity in the degree distributions of the nodes, the multi-term MIV LUEs are not as robust to additivity as in the case when the degree distributions are more homogenous.
Despite this, the multi-term MIV LUEs still outperform $HT_{Avg}$ and $HT_1$, suggesting that there might still be some benefit in using the multi-term MIV LUEs.

\begin{figure}[htb]
    \center
  \begin{minipage}[t]{\linewidth}\centering
    \includegraphics[width=12cm]{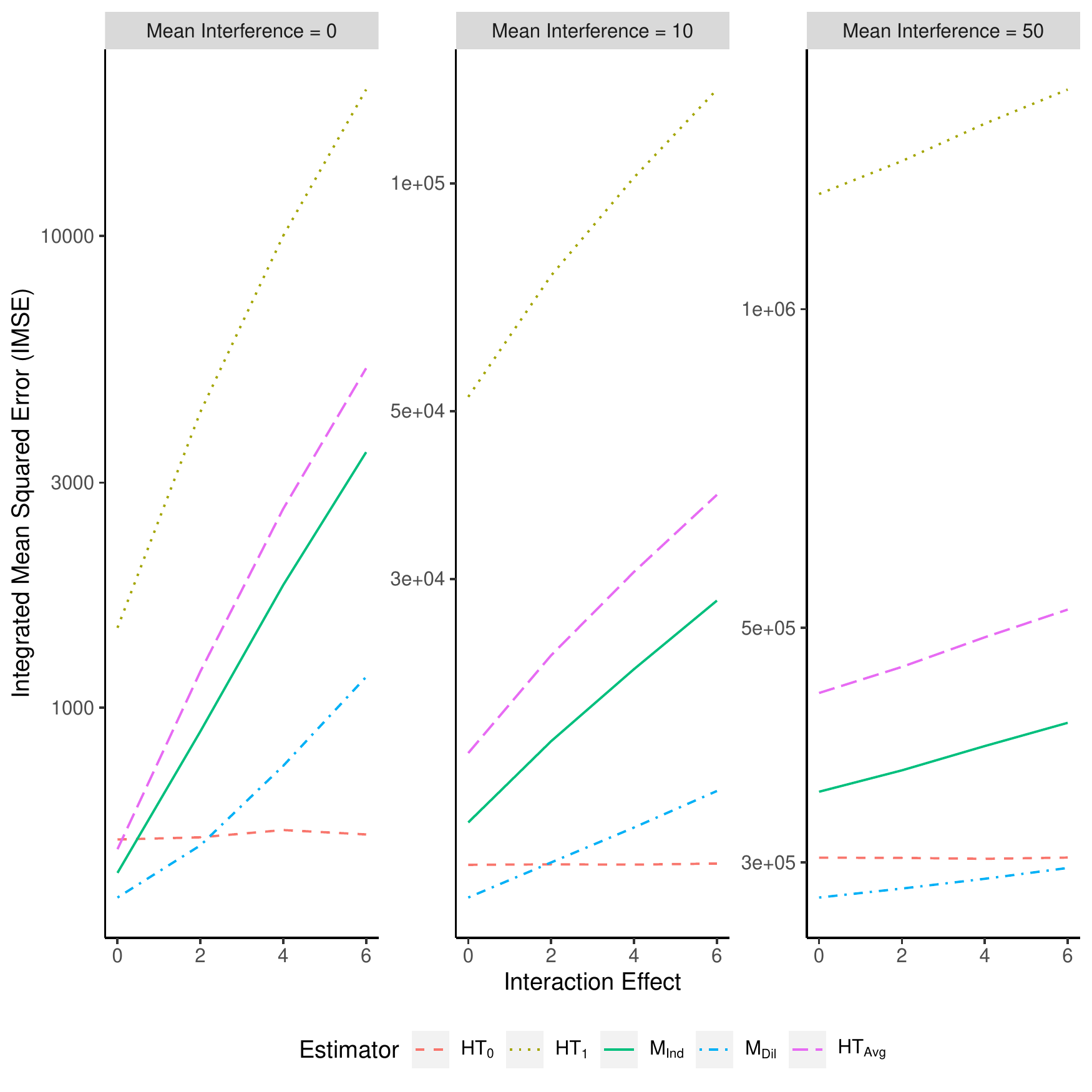}
  \end{minipage}\hfill
    \caption{IMSE for estimators (indicated by color and line type) as the interaction effect varies (indicated by x-axis) for different mean interference effects (indicated by the panels) for ER(0.25) network with 40 nodes.}
    \label{fig:imse_mean_interference_and_interaction_er}
\end{figure}

\end{document}